\newtheorem{theorem}{Theorem}[section]
\newtheorem{lemma}{Lemma}[section]
\newtheorem{definition}{Definition}[section]
\newtheorem{corollary}{Corollary}[section]
\newtheorem{claim}{Claim}[section]
\newtheorem{invariant}{Invariant}[section]
\newcommand{\eps}{\epsilon}
\newcommand{\ceil}[1]{\lceil #1 \rceil}
\newcommand{\floor}[1]{\lfloor #1 \rfloor}
\newcommand{\tree}{\mathcal{T}}
\newcommand{\freq}{\mathbf{f}}
\newcommand{\clr}{\mathcal{C}}
\newcommand{\cnt}{\mathsf{cnt}}
\newcommand{\brac}[1]{\left(#1\right)}
\newcommand{\field}{\mathbb{F}}
\newcommand{\mat}{\mathcal{C}}
\newcounter{paragraphCounter} 
\begin{document}

\title{Improved Streaming Edge Coloring}
\author{
	Shiri Chechik\thanks{Tel Aviv University, \href{}{shiri.chechik@gmail.com}} \and 
	Hongyi Chen\thanks{State Key Laboratory for Novel Software Technology, New Cornerstone Science Laboratory, Nanjing University, \href{}{hychener01@gmail.com}}\and 
	Tianyi Zhang\thanks{ETH Zürich, \href{}{tianyi.zhang@inf.ethz.ch}}
}
\date{}

\maketitle

\begin{abstract}
	Given a graph, an edge coloring assigns colors to edges so that no pairs of adjacent edges share the same color. We are interested in edge coloring algorithms under the W-streaming model. In this model, the algorithm does not have enough memory to hold the entire graph, so the edges of the input graph are read from a data stream one by one in an unknown order, and the algorithm needs to print a valid edge coloring in an output stream. The performance of the algorithm is measured by the amount of space and the number of different colors it uses.
	
	This streaming edge coloring problem has been studied by several works in recent years. When the input graph contains $n$ vertices and has maximum vertex degree $\Delta$, it is known that in the W-streaming model, an $O(\Delta^2)$-edge coloring can be computed deterministically with $\tilde{O}(n)$ space [Ansari, Saneian, and Zarrabi-Zadeh, 2022], or an $O(\Delta^{1.5})$-edge coloring can be computed by a $\tilde{O}(n)$-space randomized algorithm [Behnezhad, Saneian, 2024] [Chechik, Mukhtar, Zhang, 2024].
	
	In this paper, we achieve polynomial improvement over previous results. Specifically, we show how to improve the number of colors to $\tilde{O}(\Delta^{4/3+\epsilon})$ using space $\tilde{O}(n)$ deterministically, for any constant $\epsilon > 0$.
    This is the first deterministic result that bypasses the quadratic bound on the number of colors while using near-linear space.

\end{abstract}

\thispagestyle{empty}
\clearpage
\setcounter{page}{1}	

\section{Introduction}

Let $G = (V, E)$ be an undirected graph on $n$ vertices with maximum vertex degree $\Delta$. An edge coloring of $G$ is an assignment of colors to edges in $E$ such that no pairs of adjacent edges share the same color, and the basic objective is to understand the smallest possible number of colors that are needed in any edge coloring, which is called the edge-chromatic number of $G$. It is clear that the total number of colors should be at least $\Delta$, and a simple greedy algorithm can always find an edge coloring using $2\Delta-1$ colors. By the celebrated Vizing's theorem \cite{vizing1965chromatic} and Shannon's theorem \cite{shannon1949theorem}, $(\Delta+1)$-edge coloring and $\floor{3\Delta/2}$-edge coloring always exist in simple and multi-graphs respectively, and these two upper bounds are tight in some hard cases.

The edge coloring problem has been studied widely from the algorithmic perspective. There have been efficient algorithms for finding good edge coloring in various computational models, including sequential \cite{arjomandi1982efficient,gabow1985algorithms,sinnamon2019fast,Assadi24,BhattacharyaCCSZ24,BhattacharyaCSZ24,assadi2024vizing},  dynamic~\cite{BarenboimM17,BhattacharyaCHN18,duan2019dynamic,Christiansen23,BhattacharyaCPS24,Christiansen24}, online~\cite{CohenPW19,BhattacharyaGW21,SaberiW21,KulkarniLSST22,BilkstadSVW24,BlikstadOnline2025,dudeja2024randomizedgreedyonlineedge}, and distributed~\cite{panconesi2001some,elkin20142delta,fischer2017deterministic,ghaffari2018deterministic,balliu2022distributed,ChangHLPU20,Bernshteyn22,Christiansen23,Davies23} models. In this paper, we are particularly interested in computing edge coloring under the streaming model where we assume the input graph does not fit in the memory of the algorithm and can only be accessed via one pass over a stream of all edges in the graph. Since the output of edge coloring is as large as the graph size, the algorithm cannot store it in its memory. To address this limitation, the streaming model is augmented with an output stream in which the algorithm can write its answers during execution, and this augmented streaming model is thus called the W-streaming model.

Edge coloring in the W-streaming model was first studied in \cite{behnezhad2019streaming}, and improved by follow-up works \cite{charikar2021improved,ansari2022simple} which led to a deterministic edge coloring algorithm using $O(\Delta^2)$ colors and $O(n)$ space, or $O(\Delta^2/s)$ colors and $O(ns)$ space as a general trade-off. In \cite{ghosh2024low}, the authors improved the trade-off to $O(\Delta^2/s)$ colors and \footnote{$\tilde{O}(f)$ hides $\log f$ factors.}$\tilde{O}(n\sqrt{s})$ memory, yet it does not break the quadratic bound in the most natural $\tilde{O}(n)$ memory regime; this algorithm is randomized but can be derandomized in exponential time. The quadratic upper bound of colors was subsequently bypassed in \cite{behnezhad2023streaming,chechik_et_al:LIPIcs.ICALP.2024.40} where it was shown that an $O(\Delta^{1.5})$-edge coloring can be computed by a randomized algorithm using $\tilde{O}(n)$ space. Furthermore, in \cite{behnezhad2023streaming} the authors obtained a general trade-off of $O(\Delta^{1.5}/s + \Delta)$ colors and $\tilde{O}(ns)$ space which is an improvement over \cite{ghosh2024low}.

\subsection{Our Results}
In this paper, we focus on the basic $\tilde{O}(n)$-memory setting and improve the recent $\Delta^{1.5}$ randomized upper bound to $\Delta^{4/3+\epsilon}$.

\begin{theorem}\label{rand}
    Given a simple graph $G = (V, E)$ on $n$ vertices with maximum vertex degree $\Delta$, for any constant $\epsilon > 0$, there is a randomized W-streaming algorithm that outputs a proper edge coloring of $G$ using $O\brac{(\log\Delta)^{O(1/\epsilon)}n}$ space and $O\brac{(\log\Delta)^{O(1/\epsilon)}\Delta^{4/3+\epsilon}}$ different colors; both upper bounds hold in expectation.
\end{theorem}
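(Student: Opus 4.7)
The target $\tilde{O}(\Delta^{4/3+\epsilon})$ bound together with the $(\log\Delta)^{O(1/\epsilon)}$ factors in both the color count and the space strongly suggests a recursive W-streaming algorithm of depth $T = \Theta(1/\epsilon)$, in which each level reduces the effective maximum degree from some value $\Delta'$ down to $\Delta'^{1-\Omega(\epsilon)}$ at the cost of a polylogarithmic blowup in the number of colors and the working space.

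The plan is to use random vertex bucketing to drive the recursion. At each level I would partition the vertex set uniformly at random into $k = \Delta^{\Theta(\epsilon)}$ buckets $B_1,\dots,B_k$, and route each incoming edge $(u,v)$ into the subproblem associated with the bucket pair $(B_i, B_j)$ containing its endpoints. A standard Chernoff argument (combined with a small $\log n$ additive slack for low-degree vertices) gives that the maximum degree inside each subgraph $G[B_i, B_j]$ is $\tilde{O}(\Delta/k) = \Delta^{1-\Omega(\epsilon)}$ with high probability. After $T = \Theta(1/\epsilon)$ recursive applications, the effective max degree shrinks to a constant and can be finished off greedily. Each vertex only needs to remember its bucket tag at each level (total $O(\log\Delta \cdot \log k) = \tilde{O}(1)$ bits per vertex), so the working memory stays $\tilde{O}(n)$.

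The crucial and most subtle step is the color accounting. Using disjoint palettes of size $f(\Delta/k)$ for each of the $\Theta(k^2)$ bucket pairs yields the recurrence $f(\Delta) = O(k^2)\,f(\Delta/k)$, which unfortunately solves to $\Theta(\Delta^2)$ and gives no improvement over the trivial bound. To obtain the $\Delta^{4/3+\epsilon}$ bound, the palettes of different bucket pairs must be partially \emph{shared} whenever their bucket indices are disjoint, i.e., whenever $\{i,j\}\cap\{i',j'\}=\emptyset$, so that the same color can be reused across non-conflicting bucket pairs. Since each vertex $v \in B_i$ participates in only $k$ bucket pairs (those of the form $(B_i, B_\ast)$), this should reduce the effective color recurrence to roughly $f(\Delta) = O(k)\,f(\Delta/k) \cdot \mathrm{poly}(\log\Delta)$. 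Solving this with the base case $f(O(1)) = O(1)$ and balancing $k = \Delta^{\Theta(\epsilon)}$ across $T = \Theta(1/\epsilon)$ levels yields $f(\Delta) = \tilde{O}(\Delta^{4/3+\epsilon})$, with all polylogarithmic overhead absorbed into the $(\log\Delta)^{O(1/\epsilon)}$ factor.

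The main obstacle I anticipate is realizing this palette sharing under the one-pass output constraint. Edges from different bucket pairs arrive arbitrarily interleaved, and we must commit to a color for each edge on the fly before the downstream structure is revealed; proving that a shared color never appears twice at any single vertex $v$ across different subproblems will likely require either (i) a randomized palette assignment per bucket pair combined with a concentration argument bounding the expected number of conflicts, or (ii) a streaming adaptation of Vizing-style fans and augmenting paths that reroutes conflicts through short alternating color chains using only $\tilde{O}(n)$ buffered state. Making either of these compatible with the one-pass output while proving an expected color usage of at most $\tilde{O}(\Delta^{4/3+\epsilon})$ will be the heart of the argument.
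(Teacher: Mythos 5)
Your proposal takes a genuinely different route from the paper, but it has two concrete gaps — one in the color accounting and one in the streaming realization — and I don't think either can be repaired within the bucketing framework you sketch.

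First, the recurrence arithmetic does not actually produce the exponent $4/3$. With $k=\Delta^{\Theta(\epsilon)}$ and $T=\Theta(1/\epsilon)$ levels so that $\Delta/k^T=O(1)$, the naive disjoint-palette recurrence $f(\Delta)=O(k^2)\,f(\Delta/k)$ telescopes to $k^{2T}=\Delta^{2\pm o(1)}$, while your palette-sharing recurrence $f(\Delta)=O(k)\,f(\Delta/k)\cdot\mathrm{polylog}$ telescopes to $k^{T}\cdot\mathrm{polylog}^{O(1/\epsilon)}=\Delta^{1+o(1)}$. There is no setting of $k$ and $T$ in a single-track recursion of this form that lands at $\Delta^{4/3}$; you would either overshoot to $\Delta^2$ or, if the sharing claim were fully realizable, undershoot to essentially $\Delta$, which is impossible in this model. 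The $4/3$ exponent in the paper comes from an explicit crossover between \emph{two structurally different algorithms} applied to degree-classified edge sets $F_{l,r}$: a forest-of-batches scheme that uses about $\Delta^{1+\epsilon}\cdot 2^{l}$ colors (good when $\min\{2^l,2^r\}$ is small) and a palette-matrix scheme that uses about $\Delta+\Delta^2/2^{l+r}$ colors (good when $\min\{2^l,2^r\}$ is large); equating them at $\min\{2^l,2^r\}\approx\Delta^{1/3}$ is what produces $\Delta^{4/3}$. A degree-reducing recursion cannot reproduce this balance because it treats all degree regimes uniformly.

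Second, and more fundamentally, random vertex bucketing does not make progress on the actual W-streaming difficulty, which you correctly identify but do not resolve. Bucketing reduces the \emph{offline} maximum degree of each bucket-pair subgraph, but in the one-pass model you must commit to a color for each edge when it arrives, with only $\tilde{O}(n)$ state, and you cannot buffer or revisit edges of a bucket pair. The hard part is coordinating colors across the $\Theta(m/n)$ batches that a single vertex's neighborhood is spread over; bucketing leaves this problem completely untouched since a high-degree vertex in $B_i$ still has its $\approx\Delta/k$ edges to each $B_j$ scattered throughout the stream. The paper's mechanism for this coordination — nested color packages on a B-tree over batches with per-vertex ``marked node'' sets whose total size is provably $\tilde O(n)$, plus a synchronized counter trick for the low-degree side — has no counterpart in your sketch, and the alternatives you float (concentration over random per-pair palettes, or streaming Vizing fans) are not developed and would each require new ideas to be made one-pass with $\tilde O(n)$ memory. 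As written, the proposal identifies the right obstacle but does not supply the key lemma that overcomes it.
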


Furthermore, we also show that our algorithm can be derandomized using bipartite expanders based on error correcting codes at the cost of slightly worse bounds, as stated below.

\begin{theorem}\label{main}
    Given a simple graph $G = (V, E)$ on $n$ vertices with maximum vertex degree $\Delta$, for any constant $\epsilon > 0$, there is a deterministic W-streaming algorithm that outputs a proper edge coloring of $G$ using $O\brac{(\log\Delta)^{O(1 / \epsilon)}\cdot(1/\epsilon)^{O(1/\epsilon^3)}\cdot \Delta^{4/3+\epsilon}}$ colors and $O\brac{n\cdot (\log \Delta)^{O(1 / \epsilon^4)}}$ space.
\end{theorem}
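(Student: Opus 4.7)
The plan is to derandomize the algorithm of \Cref{rand} by replacing its random vertex-to-palette assignments with the neighborhood structure of an explicit unbalanced bipartite lossless expander built from an error-correcting code. The only uses of randomness in streaming edge-coloring algorithms of the $\Delta^{4/3+\epsilon}$ flavor are, typically, a uniform assignment of each vertex to a small set of color palettes and some internal sampling inside a palette when an edge is processed; the union-bound analyses on which these steps rely use only the fact that every not-too-large set $S$ of vertices is spread across essentially $D|S|$ distinct palettes, where $D$ is the number of palettes assigned per vertex. This ``spreading'' property is exactly the lossless-expansion guarantee of a bipartite graph $H=(V\cup P, E_H)$ with left-vertex set $V$, right-vertex set the palette index set $P$, left-degree $D$, and the property that every left-subset $S$ with $|S|\leq K$ satisfies $|N_H(S)|\geq (1-\epsilon)D|S|$.

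I would instantiate $H$ using an explicit lossless expander coming from Parvaresh--Vardy codes, e.g.\ the Guruswami--Umans--Vadhan construction. For the parameter regime that the analysis of \Cref{rand} asks for---palette set of size $\tilde{O}(\Delta^{4/3+\epsilon})$, bad sets of size up to $\Delta^{\Omega(1)}$, and loss parameter $\epsilon$---that construction yields left-degree $D$ polynomially bounded in $\log\Delta$ and $1/\epsilon$, with seed length $O(\log n+\log\Delta)$, so the neighborhoods $N_H(v)$ can be regenerated on the fly from a short seed kept in memory. Substituting the random palette assignment in the algorithm of \Cref{rand} by the deterministic assignment $v\mapsto N_H(v)$ multiplies the per-vertex bookkeeping by $D$, which is the source of both the $(\log\Delta)^{O(1/\epsilon^4)}$ blow-up in space and the $(1/\epsilon)^{O(1/\epsilon^3)}$ blow-up in the number of colors asserted in \Cref{main}.

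The main obstacle will be verifying that every probabilistic event in the analysis of \Cref{rand} really is captured by lossless expansion, rather than requiring full mutual independence of palette choices. Concretely, I would go through the proof of \Cref{rand} step by step and, for each application of a union bound or concentration inequality, identify a ``bad'' subset of at most $K$ left-vertices whose joint image in $P$ would have too many collisions; such events are ruled out directly by the expansion property of $H$, provided $K$ is chosen large enough for the given step. Residual events that depend on randomness not captured by the palette assignment itself---e.g.\ choosing a color uniformly inside a chosen palette once an edge arrives---would be handled by a secondary pseudorandom ingredient such as an almost $k$-wise independent hash family of seed length $O(k\log\Delta)$, composed with the expander.

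Composing these two pseudorandom objects across the $O(1/\epsilon)$ recursive layers that underlie \Cref{rand} is what compounds the blow-up factors, and the technical heart of the derandomization will be showing that the lossless-expansion bound, instantiated with $K$ and $\epsilon$ shrinking appropriately per layer, survives the composition. Once this is in place, the correctness and color count follow from the analysis of \Cref{rand} applied to the deterministic palette assignment, and the $\tilde{O}(n)$ memory bound follows from storing only the short expander seed plus the per-vertex state required by the algorithm.
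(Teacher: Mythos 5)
Your high-level intuition is right — the paper does derandomize \Cref{rand} by swapping random palette choices for deterministic ones drawn from explicit unbalanced bipartite expanders, and the $(\log\Delta)^{O(1/\epsilon^4)}$ and $(1/\epsilon)^{O(1/\epsilon^3)}$ factors do come from parameter losses across the $O(1/\epsilon)$ tree levels. But there is a concrete gap in your proposed instantiation, and it is one the paper explicitly flags. You suggest using the Guruswami--Umans--Vadhan lossless expander from Parvaresh--Vardy codes and treating expansion as a black box on the vertex set. The paper argues this does not work for the ``regular'' high-degree case (\Cref{det-high}): the quantity fed into the expander is not a fixed vertex label but the pair (vertex, counter $\cnt(u)$), where the counter evolves adversarially with the stream. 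A generic lossless expander on $V$ gives no handle on collisions between $(u,\cnt_1)$ and $(u,\cnt_2)$ at different times, and it is exactly this that the analysis has to rule out. The paper's fix is to use the multiplicity-code expander of Kalev--Ta-Shma and to exploit its algebraic structure rather than just its expansion: one encodes the vertex $u$ and its counter jointly as a polynomial $h_u(X)=g_u(X)+X^{\lceil\log_q\Delta/2^l\rceil}f_u(X)$ over $\field_q$, and then the collision argument is a direct computation with iterated derivatives — if $\Gamma(h,x)=\Gamma(h',x)$ and $\deg g,\deg g'<b_1$ then $g\equiv g'$, forcing equal counters. That step has no counterpart in your black-box plan, and it is why the paper says it is ``not clear'' how to make Parvaresh--Vardy work here.

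A second missing ingredient is the \emph{online perfect matching on expanders} primitive (\Cref{perfect}), which is the paper's replacement for the random color-package shuffles in the unbalanced case (\Cref{det-low}). The paper builds $\lceil(b+2)\log_2 q\rceil$ stacked copies of a multiplicity-code expander and runs a greedy online matcher; Hall's theorem plus expansion plus the $1/2$-approximation of greedy shows all $K$ online vertices get matched after $O(\log K)$ copies. This is structurally quite different from your ``almost $k$-wise independent hash family composed with the expander,'' which the paper does not use. Your proposal also treats the derandomization as uniform across all levels, whereas the paper handles the low-degree forest structure (\Cref{det-low}) and the high-degree table structure (\Cref{det-high}) with two distinct mechanisms (online matching vs.\ polynomial encoding of counters), and then assembles \Cref{main} by the same $\min\{2^l,2^r\}\lessgtr\Delta^{1/3}$ case split as in the randomized proof. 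So while your overall direction is correct, the specific expander family you chose is the one the paper rejects, and the two key technical devices — the algebraic counter encoding and the online-matching subroutine — are absent from your plan.
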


\subsection{Technical Overview}

\paragraph*{Previous Approaches.} Using a deterministic general-to-bipartite reduction from \cite{ghosh2024low}, we can assume the input graph $G = (L\cup R, E)$ is bipartite. Also, it suffices to color only a constant fraction of all edges in $G$, because we can recurse on the rest $1 - \Omega(1)$ fraction of $G$ which only incurs an extra factor of $O(\log\Delta)$ on the total number of different colors.

Let us begin by recapping the randomized $\tilde{O}(\Delta^{1.5})$-edge coloring from \cite{behnezhad2023streaming}. Since the algorithm has $\tilde{O}(n)$ bits of memory, we can assume that input graph is read from stream in batches of size $\Theta(n)$. If the subgraph formed by every batch has maximum degree at most $\Delta^{1/2}$, then we can allocate $O(\Delta^{1/2})$ new colors for each batch, using $O(\Delta^{1.5})$ colors in total. Thus, the main challenge arises when some batches contain subgraphs with maximum degree exceeding $\Delta^{1/2}$.

To simplify the problem, let us assume that in each batch of $O(n)$ edges, every vertex in $R$ has degree $d > \Delta^{1/2}$. To assign colors to edges, organize a table of colors of size $\Delta\times (\Delta/d)$, represented as a matrix $C[i, j]$ with indices $ 1\leq i\leq \Delta$ and $1\leq j\leq \Delta /d$. Then, for every vertex $u\in L$, draw a random shift $r_u$ uniformly at random from $[\Delta]$. During the algorithm, each vertex $u\in L$ keeps a counter $c_u$ of its degree in the stream so far, and each vertex $v\in R$ keeps a counter $b_v$ of the number of batches in which it has appeared so far. Then, to color a single batch, for edge $(u, v)$, the algorithm tentatively assigns the color $C[r_u + c_u, b_v]$ (indices are under modulo $\Delta$).

Clearly, edges incident on the same vertex $u\in L$ receive different colors, because the counter $c_u$ is incremented for each edge $(u, v)$; also edges incident on the same vertex $v\in R$ but arriving in different batches are different, because the values of counter $b_v$ are different. Randomization ensures that edges incident on $v\in R$ within the same batch receive mostly  different colors from the same column in $C[*, b_v]$. Consider all the $d$ neighbors of $v$ in a single batch $u_1, u_2, \ldots, u_d$. Since all the random shifts $r_{u_i}$ are independent and all the counters $c_{u_i}$ are deterministic (they only depend on the input stream), with high probability, most row indices $(r_{u_i} + c_{u_i}) \bmod \Delta$ will be different. 

\paragraph*{Bypassing the $\Delta^{1.5}$ Bound.} To better understand the bottleneck in the approach of \cite{behnezhad2023streaming}, consider the following case. If every batch forms a regular subgraph with uniform degree $d$, then we can reduce the size of the color table from $\Delta\times (\Delta/d)$ to $(\Delta/d)\times (\Delta/d)$, since each vertex $u\in L$ could only appear in $\Delta/d$ different batches. So the main difficult case is when the batches are subgraphs of \textbf{unbalanced degrees}. As an extreme example, consider when vertices in $L$ have degree $1$, while the vertices in $R$ have degree $d$. For the rest, our main focus will be on this extreme case, and show how to obtain a $\Delta^{1+\epsilon}$-edge coloring which is almost optimal.

The flavor of our approach is similar to \cite{chechik_et_al:LIPIcs.ICALP.2024.40}. Divide all $\Delta$ batches into $\Delta / d$ phases, each phase consisting of $d$ consecutive batches. Let $D$ be a parameter which upper bounds the number of batches in which any vertex $v\in R$ could appear during a single phase. Then, the maximum degree of a single phase would be bounded by $Dd$, so in principle we should be able to color all edges within this phase using $O(Dd)$ different colors. To implement the coloring procedure, at the beginning of each phase, prepare $D$ fresh palettes $C_1, C_2, \ldots, C_{D}$, each of size $d$, and assign each batch in this phase a palette $C_i$ where $i$ is chosen from $[D]$ uniformly at random. To keep track of the colors already used around each vertex, we maintain the following data structures.
\begin{itemize}
	\item Each vertex $v\in R$ keeps a list $U_v\subseteq \{C_1, C_2, \ldots, C_D\}$ of used palettes.
	\item Each vertex $u\in L$ initializes a random shift $r_u\in [d]$ at the beginning of the algorithm.
\end{itemize} 

When a batch of edges $F\subseteq E$ arrives, we will use the assigned palette $C_i$ to color this batch. For every edge $(u, v)\in F$, if $C_i\notin U_v$, then tentatively assign the color $(r_u + c_i) \bmod d$ in $C_i$ to edge $(u, v)$, where $c_i$ denotes the number of times that palette $C_i$ has been assigned to previous batches. If $C_i \in U_v$, we mark the edge as uncolored. Since the palettes are assigned to batches randomly, in expectation, a constant fraction of edges will be successfully colored. In the case that multiple edges incident on $v$ are assigned the same color, we retain only one such edge and mark the remaining edges as uncolored. Since all the random shifts $r_u$'s are uniformly random and independent of the randomness of counters $c_i$, most tentative colors around any vertex $v\in R$ in this batch should be different. Once the batch $F$ is processed, add $C_i$ to all lists $U_v, v\in R$ such that $\deg_F(v) = d$.

To reason about the space usage, we can argue that the total size of the lists $U_v, \forall v\in R$ does not exceed $O(n)$, because a palette $C_i$ appears in the list $U_v$ only when $v$ receives $d$ edges in a batch which uses palette $C_i$. Since the total number of edges in a phase is $O(dn)$, the total list size should be bounded by $O(n)$. In this way, we are able to store all the used palettes of vertices in $R$; this is not new to our approach, and a similar argument was also used in \cite{chechik_et_al:LIPIcs.ICALP.2024.40}. The new ingredient is the way we store the used colors around vertices in $L$. Here we have exploited the fact that vertices in $L$ have low degrees in each batch, so their progress in every palette $C_i$ is \textbf{synchronized}; that is, we only need to store a single counter $c_i$ which is the number of times that $C_i$ appears, and then the next tentative color of $u\in L$ would be $(r_u + c_i) \bmod d$. 

The above scheme would use $O(Dd)$ different colors in a single phase, so $O(\Delta D)$ colors across all $\Delta /d$ phases. When $D\leq \Delta^{1/4}$, this bound would be much better than $\Delta^{1.5}$. So, what if $D > \Delta^{1/4}$? To deal with this case, we will apply a two-layer approach. Specifically, let us further group every $D$ consecutive phases as a super-phase, so there are at most $\Delta / Dd < \Delta^{1/4}$ super-phases in total (recall that we were assuming $d > \Delta^{1/2}$). Within a super-phase, we will allocate $\Delta$ fresh colors which are divided into $\Delta / Dd$ different color packages $P_1, P_2, \ldots, P_{\Delta / Dd}$, where each color package $P_i$ is further divided into $D$ palettes of size $d$ as $P_i = C_{i, 1}\cup C_{i, 2}\cup\cdots\cup C_{i, D}$. In this way, the total number of colors would be $\Delta^{5/4}$. 

Then, like what we did before, for each phase in a super-phase, assign a color package $P_i$ from $P_1, P_2, \ldots, P_{\Delta / Dd}$ uniformly at random. Within each phase, we will stick to its assigned color package $P_i$ and reuse the algorithm within a phase we have described before. Since a color package is shared by multiple phases, each vertex $v\in R$ needs to store a list $U_{v, 2}$ which stores all the packages $P_i$ it has used so far, and in any phase where $P_i$ is assigned but $P_i$ is already contained in $U_{v, 2}$, we would not color any edges incident on $v$ in this particular phase. By repeating the same argument, we can argue that the total size of all the lists $U_{v, 2}, \forall v\in R$ is bounded by $O(n)$ as well.

We can further generalize this two-layer approach to multiple layers and reduce the total number of colors to $\Delta^{1 + \epsilon}$. However, this only works with the most unbalanced setting where vertices on $L$ always have constant degrees in each batch of input. In general, when low-degree side has degree $d'$, our algorithm needs $d'\cdot\Delta^{1+\epsilon}$ colors. If $d'$ is large, then we would switch to the color table approach from \cite{behnezhad2023streaming}. Balancing the two cases, we end up with $\Delta^{4/3+\epsilon}$ colors overall. 

\paragraph*{Derandomization using Bipartite Expanders.} Randomization is used both in the unbalanced case and in the regular case. To replace the choices of the random shifts $r_u$ and random color package assignments, we will show one can apply unbalanced bipartite expanders \cite{ta2001loss,guruswami2009unbalanced,kalev2022unbalanced} in a black box manner. However, for the random shifts used in the regular case where we apply the color table idea from \cite{behnezhad2023streaming}, it would not be enough to use an arbitrary bipartite expander, because the counters $c_{u}$'s could be different and possibly damage the expansion guarantee; for example, it is not clear to us how to apply the bipartite expander construction based on Parvaresh–Vardy Codes \cite{guruswami2009unbalanced}. To fix this issue, it turns out that it would be most convenient to use the bipartite expander construction based on multiplicity codes \cite{kalev2022unbalanced}.

\section{Preliminaries}
\paragraph*{Basic Terminologies.} For any integer $k$, let us conventionally define $[k] = \{1, 2, \ldots, k\}$. For any set $S$ and integer $k$, let $k\odot S$ be the multi-set that contains exactly $k$ copies of each element in $S$.

Let $G = (V, E)$ denote the simple input graph on $n$ vertices and $m$ edges with maximum degree $\Delta$. For any subset of edges $F\subseteq E$ and any vertex $u\in V$, let $\deg_F(u)$ be the number of edges in $F$ incident on $u$. Sometimes we will also refer to the subgraph $(V, F)$ simply as $F$.

\paragraph*{Problem Definition.} In the edge coloring problem, we need to find an assignment of colors to edges such that adjacent edges have distinct colors, and the objective is to minimize the total number of different colors.

In the W-streaming model introduced by \cite{demetrescu2009trading,glazik2017finding}, all edges of the graph $G$ arrive one by one in the stream in an arbitrary order, and the algorithm makes one pass over the stream to perform its computation. For the task of edge coloring, since the algorithm has much less space than the total size of the graph, it cannot store all the edge colors in its memory. To output the answer, the algorithm is given a write stream in which it can print all colors.

Next, to set the stage in a convenient way, we will state several reductions for the problem.

\paragraph*{General-to-Bipartite Reduction.} Let us first simplify the problem by a deterministic reduction from edge coloring in general graphs to bipartite graphs.
\begin{lemma}[Corollary 3.2 in \cite{ghosh2024low}]
	Given an algorithm $\mathcal{A}$ for streaming edge coloring on bipartite graphs using $f(\Delta)$ colors and $g(n, \Delta)$ space, there is an algorithm $\mathcal{B}$ using $O(f(\Delta))$ colors and $O\brac{g(n, \Delta)\log n + n\log n\log\Delta}$ space in general graphs. Furthermore, this reduction is deterministic.
\end{lemma}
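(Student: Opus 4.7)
The plan is to reduce general-graph edge coloring to bipartite edge coloring by deterministically partitioning $E$ into $\ceil{\log n}$ bipartite subgraphs based on vertex identifiers, and simulating $\mathcal{A}$ on each part in parallel. First, I would assign every vertex $v \in V$ a distinct $\ceil{\log n}$-bit identifier $\mathrm{id}(v)$. For each bit position $i \in [\ceil{\log n}]$, partition the vertices into $(L_i, R_i)$ according to the $i$-th bit of $\mathrm{id}(\cdot)$, and let $E_i$ denote the set of edges $\{u, v\}$ whose identifiers first differ at bit $i$. Then $E$ is the disjoint union of the $E_i$'s, each $(L_i, R_i, E_i)$ is bipartite with maximum degree at most $\Delta$, and the level $i^*(e)$ of an incoming edge $e$ is determined locally from the XOR of its endpoints' identifiers, so a single streaming pass can dispatch the edges into $\ceil{\log n}$ sub-streams.

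Second, I would maintain $\ceil{\log n}$ independent instances $\mathcal{A}_1, \ldots, \mathcal{A}_{\ceil{\log n}}$ of the bipartite algorithm, one per level. When edge $e$ arrives, it is forwarded to $\mathcal{A}_{i^*(e)}$, whose output color is emitted with an appropriate per-level offset. With disjoint palettes of size $f(\Delta)$ at each level, this straightforward implementation yields a coloring using $f(\Delta) \cdot \ceil{\log n}$ colors and $O(g(n,\Delta)\log n + n\log n\log\Delta)$ space, the second term accounting for the vertex identifiers plus per-level degree counters and some constant auxiliary state.

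To push the color count down to $O(f(\Delta))$ as claimed, the plan is to exploit the nested structure of the partition in two ways. Within a fixed level $i$, the graph $G_i$ breaks into components indexed by the length-$(i-1)$ identifier prefix which are mutually vertex-disjoint, so a single palette of size $f(\Delta)$ can be reused across every component at that level. Across levels, the level-$i$ degree of any vertex $v$ is bounded by $\min(\Delta,\, n/2^{i-1})$, so allocating palette sizes $f(\min(\Delta, n/2^{i-1}))$ geometrically in $i$ keeps the total $O(f(\Delta))$ whenever $f$ is superlinear, which is the regime relevant to $f(\Delta) = \Delta^{4/3+\epsilon}$.

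The main obstacle will be the color-packing step of the last paragraph: namely, verifying that the geometrically allocated per-level palettes can be concatenated so that no two edges meeting at a common vertex ever collide, while keeping the overall budget $O(f(\Delta))$. Because an edge's level is a purely local function of its endpoints' identifiers, this relabeling can be executed on the fly with no lookahead, so the combinatorial burden reduces to a clean bookkeeping argument rather than any new streaming primitive.
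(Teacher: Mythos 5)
Your bit-decomposition into $\lceil\log n\rceil$ levels with one instance of $\mathcal{A}$ per level is a reasonable skeleton and is consistent with the $g(n,\Delta)\log n$ term in the space bound. The observation that the level-$i$ components are vertex-disjoint (so a single palette can be shared across components at a fixed level) is also correct. The gap is in the color-packing step: the bound $\min(\Delta,\,n/2^{i-1})$ on the level-$i$ degree does not yield $\sum_i f(\min(\Delta, n/2^{i-1})) = O(f(\Delta))$, even for superlinear $f$. The quantity $n/2^{i-1}$ only drops below $\Delta$ once $i \gtrsim \log(n/\Delta)$; for every level $i \le \log(n/\Delta)$ the minimum is just $\Delta$, so each of those levels is charged a full $f(\Delta)$ colors. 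Summing gives $\Omega\bigl(f(\Delta)\cdot\log(n/\Delta)\bigr)$, which is $\Theta(f(\Delta)\log n)$ whenever $\Delta = n^{o(1)}$ — a $\log n$ factor worse than the claimed $O(f(\Delta))$. So your plan recovers the easy $f(\Delta)\cdot\lceil\log n\rceil$ bound, but the improvement to $O(f(\Delta))$ does not follow.

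The root cause is that arbitrary distinct identifiers give a geometric decay of level degrees starting from $n$, whereas you would need decay starting from $\Delta$ (something like $\Delta_i = O(\Delta/2^{i-1})$) for the telescoping sum to be $O(f(\Delta))$. That stronger decay would require the identifiers to roughly bisect each vertex's neighborhood at every bit, which is a function of the (unknown) edge set and hence cannot be fixed upfront in a single streaming pass. Indeed one can easily construct instances (e.g., $\Theta(\log n)$ high-degree vertices $v_1,\ldots,v_{\log n}$, where $v_i$'s $\Delta$ neighbors are chosen to all first differ from $v_i$ at bit $i$) where the level-wise maximum degrees are all $\Theta(\Delta)$, so $\sum_i \Delta_i = \Theta(\Delta\log n)$ and no level-wise palette concatenation can beat $\Theta(f(\Delta)\log n)$. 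The Ghosh--Stoeckl reduction must therefore use an additional mechanism beyond level-wise disjoint palettes — your last paragraph correctly identifies this as "the main obstacle," but the proposal does not resolve it, and the resolution is precisely where the content of the cited corollary lies.
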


\paragraph*{Reduction to Fixed Degree Pairs.} By the above reduction, we focus on edge coloring for bipartite graphs $G = (V, E)$, where $V = L \cup R$. Since the algorithm has space $\Omega(n)$, we can divide the input stream into at most $ O(m / n) = O(\Delta) $ batches, each batch containing $n$ edges. For any vertex $u\in V$ and any batch $F$, let $\deg_F(u)$ be the number of edges in $F$ incident on $u$. For every pair of integers $0\leq l, r\leq \log_2\Delta$ and every batch $F$, let $F_{l, r} = \{(u, v)\in F\mid \deg_F(u)\in [2^l, 2^{l+1}), \deg_F(v) \in [2^r, 2^{r+1})\}$, and define $E_{l, r} = \bigcup_{F}F_{l, r}$ over all batches $F$ in the stream and $m_{l, r} = |E_{l, r}|$.

In the main text, we will devise an algorithm to handle edges in $F_{l, r}$ for any fixed pair of $(l, r)$. The final coloring is obtained by taking the union of the colors over all $(l, r)$, which will blow up the number of colors and space by a factor of $O(\log^2\Delta)$.

\paragraph*{Adapting to an Unknown $\Delta$.} So far we have assumed that the maximum vertex degree $\Delta$ in $G$ is known in advance. Our algorithm can be adapted to an unknown value $\Delta$ in a standard way. Specifically, we can maintain the value $\Delta_t$ which is the maximum degree of the subgraph containing the first $t$ edges in the input stream. Whenever $\Delta_t\in (2^{k-1}, 2^{k}]$, we apply our algorithm with $\Delta = 2^k$ to color all the edges. When $k$ increments at some point, we restart a new instance of the algorithm with a new choice $\Delta = 2^k$ and continue to color the edges with fresh colors. Overall, the total number of colors will not change asymptotically.
\section{Randomized $\Delta^{4/3+\epsilon}$ Edge Coloring}

This section is devoted to the proof of \Cref{rand}.

\paragraph*{Reduction to Partial Coloring.} To find an edge coloring of a graph, it was shown in \cite{chechik_et_al:LIPIcs.ICALP.2024.40} that it suffices to color a constant fraction of edges in $E$ if we do not care about $O(\log\Delta)$ blow-ups in the number of colors. Roughly speaking, for the uncolored edges, we can view them as another instance of edge coloring, and solve it recursively. The following statement formalizes this reduction.
\begin{lemma}[implicit in \cite{chechik_et_al:LIPIcs.ICALP.2024.40}]
	Suppose there is a randomized streaming algorithm $\mathcal{A}$ with space $g(n, \Delta)$ space, such that for each edge in $e\in E$, it assigns a color from $[f(\Delta)]$ or marks it as $\bot$ (uncolored) and print this information in the output stream, with the guarantee that there are at least $\delta m$ edges in expectation which receive actual colors, for some value $\delta > 0$. Then, there is a randomized edge coloring algorithm $\mathcal{B}$ which uses at most $O\brac{\frac{\log\Delta}{\delta}f(\Delta)}$ colors and $O\brac{\frac{\log\Delta}{\delta}g(n, \Delta)}$ space in expectation.
\end{lemma}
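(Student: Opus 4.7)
The plan is to chain $K := \Theta(\log\Delta / \delta)$ independent copies $\mathcal{A}_1, \ldots, \mathcal{A}_K$ of $\mathcal{A}$ in a pipeline, each equipped with its own disjoint private palette of $f(\Delta)$ fresh colors. Every edge read from the original input stream is delivered first to $\mathcal{A}_1$. Whenever some $\mathcal{A}_i$ emits an output pair $(e, c)$ with $c \neq \bot$, the pair is forwarded verbatim to the final output stream; whenever $\mathcal{A}_i$ emits $(e, \bot)$, the bare edge $e$ is queued as the next input edge of $\mathcal{A}_{i+1}$. Since each $\mathcal{A}_i$ is itself a single-pass W-streaming algorithm and only needs its inputs to arrive in some order, this cascade executes entirely within a single pass over the original stream.

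Properness of the resulting coloring is immediate: within each $\mathcal{A}_i$ the emitted colors are proper by hypothesis, and palettes of distinct copies are disjoint by construction. The cascade itself uses at most $K \cdot g(n,\Delta)$ space and at most $K \cdot f(\Delta)$ colors. To see that this depth $K$ essentially suffices, let $M_i$ denote the random number of edges fed to $\mathcal{A}_i$, so $M_1 = m$. Applying the hypothesis on $\mathcal{A}$ conditionally on $M_i$ gives $\mathbb{E}[M_{i+1} \mid M_i] \leq (1-\delta) M_i$, and iterated expectation gives $\mathbb{E}[M_{K+1}] \leq (1-\delta)^K m$. Using $m \leq n\Delta/2$, a suitable constant in the definition of $K$ drives $\mathbb{E}[M_{K+1}]$ down to at most $O(n)$.

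The main obstacle is closing the gap between ``few edges remain in expectation'' and ``every edge is eventually assigned a color'', since the W-streaming model forces the algorithm to commit online. I would handle this by having the algorithm buffer any edges that survive the cascade (there are $O(n)$ of them in expectation, which fits in the $O((\log\Delta / \delta) g(n,\Delta))$ workspace, absorbing a Markov-style blow-up and a constant-probability restart) and then finish them off deterministically with the trivial $(2\Delta-1)$-greedy procedure using one extra fresh palette. Because any partial-coloring algorithm $\mathcal{A}$ worth the name must satisfy $f(\Delta) = \Omega(\Delta)$ and $g(n,\Delta) = \Omega(n)$, both the $2\Delta$ extra colors and the $O(n)$ extra space for this safety net are absorbed in the claimed $O((\log\Delta / \delta) f(\Delta))$ colors and $O((\log\Delta / \delta) g(n,\Delta))$ space, respectively. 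An alternative, and perhaps cleaner, implementation is to spawn each new $\mathcal{A}_{i+1}$ only when $\mathcal{A}_i$ actually emits a $\bot$; the expected number $I$ of copies spawned then satisfies $\mathbb{E}[I] = O(\log\Delta / \delta)$ by a standard geometric tail-sum argument, and the expected total cost follows by linearity.
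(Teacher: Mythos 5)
Your proposal is correct and matches the paper's approach: the paper's own proof sketch also recursively reapplies $\mathcal{A}$ to the $\bot$-marked edges, observes that the uncolored count shrinks by a $(1-\delta)$ factor per round, and stops after an expected $O(\log\Delta/\delta)$ rounds once the residual fits in memory, at which point it is finished off directly. Your cascade-of-pipelines description and the geometric-tail stopping-time analysis are just a more explicit account of the same argument; the one small caveat is that the ``constant-probability restart'' you mention is awkward in W-streaming (already-printed colors cannot be retracted), but your own alternative of spawning copies adaptively until the residual is small, combined with the ``in expectation'' phrasing of the claimed bounds, sidesteps this cleanly, so the proof goes through.
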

\begin{proof}[Proof sketch]
    The idea is to recursively apply the streaming algorithm on all edges marked with $\bot$ (uncolored). In expectation, each time we reapply the algorithm, the number of uncolored edges decrease by a factor of $1-\delta$. So the expected recursion depth would be $O(\frac{\log\Delta}{\delta})$ before all uncolored edges fit in memory.
\end{proof}


According to the reductions in the preliminaries, we will focus on partial edge coloring in bipartite graphs with fixed degree pairs. More specifically, our algorithm consists of the following two components.

\begin{lemma}\label{low}
	Fix a parameter $\epsilon > 0$. Given an graph $G = (V, E)$ on $n$ vertices with maximum vertex degree $\Delta$, for any constant $\epsilon > 0$, and fix an integer pair $(l, r)$, there is a randomized W-streaming algorithm that outputs a partial coloring of edges $F_{l, r}$ such that least $\delta m_{l, r}$ edges receive colors in expectation; here $\delta = 2^{-O(1/\epsilon)}$ is also a constant. The algorithm uses $O\brac{(\log\Delta)^{O(1 / \epsilon)}\cdot \Delta^{1+\epsilon}\cdot 2^l}$ colors and $O\brac{(\log\Delta)^{O(1 / \epsilon)}\cdot n}$ space.
\end{lemma}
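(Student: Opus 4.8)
The plan is to implement the multi-layer palette scheme sketched in the technical overview, instantiated for the degree pair $(l,r)$. Write $d' = 2^l$ (the degree of $L$-vertices within a batch) and $d = 2^r$ (the degree of $R$-vertices within a batch); by the general-to-bipartite reduction we may assume $G$ is bipartite with sides $L \cup R$, and we restrict attention to the edge set $E_{l,r}$, processed in batches $F_{l,r}$ of size $O(n)$ each. I would first handle the ``balanced'' regime: if $d' \geq \Delta^{1/3}$, I revert to the color-table approach of \cite{behnezhad2023streaming} — maintain a $\Delta \times (\Delta/d)$ matrix $C[i,j]$, a per-vertex random shift $r_u \in [\Delta]$ for $u \in L$, a degree counter $c_u$ for each $u \in L$, and a batch counter $b_v$ for each $v \in R$, and assign edge $(u,v)$ the tentative color $C[(r_u + c_u) \bmod \Delta, b_v]$, keeping one edge per collision. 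A union bound over the $d$ neighbors of any $v$ in a batch shows the number of collisions around $v$ is $O(d^2/\Delta)$ in expectation, hence a constant fraction of edges survive; this uses $O(\Delta^2/d) \le O(\Delta \cdot d'^2) = O(\Delta^{1+\epsilon} \cdot 2^l)$ colors after the recursion blowup, well within budget. So the substance is the ``unbalanced'' regime $d' < \Delta^{1/3}$.

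For the unbalanced regime I would set $k = \Theta(1/\epsilon)$ layers and a branching parameter $D = \Delta^{\epsilon}$, and build a $k$-level hierarchy of the $\Delta/d$ batches of a given ``block'': at the bottom level group batches into \emph{phases} of $d$ batches each, then group $D$ phases into a \emph{super-phase}, $D$ super-phases into a super-super-phase, and so on for $k$ levels, so that the top-level unit spans $d \cdot D^k \ge \Delta$ batches, i.e.\ the whole stream, once $D^k \ge \Delta/d$. At the top level allocate $O(\Delta \cdot d')$ fresh colors, recursively partitioned: each level-$j$ unit owns a color package that it splits into $D$ equal sub-packages, one assigned uniformly at random to each of its $D$ level-$(j{-}1)$ children; at the bottom each phase owns a palette of size $d \cdot d'$, and within a phase we run the single-phase subroutine from the overview (random shift $r_u \in [d]$ for $u \in L$, synchronized counter $c_i$ per palette, tentative color $(r_u + c_i) \bmod d$ together with an index in $[d']$ to separate the $d'$ edges at each $u$). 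Each $v \in R$ keeps, for every level $j$, a list $U_{v,j}$ of color packages at that level it has already consumed, and whenever a unit's assigned package collides with an entry of $U_{v,j}$ the edges of $v$ inside that unit are left uncolored.

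The three things to verify are color count, space, and the constant success fraction. For colors: the top-level budget is $O(\Delta \cdot d') = O(\Delta \cdot 2^l)$, and the only extra factors are the $O(\log^2\Delta)$ from summing over $(l,r)$ and the $O(\log\Delta/\delta)$ from the partial-coloring recursion, giving $O((\log\Delta)^{O(1/\epsilon)} \cdot \Delta^{1+\epsilon} \cdot 2^l)$ once we note $D^k = \Delta^{\epsilon k} = \Delta^{\Theta(1)}$ forces $k = \Theta(1/\epsilon)$ and the per-level rounding costs a $(\log\Delta)^{O(1)}$ factor per level. For space: the key accounting is that a level-$j$ package enters $U_{v,j}$ only after $v$ has received $\Omega(d \cdot D^{j})$ edges inside the corresponding unit (because a level-$j$ unit is committed to $v$ only once $v$ is ``full'' in enough of its children), so the total list length $\sum_{v,j} |U_{v,j}|$ telescopes against the $O(dn)$ edges per bottom block and stays $O(n)$ up to the $(\log\Delta)^{O(1/\epsilon)}$ overhead; the $L$-side needs only one counter per palette by synchronization, which is again $O(n)$. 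For the success fraction: condition on a bottom-level unit reaching a vertex $v$; the package/palette assignments are uniform and independent across layers, so the probability that \emph{no} ancestor package of this unit lies in the respective $U_{v,j}$ is at least a constant (each level loses at most a $1/2$ factor by a Markov/union argument on list lengths), and conditioned on survival the single-phase subroutine colors a constant fraction of $v$'s edges exactly as in the $O(\Delta^{1.5})$ analysis — multiplying the $k$ constants gives $\delta = 2^{-O(k)} = 2^{-O(1/\epsilon)}$.

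The main obstacle I anticipate is the space bound in the hierarchical setting: one must show the charging argument survives the recursion across all $k$ layers simultaneously, i.e.\ that an $R$-vertex cannot accumulate long package lists at many levels at once. The clean way is to prove, level by level from the bottom up, the invariant that a package is appended to $U_{v,j}$ only in exchange for $\Omega(d D^{j})$ already-processed edges at $v$ within that unit, so $\sum_j |U_{v,j}| \cdot d D^{j} = O(\deg(v) \cdot \text{(number of blocks)})$ and hence $\sum_{v,j}|U_{v,j}| = O(n \cdot (\Delta/d)/D^{0}) \cdot (\text{geometric sum}) = O(n)$; making ``$v$ is full in enough children'' precise — deciding the exact threshold so that the charging is tight but we still color a constant fraction — is the delicate calibration step, and it is where the $2^{-O(1/\epsilon)}$ rather than $2^{-O(1)}$ loss in $\delta$ comes from.
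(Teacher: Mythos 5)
Your overall shape — a multi-level hierarchy of batches with nested color packages, random assignment of sub-packages to children, per-vertex ``used-package'' lists, and a synchronized counter for the $L$-side — matches the paper's strategy. But the core of your proposal, a \emph{single fixed} hierarchy with branching $D=\Delta^\epsilon$ at every level, cannot deliver the $\tilde O(n)$ space bound, and you yourself flag exactly the spot where the gap is without filling it.

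The charging you need is the one you state: a level-$j$ package should enter $U_{v,j}$ only in exchange for $\Omega(d\cdot D^{j})$ edges of $v$ inside the corresponding unit. In a fixed hierarchy there is no mechanism enforcing this. With branching $D$ everywhere, a vertex $v$ that shows up with degree $d$ in a single batch of a level-$j$ unit and then disappears has only $\Theta(d)$ edges there, yet it must still remember which level-$j$ package was consumed; if you coalesce (elevate marked leaves to the level-$j$ node when the stream leaves the unit), that node then ``costs'' only $\Theta(d)$ edges, not $\Omega(dD^{j})$, and the space bound degrades to $\Theta(D\cdot n)=\Theta(\Delta^\epsilon n)$ per level rather than $O(n)$. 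Concretely: within one super-phase ($Dd$ batches, $Ddn$ edges), up to $Dn$ vertices can each accumulate a level-$1$ entry at cost only $d$ edges, so $\sum_v|U_{v,1}|$ can be $\Theta(Dn)$. Conversely, if you refuse to commit a level-$j$ entry until $v$ has filled $\Omega(D)$ children, then vertices that appear sparsely never commit, their lower-level lists cannot be discarded, and the same blow-up appears one level down. Your ``delicate calibration'' remark identifies this tension; the proposal does not resolve it, and no single threshold can.

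The paper's resolution is the piece you are missing: it does \emph{not} use one hierarchy, it maintains $(\log\Delta)^{O(1/\epsilon)}$ forests $\tree_\freq$ indexed by frequency vectors $\freq=(f_1,\dots,f_h)$ with $2^{r+1}\ge f_1\ge\cdots\ge f_h=\Delta^\epsilon$, i.e.\ all possible decreasing branching profiles. For each arriving batch and each heavy $v\in R$, a subroutine $\textsc{FreqVec}$ scans the marked-set states across these forests and \emph{adaptively picks}, coordinate by coordinate, the smallest $f_i$ such that $v$ still has room (fewer than $f_i$ marked children) at level $i$. The minimality of each $f_i$ is what drives Invariant~\ref{inv}(2): if $f_i$ was chosen, then under $f_i/2$ the vertex $v$ already had $\ge f_i/2$ marked children, each certifying degree, which cascades to $\deg_U(v)\ge 2^{r-k}\prod_{i\le k}f_i$ for every marked ancestor — exactly the density bound you need for the telescoping charge. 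Coupled with the elevation step in $\textsc{UpdateMarkSet}$ (which keeps marked nodes as children of the current root-to-leaf path, giving $f_k$ candidate slots per level), this yields $\sum_v|M_{v,\freq}|=O(2^h n)$ per forest and $\tilde O(n)$ overall. The per-vertex adaptivity also repairs the coloring-fraction argument: $\freq_v$ is chosen so that at each level the number of conflicting marked siblings is strictly less than $f_{k+1}$, while the sub-packages are drawn from a random permutation of $5f_{k+1}$ slots, giving a clean $(4/5)^h$ success probability without the coupon-collector degradation your scheme would face when $|U_{v,j}|\approx D$.

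Two smaller points. First, the paper's Lemma~\ref{low} is proved by the forest construction for \emph{all} $(l,r)$; the switch to the color-table scheme is done outside this lemma (in Lemma~\ref{high} and the proof of Theorem~\ref{rand}), so your ``balanced regime'' branch is unnecessary here, and in any case the arithmetic $\Delta^2/d \le \Delta\cdot d'^2 = \Delta^{1+\epsilon}2^l$ is off by a factor $2^l/\Delta^\epsilon$. Second, your claim that each level loses ``at most a $1/2$ factor by Markov'' is not justified once the lists can approach the branching factor; the paper sidesteps this by over-provisioning ($5f_i$ sub-packages for $f_{i-1}$ children) together with the guarantee from $\textsc{FreqVec}$ that fewer than $f_i$ are ever blocked.
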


\begin{lemma}\label{high}
	Given an graph $G = (V, E)$ on $n$ vertices with maximum vertex degree $\Delta$, and fix an integer pair $(l, r)$, there is a randomized W-streaming algorithm that outputs a partial coloring of edges $F_{l, r}$ such that least $m_{l, r}/2$ edges receive colors in expectation. The algorithm uses $O\brac{\Delta + \Delta^2 / 2^{l+r}}$ colors and $O(n)$ space.
	
\end{lemma}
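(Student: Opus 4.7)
The plan is to adapt the random-shift color-table scheme of \cite{behnezhad2023streaming}, originally designed for the case of unit $L$-side batch-degree, to arbitrary balanced batch-degree pairs. Write $d_L = 2^l$ and $d_R = 2^r$. Since every vertex in an $F_{l,r}$-slice has batch-degree at least $d_L$ on the $L$-side (resp.\ $d_R$ on the $R$-side), any $u \in L$ appears in at most $\Delta/d_L$ such batches and any $v \in R$ in at most $\Delta/d_R$. Fix a sufficiently large constant $C_0$ and set the inflation factor
\[
K := \max\brac{C_0,\; C_0 \sqrt{d_L d_R / \Delta}}.
\]
Allocate an implicit color table of shape $(K\Delta/d_L) \times (K\Delta/d_R)$; its total size
\[
K^2 \cdot \frac{\Delta^2}{d_L d_R} \;=\; O\brac{\Delta + \Delta^2/2^{l+r}}
\]
matches the target color budget in both the regime $d_L d_R \leq \Delta$ (where the table dominates) and $d_L d_R > \Delta$ (where the $\Delta$ term dominates).

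For each $u \in L$, draw an independent uniform shift $r_u \in [K\Delta/d_L]$, and for each $v \in R$ an independent uniform shift $s_v \in [K\Delta/d_R]$. During the stream, maintain per-vertex counters $c_u, b_v$ that are incremented whenever the vertex participates in an $F_{l,r}$-batch. When edge $(u,v) \in F_{l,r}$ is read in some batch, assign the tentative color $C\big[(r_u + c_u) \bmod (K\Delta/d_L),\; (s_v + b_v) \bmod (K\Delta/d_R)\big]$. After all tentative colors of the batch are computed, mark as $\bot$ any edge whose color coincides with that of another edge of the same batch sharing an endpoint, and print the surviving colors.

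Correctness across batches is automatic: two edges incident to $u$ in different batches receive different row indices because $c_u$ has strictly increased, and symmetrically for $v$ on the column index. The key step is the within-batch analysis. Fix $(u,v) \in F_{l,r}$; another edge $(u,v')$ of the same batch collides column-wise iff $s_{v'} - s_v \equiv b_v - b_{v'} \pmod{K\Delta/d_R}$, which happens with probability $d_R/(K\Delta)$ because $s_{v'}$ is uniform and independent of $s_v$. Union-bounding over the at most $2d_L$ other edges of $u$ and, symmetrically, over the at most $2d_R$ other neighbors of $v$, the per-edge failure probability is at most $O\brac{d_L d_R/(K^2 \Delta)} \leq 1/C_0$. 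Picking $C_0 \geq 2$ then yields at least $m_{l,r}/2$ surviving edges in expectation.

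Each vertex stores only a shift and a counter, using $O(\log \Delta)$ bits, so the total space is $O(n)$; the table is implicit. I expect the main technical obstacle to be the choice of $K$: it has to be simultaneously large enough to kill the within-batch collision probability, which forces $K = \Omega(\sqrt{d_L d_R / \Delta})$ in the dense regime $d_L d_R > \Delta$, and small enough to keep the total color count within $O\brac{\Delta + \Delta^2/2^{l+r}}$. This two-regime balancing is precisely what produces the additive $\Delta$ term in the stated bound, and verifying that the two behaviors of $K$ glue smoothly across the threshold $d_L d_R = \Theta(\Delta)$ is the only nontrivial book-keeping in the argument.
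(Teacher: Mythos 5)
Your union bound is mis-counted, and the resulting error hides a more fundamental gap. For a fixed edge $(u,v)$, each of the $\leq 2d_L$ other edges $(u,v')$ in the batch column-collides with probability $d_R/(K\Delta)$, and each of the $\leq 2d_R$ other edges $(u',v)$ row-collides with probability $d_L/(K\Delta)$. Summing, the per-edge failure probability is $O(d_Ld_R/(K\Delta))$, not $O(d_Ld_R/(K^2\Delta))$ as you wrote. With your choice $K = \Theta(\sqrt{d_Ld_R/\Delta})$, this is $\Theta(\sqrt{d_Ld_R/\Delta})$, which is $\gg 1$ whenever $2^{l+r} \gg \Delta$ --- exactly the dense regime \Cref{high} has to handle (e.g.\ $2^l=2^r=\Delta$). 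To drive the failure probability below a constant you would instead need $K = \Omega(d_Ld_R/\Delta)$, but then the table has $K^2\Delta^2/(d_Ld_R) = \Theta(d_Ld_R)$ cells, which can be as large as $\Theta(\Delta^2)$, blowing past the target budget.

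The structural reason your discard-on-collision strategy cannot be rescued by inflating the table is that in the dense regime the \emph{expected} number of same-cell neighbors of an edge is $\Theta(2^{l+r}/\Delta) \gg 1$: collisions are the norm, not the exception, so you cannot afford to throw colliding edges away. The paper avoids this entirely by assigning each cell $\mat[x,y]$ a whole \emph{palette} of size $\Delta_0 = \Theta(1 + 2^{l+r}/\Delta)$ rather than one color. All edges landing in cell $(x,y)$ form a subgraph $G_{x,y}$; the algorithm prunes away the edges whose endpoint degree in $G_{x,y}$ exceeds $\Delta_0$ (by Markov, applied to the same collision counts you estimated, each edge survives the prune with probability $\geq 1/2$), and then properly edge-colors the pruned bipartite graph offline with $\Delta_0$ colors. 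This replaces your ``survive iff degree $=1$'' event by ``survive iff degree $\leq \Delta_0$,'' which succeeds with constant probability without any inflation of the table dimensions, keeping the color count at $(\Delta/2^l)(\Delta/2^r)\Delta_0 = O(\Delta + \Delta^2/2^{l+r})$. Your row/column index construction (shift plus counter, taken mod $\Delta/2^l$ and $\Delta/2^r$) is essentially the paper's; it is the per-cell handling that you need to change.
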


\begin{proof}[Proof of \Cref{rand}]
	Basically, \Cref{low} deals with the case when $\min\{2^l, 2^r\}$ is small, and \Cref{high} deals with the case when $\min\{2^l, 2^r\}$ is large. For any $(l, r)$, if $\min\{2^l, 2^r\}\leq \Delta^{1/3}$, then by \Cref{low}, the number of colors is at most $O\brac{(\log\Delta)^{O(1 / \epsilon)}\cdot \Delta^{4/3+\epsilon}}$, and otherwise by \Cref{high} the number of colors would be $O\brac{\Delta^{4/3}}$. Either way, the total number of colors over all $(l, r)$ is bounded by $O\brac{(\log\Delta)^{O(1/\epsilon)}\Delta^{4/3+\epsilon}}$.
\end{proof}

\subsection{Proof of \Cref{low}}
\subsubsection{Data Structures}\label{Btree}

Before presenting the details of the data structures we will use in the main algorithm, let us start with a brief technical overview. The data structures consist of three components.

\begin{itemize}
	\item \textbf{Forest structures on batches.} This part organizes edge input batches into parameterized forests in a way similar to B-trees. The height of the forests will be $O(1/\eps)$, and the branch size of each level will be an integer power of $2$ in the range $[\Delta^\eps, \Delta]$, and so the total number of different forests will be bounded by $(\log\Delta)^{O(1/\eps)}$. 
	
	\item \textbf{Color allocation on forests.} Each forest will be associated with a separate color set of size roughly $\Delta^{1+\eps}$. On each level of a forest, we will randomly partition the color set of this forest into color subsets (packages) and assign them to the tree nodes on this level. We will also make sure that the color packages on tree nodes are nested; that is, the color package of a node is a subset of the color package of its parent node.
	
	Each vertex will choose colors for its incident edges according to its own frequencies of accumulating edges in the stream. For example, when a vertex is gathering a large number of incident edges in a short interval of batches, it would use color packages on tree nodes with large branch sizes.
	
	\item \textbf{Vertex-wise data structures.} To ensure that we never assign the same color to adjacent edges, each vertex needs to remember which colors it has already used around it. To efficiently store all the previously used colors, we will show that the used colors are actually concentrated in color packages, so each vertex only needs to store the tree nodes corresponding to those color packages, instead of storing every used colors individually.
\end{itemize}

Next, let us turn to the details of the data structures we have outlined above.

\paragraph*{Forest Structures on Batches.} Without loss of generality, assume $l\leq r$, $2^r > \Delta^\epsilon$, and $\Delta^\epsilon$ is an integer power of two; note that if $2^r \leq \Delta^\epsilon$, then the maximum degree inside each batch $F_{l, r}$ is at most $\Delta^\epsilon$, so we can use a fresh palette of size $O(\Delta^\epsilon)$, so the total number of colors would be $O(\Delta^{1+\epsilon})$.

As we have done in the preliminaries, partition the entire input stream into at most $m / n\leq \Delta$ batches of size $n$. We will create at most $(\log\Delta)^{O(1/\epsilon)}$ different forest structures, where each leaf represents a batch, and the internal tree nodes represent consecutive batches. Each forest is parameterized by an integer vector $\freq = (f_1, f_2, \ldots, f_{h})$ such that:
\begin{itemize}
	\item $f_i$ is an integer power of two;
	\item $2^{r+1}=f_0\geq f_1\geq f_2\geq \cdots \geq f_h = \Delta^\epsilon$, and $2^{r+1}\cdot \prod_{i=1}^{h-1} f_i\leq m/n$.
\end{itemize}

We can assume the total number of batches in the input stream is an integer multiple of $2^{r+1}\cdot \prod_{i=1}^{h-1} f_i\leq m/n$ by padding empty batches. Given such a vector $\freq = (f_1, f_2, \ldots, f_{h})$, we will define a forest structure $\tree_\freq$ with $h+1$ levels by a bottom-up procedure; basically we will build a forest on all the batches with branching factors $2^{r+1}, f_1, \ldots, f_h$ bottom-up. More specifically, consider the following inductive procedure.
\begin{itemize}
	\item All the batches will be leaf nodes on level $0$. Partition the sequence of all batches into groups of consecutive $f_0 = 2^{r+1}$ batches. For each group, create a tree node at level-$1$ connecting to all leaf nodes in that group.
	
	\item Given any $1\leq i\leq h-1$, assume we have defined all the tree nodes on levels $1\leq j\leq i$. List all the tree nodes on level $i$ following the same ordering of the batches, and partition this sequence of level-$i$ nodes into groups of size $f_i$. For each group, create a tree node at level-$(i+1)$ that connects to all nodes in the group.
\end{itemize}
According to the definition, in general, tree levels up to $k$ have the same topological structure for all frequency vectors which share the same first $k-1$ coordinates $f_1, f_2, \ldots, f_{k-1}$. For any node $N\in V(\tree_\freq)$, let $\tree_\freq(N)$ be the subtree rooted at node $N$. By the above definition, for any $1\leq k\leq h$, for any level-$k$ node $N$, the set of all leaf nodes contained in the subtree $\tree_\freq(N)$ form a sub-interval of the batch sequence with length $2^{r+1}\cdot \prod_{i=1}^{k-1}f_i$. 

\paragraph*{Color Allocation on Forests.} Next, we will allocate color packages at each tree node of each forest $\tree_\freq$. By construction, each forest structure $\tree_\freq$ is a tree of $h+1$ levels (from level-$0$ to level-$h$), and each tree is rooted at a level-$h$ node. Go over each tree $T\subseteq \tree_\freq$ and we will allocate color packages to tree nodes in a top-down manner.
\begin{itemize}
	\item \textbf{Basis.} At the root $N$ of the tree $T$, allocate a color package $\clr^N$ with $25\cdot 2^{l+r+2}\cdot \prod_{i=1}^h (5f_i)$ new colors. Divide package $\clr^N$ evenly into $5f_h =5\Delta^\epsilon$ smaller packages (colors are ordered alphabetically in a package):
	$$\clr^N= \clr^N_1 \sqcup \clr^N_2 \sqcup \cdots \sqcup \clr^N_{5\Delta^\epsilon}$$
	Here, symbol $\sqcup$ means disjoint union. By construction of tree $T$, $N$ has $f_{h-1}$ different children $N_1, N_2, \ldots, N_{f_{h-1}}$. Let sequence $(i_1, i_2, \ldots, i_{5f_{h-1}})$ be a random permutation of $(f_{h-1} / \Delta^\epsilon)\odot [5\Delta^\epsilon]$. For each $1\leq j \leq f_{h-1}$, define color package $\clr^{N_j} = \clr^{N}_{i_j}$.
	
	\item \textbf{Induction.}
    In general, suppose we have defined color packages for tree nodes on levels $k, k+1, \ldots, h$. Go over all tree nodes on level $k$. Inductively, assume $|\clr^N| = 25\cdot 2^{l+r+2}\cdot \prod_{i=1}^k (5f_i)$. Divide color package $\clr^N$ into $5f_k$ smaller packages (colors are ordered alphabetically in a package):
	$$\clr^N = \clr^N_1\sqcup \clr^N_2\sqcup \cdots \sqcup \clr^N_{5f_k}$$
	Let $i_1, i_2, \ldots, i_{5f_{k-1}}$ be a random permutation of $(f_{k-1} / f_k)\odot [5f_k]$. For each such level-$k$ node $N$, by construction, it has $f_{k-1}$ children $N_1, N_2, \ldots, N_{f_{k-1}}$. 
	Then, for each $1\leq j\leq f_{k-1}$, define color package $\clr^{N_j} = \clr^N_{i_j}$.
\end{itemize}
By the above construction, each leaf node is allocated a color package of size $25\cdot 2^{l+r+2}$. We will call each such smallest color package a \textbf{palette}. Notice that, by definition, the same palette may appear at multiple leaf nodes (which represent input batches). For any leaf node $N$ (or equivalently, a batch), let $\cnt(\clr^N)$ count the total number of times that palette $\clr^N$ is also allocated to previous leaf nodes (batches). Note that the counters $\cnt(*)$ do not depend on the input stream and can be computed in advance.

\paragraph*{Vertex-Wise Data Structures.} To carry out the streaming algorithm, we will also maintain some data structures for vertices in $V$. At the beginning of the streaming algorithm, for each vertex $u\in L$, draw a random shift $r_u\in [3\cdot 2^{r+1}]$ uniformly at random; these random shifts $r_u$'s will remain fixed throughout the entire execution of the algorithm.

The main part is to specify the data structures associated with each vertex $v\in R$. For each forest $\tree_\freq$, we will maintain a set of marked nodes $M_{v, \freq}\subseteq V(\tree_\freq)$ throughout the streaming algorithm.
\begin{invariant}\label{inv}
	We will ensure the following properties regarding the marked nodes $M_{v, \freq}$ throughout the execution of the streaming algorithm.
	\begin{enumerate}[(1)]
		\item No two nodes in $M_{v, \freq}$ lie on the same root-to-leaf path in the forest $\tree_{\freq}$. Furthermore, suppose the current input batch corresponds to a leaf $F$, and let $P$ be the unique tree path from $F$ to the tree root. Then, any node $N\in M_{v, \freq}$ is a child of some node on $P$.
		
		\item For any node $N\in \tree_\freq$ on level-$k$ such that $M_{v, \freq}\cap V(\tree_\freq(N))\neq \emptyset$ , let $F_1, F_2, \ldots, F_s\subseteq E$ be all the input batches which correspond to leaf nodes in subtree $\tree_\freq(N)$. Take the union of the batches $U = F_1\cup F_2\cup\cdots \cup F_s$. Then, we have $\deg_U(v)\geq 2^{r-k}\cdot\prod_{i = 1}^k f_i$.
		
		\item For any previous input batch $F'$ before $F$ such that:
		\begin{itemize}
			\item $F$ and $F'$ are in the same tree component in $\tree_\freq$,
			\item $\deg_{F'}(v)\in [2^r, 2^{r+1})$,
			\item $v$ used some colors in $\clr^{F'}$ during the algorithm,
		\end{itemize} 
		we guarantee that $F'$ must be contained in some subtree $\tree_{\freq}(N)$ for some $N\in M_{v, \freq}$.
		
		\item The choices of $M_{v, \freq}$ is independent of the randomness of $\{r_u\mid u\in L\}$ and the randomness of color packages $\clr^*$, and they only depend deterministically on the input stream.
	\end{enumerate}
\end{invariant}

Let us explain the purpose of the above properties. \Cref{inv}(1)(2) ensures that the algorithm only uses $\tilde{O}(n)$ space in total. \Cref{inv}(3) allows us to rule out all colors used previously, preventing duplicate assignments to edges incident on the same vertex. \Cref{inv}(4) will be technically important for the analysis of randomization. 

\subsubsection{Algorithm Description}\label{rand-alg}
Next, let us turn to describe the coloring procedure. At the beginning, all the marked sets $M_{v, \freq}$ are empty for any $v\in R$ and any frequency vector $\freq$. Upon the arrival of a new input batch $F$, we will describe the procedures that update the marked sets and assign edge colors.

\paragraph*{Preprocessing Marked Sets.} Since the current input batch has changed due to the arrival of $F$, we may have violated \Cref{inv}(2) as the root-to-leaf tree path may have changed. Therefore, we first need to update all the marked sets with the following procedure named $\textsc{UpdateMarkSet}(F)$\label{alg:UpdateMarkSet}.

Go over every vertex $v\in R$ and every frequency vector $\freq$. Consider the position of $F$ in the forest $\tree_\freq$, and let $P$ be the root-to-leaf path in $\tree_{\freq}$ ending at leaf $F$. First, remove all marked nodes $N\in M_{v, \freq}$ which are not in the same tree as $P$; note that this may happen when $F$ is the first leaf in a new tree component of $\tree_{\freq}$.

Next, go over every node $W$ lying on the tree path $P$, for any child node $N$ of $W$, if (1) $V(\tree_\freq(N)) \not\ni F$ and (2) $V(\tree_\freq(N))\cap M_{v, \freq} \neq \emptyset$, then remove all nodes in $V(\tree_\freq(N))\cap M_{v, \freq}$ from $M_{v, \freq}$ and add $N$ to $M_{v, \freq}$. In other words, we elevate the positions of all the marked nodes in $M_{v, \freq}$ to their ancestors which are children of $P$. See \Cref{fig:pre} for an illustration.


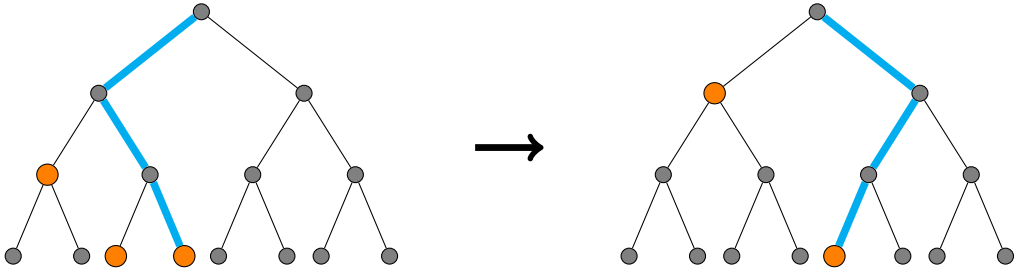
\begin{figure}[h!]
	\centering
    \begin{tikzpicture}[
  level distance=1.2cm,
  level 1/.style={sibling distance=3cm},
  level 2/.style={sibling distance=1.5cm},
  level 3/.style={sibling distance=1cm},
  every node/.style={circle, draw, fill=black!50,
	inner sep=0pt, minimum width=6pt},
  marked/.style={circle, draw=black, fill=orange, inner sep=1pt, minimum size=8pt},    
  triangle/.style={shape=isosceles triangle, shape border rotate=90,
                   minimum height=1cm, draw=gray!70, fill=gray!10, inner sep=0pt},
    scale = 0.9
  ]

\node(u0) {}
  child {node(u1) {}
    child {node[marked] {}
      child {node {}}
      child {node {}}
    }
    child {node(u2) {}
      child {node[marked] {}}
      child {node[marked] (u3) {}}
    }
  }
  child {node {}
    child {node {}
      child {node {}}
      child {node {}}
    }
    child {node {}
      child {node {}}
      child {node {}}
    }
  };

  \draw[->, thick, line width = 0.9mm] (4, -2) to (5, -2);

\draw[cyan, line width=1mm] (u0) -- (u1);  
\draw[cyan, line width=1mm] (u1) -- (u2);
\draw[cyan, line width=1mm] (u2) -- (u3);

\node(u0b) at (9, 0) {}  
  child {node[marked]{}
    child {node {}
      child {node {}}
      child {node {}}
    }
    child {node {}
      child {node {}}
      child {node {}}
    }
  }
  child {node(u1b) {}
    child {node(u2b) {}
      child {node[marked](u3b) {}}
      child {node {}}
    }
    child {node {}
      child {node {}}
      child {node {}}
    }
  };

\draw[cyan, line width=1mm] (u0b) -- (u1b);
\draw[cyan, line width=1mm] (u1b) -- (u2b);
\draw[cyan, line width=1mm] (u2b) -- (u3b);

\end{tikzpicture}
	\caption{In this picture, it shows an example of a forest $\tree_\freq$ where the orange nodes are the marked ones, and the blue path is the root-to-leaf path ending at the current input batch $F$. Upon the arrival of a new input batch $F$, we need to update the root-to-leaf tree path and the marked sets accordingly.}\label{fig:pre}
\end{figure}

\paragraph*{Coloring $F_{l, r}$.} To find colors for edges in $F_{l, r}$, we first need to find a proper palette for each vertex $v\in R$ such that $\deg_F(v)\in [2^r, 2^{r+1})$. We will describe an algorithm $\textsc{FreqVec}(F)$\label{alg:freqvec} which finds a proper frequency vector $\freq_v = (f_1, f_2, \ldots, f_h)$ for each such $v\in R$ and use the palette $\clr^{F}$ associated with leaf node $F$ in the forest $\tree_{\freq_v}$. To identify the proper frequency vector $\freq_v$, we will figure out each coordinate $f_1, f_2, \ldots, f_h$ one by one inductively. 
\begin{itemize}
	\item \textbf{Basis.} Let $N$ be the unique level-$1$ node containing $F$ (recall that level-$1$ nodes are defined irrespective of frequency vectors). Find $f_1 \in \{\Delta^\epsilon, 2\Delta^\epsilon, \ldots, 2^{r+1}\}$ which is the smallest integer such that there exists a frequency vector $\freq' = (f_1, f_2', f_3', \ldots)$, so that $M_{v, \freq'}$ contains less than $f_1$ children of $N$. Note that such a vector must exist, because $f_1 = 2^{r+1}$ always satisfies this requirement.
	
	If $f_1 = \Delta^\epsilon$, return the 1-dimensional vector $\freq_v = (f_1)$.
	
	\item \textbf{Induction.} In general, assume we have specified a prefix $f_1, f_2, \ldots, f_k$ for some $k\geq 1$. Note that all the forests $\tree_{\freq'}$ share the same topological structures from level $0$ up to $k+1$. Let $N$ be the unique level-$(k+1)$ node containing $F$ conditioning on $f_1, f_2, \ldots, f_k$. Check all the frequency vectors $\freq'$ that begin with the prefix $f_1, f_2, \ldots, f_k$, and find the smallest possible $f_{k+1}\in \{\Delta^\epsilon, 2\Delta^\epsilon, \ldots, f_k\}$ such that there exists a frequency vector $\freq' = (f_1, f_2, \ldots, f_k, f_{k+1}, f_{k+2}', \ldots )$ under the condition that $M_{v, \freq'}$ contains less than $f_{k+1}$ children of $N$. Note that such a vector must exist, because $f_{k+1} = f_k$ always satisfies this requirement.
	
	If $f_{k+1} = \Delta^\epsilon$, then return the vector $\freq_v = (f_1, f_2, \ldots, f_k, f_{k+1})$.
	
	\item \textbf{Choosing palette $\clr_v$.} Once we have finished the above inductive process, we need to choose a palette $\clr_v$ for $v$ which contains $25\cdot 2^{l+r+2}$ different colors. Basically, we will choose the palette associated with leaf node $F$ in forest $\tree_{\freq_v}$ as $\clr_v$, but must ensure that $\clr_v$ has not been used previously.
	
	To make sure of this, let $P$ denote the root-to-leaf tree path ending at leaf node $F$ in $\tree_{\freq_v}$. Travel down the path from root to leaf and enumerate all the encountered tree nodes. For every such tree node $N$, if $N$ already contains a sibling $W\in M_{v, \freq_v}$ and $\clr^W = \clr^N$, then abort this procedure and assign $\clr_v\leftarrow \emptyset$.
	
	In the end, if this procedure terminates without abortion, then assign $\clr_v\leftarrow \clr^F$; here $\clr^F$ refers to the palette of size $25\cdot 2^{l+r+2}$ associated with leaf node $F$ in forest $\tree_{\freq_v}$.
\end{itemize}

In this way, we can select a frequency vector $\freq_v$ for every vertex $v\in R$ such that $\deg_F(v)\in [2^r, 2^{r+1})$. Let $\clr_v$ be the palette assigned to leaf node $F$ by forest $\tree_{\freq_v}$. Next, we are going to color all edges in $F_{l, r}$ using colors from $\clr_v$ for edges incident on $v\in R$. Go over each vertex $u\in L$, and list its neighbors $v_1, v_2, \ldots, v_k, k<2^{l+1}$ in $F_{l, r}$. For any index $1\leq i\leq k$, if $\clr_{v_i}\neq \emptyset$, then reserve a tentative color to edge $(u, v_i)$, which is the $\kappa_i$-th color in palette $\clr_{v_i}$ and $$\kappa_i = 5\cdot 2^{l+1}\cdot\left(\cnt(\clr_{v_i}) + r_u\right) + i \mod 25\cdot 2^{l+r+2}$$
Recall that $\cnt(\clr_{v})$ counts the number of times that palette $\clr_v$ has appeared at previous leaf nodes under $\tree_\freq$. Notice that $k<2^{l+1}$, these tentative colors are different around $u$.

On the side of $v\in R$, it checks all the tentative colors on all of its edges in $F_{l, r}$. If a tentative color is used more than once, then it keeps only one of them, and turns other tentative colors back to $\bot$. Finally, for each edge $e\in F_{l, r}$, assign $e$ its own tentative color and print it in the output stream.

\paragraph*{Postprocessing Marked Sets.} After processing the input batch $F$, for every vertex $v\in R$ such that $\deg_F(v)\in [2^r, 2^{r+1})$, add node $F$ to $M_{v, \freq_v}$; note that we mark $F$ irrespective of whether $\clr_v$ is $\emptyset$ or not as will be important for establishing \Cref{inv}(4). The whole algorithm is summarized in \Cref{alg-low-deg}.

\begin{algorithm}
    \SetNoFillComment
    \caption{$\textsc{ColorLowDeg}(F)$}\label{alg-low-deg}
    \tcc{pre-processing marked sets}
    \For{$v\in R$ and frequency vector $\freq$}{
        call $\textsc{UpdateMarkSet}(F)$ (as described in \Cref{alg:UpdateMarkSet}) to remove marked nodes in previous tree components in $\tree_\freq$,\\
        and elevate the positions of all the marked nodes in $M_{v, \freq}$ to their ancestors which are children of $P$\;
    }
    \tcc{select frequency vector $\freq_v$ and palette $\clr_v$}
    \For{$v\in R$ such that $\deg_F(v)\in [2^r, 2^{r+1})$}{
        call $\textsc{FreqVec}(F)$ (as described in \Cref{alg:freqvec}) to determine the frequency vector $\freq_v = (f_1, f_2, \ldots )$ progressively\;
        find palette $\clr_v$ while ensuring no conflicts with sibling nodes\;
    }
    \tcc{assign colors for $F_{l, r}$}
    \For{$u\in L$}{
        let $v_1, v_2, \ldots, v_k$ be all of $u$'s neighbors in $F_{l, r}$\;
        assign edge $(u, v_i)$ a tentative color which is the $\kappa_i$-th color in palette $\clr_{v_i}$, where
        $\kappa_i = 5\cdot 2^{l+1}\cdot\left(\cnt(\clr_{v_i}) + r_u\right) + i \mod 25\cdot 2^{l+r+2}$\;
    }
    \For{$v\in R$ such that $\deg_F(v)\in [2^r, 2^{r+1})$}{
        discard tentative colors that appear more than once around $v$ in $F_{l, r}$\;
        output the unique tentative colors to the output stream\;
    }
    \tcc{post-processing marked sets}
    \For{$v\in R$ such that $\deg_F(v)\in [2^r, 2^{r+1})$}{
        add $F$ to $M_{v, \freq_v}$\;
    }
\end{algorithm}

\subsubsection{Proof of Correctness}
To begin with, let us first bound the total number of different colors that we could possibly use.
\begin{lemma}
	The total number of colors that the algorithm could use is $O((\log\Delta)^{O(1/\epsilon)}\Delta^{1+\epsilon}\cdot 2^l)$.
\end{lemma}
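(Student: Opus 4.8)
The plan is to count colors forest by forest, since distinct forests $\tree_\freq$ receive disjoint color sets. First I would bound the number of frequency vectors $\freq = (f_1,\dots,f_h)$: each coordinate is an integer power of two in $[\Delta^\epsilon,\Delta]$, so there are at most $\log_2\Delta$ choices per coordinate, and the constraint $2^{r+1}\cdot\prod_{i=1}^{h-1}f_i \le m/n \le \Delta$ together with $f_i \ge \Delta^\epsilon$ forces $h \le O(1/\epsilon)$. Hence the number of admissible vectors is at most $(\log\Delta)^{O(1/\epsilon)}$. This is the ``number of forests'' factor.

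Next, for a single forest $\tree_\freq$, I would bound the total number of colors allocated. By the color-allocation construction, each tree component of $\tree_\freq$ gets a fresh color package of size $25\cdot 2^{l+r+2}\cdot\prod_{i=1}^h(5f_i)$ at its root, and packages of distinct tree components are disjoint. So the colors used by $\tree_\freq$ total $(\#\text{components})\cdot 25\cdot 2^{l+r+2}\cdot\prod_{i=1}^h(5f_i)$. The number of components equals $(m/n)\big/\big(2^{r+1}\prod_{i=1}^{h-1}f_i\big)$ (after padding with empty batches so this is an integer), which is at most $\Delta\big/\big(2^{r+1}\prod_{i=1}^{h-1}f_i\big)$. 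Multiplying, the $2^{r+1}$ and one copy of each $f_i$ for $i\le h-1$ cancel, and using $\prod_{i=1}^h(5f_i) = 5^h\prod_{i=1}^h f_i$ with $5^h = 2^{O(1/\epsilon)}$ and $f_h=\Delta^\epsilon$, the per-forest color count is $O\big(2^{O(1/\epsilon)}\cdot 2^l\cdot \Delta\cdot f_h\big) = O\big(2^{O(1/\epsilon)}\cdot 2^l\cdot \Delta^{1+\epsilon}\big)$. (I should double-check the exact bookkeeping of which $f_i$'s survive the cancellation: the root package carries $\prod_{i=1}^h f_i$ while the component count kills $\prod_{i=1}^{h-1}f_i$, leaving exactly one factor $f_h = \Delta^\epsilon$, plus the constants $25$, $2^{r+1}\le 2\Delta$, and $5^h$.)

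Finally I would combine the two bounds: summing the per-forest count $O(2^{O(1/\epsilon)}\Delta^{1+\epsilon}2^l)$ over the at most $(\log\Delta)^{O(1/\epsilon)}$ forests gives a total of $O\big((\log\Delta)^{O(1/\epsilon)}\cdot\Delta^{1+\epsilon}\cdot 2^l\big)$, absorbing the $2^{O(1/\epsilon)}$ into the $(\log\Delta)^{O(1/\epsilon)}$ factor. I would also remark that the degenerate case $2^r \le \Delta^\epsilon$, handled separately in the text by a single fresh palette of size $O(\Delta^\epsilon)$ per batch over at most $\Delta$ batches, contributes only $O(\Delta^{1+\epsilon})$, which is within the claimed bound.

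The main obstacle is purely the product bookkeeping in the middle step — making sure the telescoping cancellation between the root-package size ($\prod_{i=1}^h 5f_i$) and the number of tree components ($\Delta$ divided by $2^{r+1}\prod_{i=1}^{h-1}f_i$) is done correctly so that exactly the factor $f_h=\Delta^\epsilon$ (and harmless constants $25$, $2^{r+1}$, $5^h = 2^{O(1/\epsilon)}$) remains, rather than an uncancelled $\prod f_i$ which could be as large as $\Delta$ and would ruin the bound. Everything else (counting forests, counting components via the padding assumption) is routine.
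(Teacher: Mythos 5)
Your proposal is correct and follows essentially the same route as the paper: count the $(\log\Delta)^{O(1/\epsilon)}$ admissible frequency vectors, bound the per-tree package size by $25\cdot 2^{l+r+2}\prod_{i=1}^h(5f_i)$, multiply by the number of tree components $\lceil\Delta/(2^{r+1}\prod_{i=1}^{h-1}f_i)\rceil$, and observe the telescoping leaves only $f_h=\Delta^\epsilon$ together with constants $25$, $2^{r+1}\le 2\Delta$, and $5^h=2^{O(1/\epsilon)}$. The cancellation bookkeeping you flag as the only delicate point is exactly what the paper's one-line display computes, and your extra remark about the degenerate case $2^r\le\Delta^\epsilon$ matches the paper's handling of it in the setup of the forest structures.
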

\begin{proof}
    By the design of the forest structures, for each frequency vector $\freq$, the number of colors assigned to each tree in the forest $\tree_\freq$ is bounded by $25 \cdot 2^{l + r + 2} \cdot \prod_{i = 1}^h (5f_i). $  Since each tree in $\tree_\freq$ covers $ 2^{r + 1} \cdot \prod_{i = 1}^{h - 1} f_i $ batches, and there are at most $m / n \leq \Delta$ batches, the total number of colors in $\tree_\freq$ is bounded by  
    $$ 
    25 \cdot 2^{l + r + 2} \cdot \prod_{i = 1}^h (5f_i) \cdot \lceil \frac{\Delta}{ 2^{r + 1} \cdot \prod_{i = 1}^{h - 1} f_i } \rceil
    = O(\Delta^{1 + \epsilon} \cdot 2^l \cdot 5^{O(1 / \epsilon)}).   
    $$    
    Since there are $(\log\Delta)^{O(1/\epsilon)}$ different choices for $\freq$, the total number of colors used by the algorithm is $O((\log\Delta)^{O(1/\epsilon)}\Delta^{1+\epsilon}\cdot 2^l). $  
\end{proof}

Let us next state some basic properties of any forest $\tree_{\freq}$. 
\begin{lemma}
	Each palette is used by at most $2^{r+1} / \Delta^\epsilon$ different batches.
\end{lemma}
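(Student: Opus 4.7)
The plan is to prove by induction on the tree level $k$ that within any level-$k$ subtree of $\tree_\freq$, any fixed palette can appear as the assigned package $\clr^F$ of at most $f_0 / f_k$ leaves $F$ of that subtree. Applying this with $k = h$ (the root level) gives $f_0 / f_h = 2^{r+1} / \Delta^\epsilon$, and since different trees in the forest are allocated disjoint root color packages, palettes in different trees are disjoint, so we only need to count within a single tree.

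For the base case $k = 1$, fix a level-$1$ node $N$ with $f_0 = 2^{r+1}$ leaf children. The package $\clr^N$ is divided into $5 f_1$ subpackages $\clr^N_1, \ldots, \clr^N_{5f_1}$, each of size $25 \cdot 2^{l+r+2}$ (i.e., each is a palette). The assignment of these subpackages to the $f_0$ children is done by taking the first $f_0$ entries of a random permutation of the multi-set $(f_0 / f_1) \odot [5 f_1]$. Since the multi-set contains at most $f_0 / f_1$ copies of every index, each palette $\clr^N_j$ is assigned to at most $f_0 / f_1$ children, establishing the base case.

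For the inductive step, consider a level-$k$ node $N$ and a palette $\pi$. Because the subpackages $\clr^N_1, \ldots, \clr^N_{5f_k}$ are pairwise disjoint, $\pi$ is contained in at most one of them, say $\clr^N_{j^*}$. Hence a level-$(k-1)$ child of $N$ can have some leaf descendant assigned palette $\pi$ only if its own color package equals $\clr^N_{j^*}$. By the same permutation argument (with multi-set $(f_{k-1} / f_k) \odot [5 f_k]$), at most $f_{k-1} / f_k$ children of $N$ satisfy this. For each such child, the inductive hypothesis at level $k-1$ bounds the number of its leaf descendants carrying palette $\pi$ by $f_0 / f_{k-1}$. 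Multiplying, $N$ has at most $(f_{k-1} / f_k) \cdot (f_0 / f_{k-1}) = f_0 / f_k$ leaf descendants assigned $\pi$, completing the induction.

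The main obstacle, though mild, is articulating the key disjointness observation that a palette sits inside exactly one subpackage at every level, so the duplication at each level is controlled solely by the multi-set multiplicity $f_{k-1}/f_k$; everything else is a telescoping product. The final bound $f_0 / f_h = 2^{r+1} / \Delta^\epsilon$ follows immediately from $f_0 = 2^{r+1}$ and $f_h = \Delta^\epsilon$.
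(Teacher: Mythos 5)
Your proof is correct and is essentially the same argument as the paper's: both exploit the fact that the multi-set permutation at each level hands a fixed subpackage to at most $f_{k-1}/f_k$ children, and then telescope this across levels. The paper phrases it as a top-down count of how many level-$k$ nodes carry the palette (getting $f_k/\Delta^\epsilon$ at each level), while you phrase it as a per-subtree leaf count $f_0/f_k$; these are the same telescoping product seen from opposite ends, with the same key disjointness observation that a palette sits in exactly one subpackage at every level.
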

\begin{proof}
	Consider a palette $\clr^N$ allocated to a leaf node $N$. By the construction of color packages:
	\begin{itemize}
		\item At the root, exactly one package contains this palette, and $ f_{h - 1} / \Delta^\epsilon $ children of the root inherit the palette. Therefore, at level $h - 1$, the palette $\clr^N$ is used in at most $ f_{h - 1} / \Delta^\epsilon $ nodes.
		\item Suppose at level \(k\), the palette \(\clr^N\) is used in \( f_k / \Delta^\epsilon \) nodes. Each of these nodes has \( f_{k-1} / f_k \) children that inherit the palette. Thus, at level \(k-1\), the palette is used in at most \( f_{k-1} / \Delta^\epsilon \) nodes.
	\end{itemize}
	By induction, the palette \(\clr^N\) is used in at most \( 2^{r+1} / \Delta^\epsilon \) different batches at level 0, as required. 
\end{proof}

\begin{corollary}\label{cnt-bound}
	During the algorithm, the values of the counters $\cnt(\clr)$ never exceed $2^{r+1} / \Delta^\epsilon$ for any palette $\clr$.
\end{corollary}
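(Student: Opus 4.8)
The plan is to derive this directly from the preceding lemma together with the way colors were allocated on forests. First I would recall that in the color-allocation procedure each tree $T$ of each forest $\tree_\freq$ receives its own batch of fresh colors at its root, which are then recursively subdivided into smaller packages and finally into palettes, without ever mixing colors across distinct trees or across distinct frequency vectors. Consequently, every palette $\clr$ that the algorithm can ever use belongs to exactly one tree $T$ (inside one forest), and the only leaf nodes—equivalently, input batches—to which $\clr$ is allocated lie inside that tree $T$.

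Next I would unfold the definition of the counter: for a leaf node $N$, the value $\cnt(\clr^N)$ is by definition the number of leaf nodes strictly preceding $N$ in the batch ordering that are also allocated the palette $\clr^N$. By the previous lemma, the total number of leaf nodes assigned any fixed palette is at most $2^{r+1}/\Delta^\epsilon$; since $\cnt(\clr)$ only counts a strict subset of these (the earlier ones), we obtain $\cnt(\clr)\le 2^{r+1}/\Delta^\epsilon - 1 < 2^{r+1}/\Delta^\epsilon$ at every point during the execution, which is the claimed bound. Finally, I would note that the $\cnt(\cdot)$ values are functions of the (padded) batch sequence alone—as was already observed when they were defined—so this is a deterministic, worst-case bound that is insensitive to the actual edges inside each batch and to all random choices.

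I do not expect a genuine obstacle here: the statement is a bookkeeping corollary of the already-established fact that each palette is used by at most $2^{r+1}/\Delta^\epsilon$ batches. The one point that warrants a sentence of care is ruling out that the per-palette counter is inflated by contributions coming from other trees or other forests, and this is exactly what the first paragraph's observation—palettes never span more than a single tree—takes care of.
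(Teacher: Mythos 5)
Your proof is correct and matches the paper's (implicit) reasoning: the corollary is stated without proof because it follows immediately from the preceding lemma once one unfolds the definition of $\cnt(\clr^N)$ as the number of \emph{earlier} leaf nodes allocated the same palette. Your extra observation that palettes never span distinct trees or distinct forests, so that the counter cannot be inflated by unrelated allocations, is sound and consistent with how the color packages are constructed.
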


To bound the total space, it is clear that the bottleneck is storing all the marked sets \(M_{v, \freq}\), since all the forest structures and color assignments only require space \(O\brac{(\log\Delta)^{O(1/\epsilon)}\Delta}\).

\begin{lemma}
    If \Cref{inv} is satisfied, then at any point during the execution of the algorithm, for any given frequency vector $\freq$, the total size of marked sets $\sum_{v\in R}|M_{v, \freq}|$ is bounded by $O(2^h n)$. Consequently, the total space usage is bounded by $O((\log\Delta)^{O(1/\epsilon)} n)$.
\end{lemma}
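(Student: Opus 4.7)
The plan is a level-by-level charging argument that exploits both clauses of \Cref{inv}. Fix a level $k$ with $0 \le k \le h-1$. By \Cref{inv}(1), every level-$k$ mark across every $v \in R$ must be a child of the single level-$(k+1)$ node $W_k$ lying on the root-to-leaf path of the current batch in $\tree_\freq$; hence all such marks live inside one subtree $\tree_\freq(W_k)$, which contains $L_{k+1} = 2^{r+1}\prod_{i=1}^{k} f_i$ leaf batches. I would bound the total count of level-$k$ marks by charging each one to ``active'' vertex-batch pairs inside $\tree_\freq(W_k)$, where $(v, F')$ is called active when $\deg_{F'}(v) \in [2^r, 2^{r+1})$.

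For each level-$k$ marked node $N \in M_{v, \freq}$, \Cref{inv}(2) guarantees $\deg_U(v) \ge D_k := 2^{r-k}\prod_{i=1}^k f_i$, where $U$ is the union of batches under $\tree_\freq(N)$. Only active batches contribute to this degree and each contributes strictly less than $2^{r+1}$, so $N$ must cover at least $A_k := D_k / 2^{r+1}$ active batches of $v$. By \Cref{inv}(1) again, subtrees of distinct level-$k$ marks of the same $v$ are disjoint, so these active-batch sets are disjoint across marks of $v$. On the other hand, the total number of active pairs inside $\tree_\freq(W_k)$ is at most $L_{k+1}\, n / 2^r$, since each of the $L_{k+1}$ batches has at most $n$ edges and an active vertex absorbs at least $2^r$ of them. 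Combining,
\[
\sum_{v\in R} \bigl|\{N \in M_{v,\freq} : N \text{ on level } k\}\bigr| \;\le\; \frac{L_{k+1}\, n / 2^r}{A_k} \;=\; 2^{k+2}\, n,
\]
by direct substitution of the definitions of $L_{k+1}$, $D_k$, and $A_k$. Summing this over $k = 0, 1, \ldots, h-1$ yields $\sum_{v\in R} |M_{v,\freq}| = O(2^h n)$ for any single forest $\tree_\freq$.

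For the overall space bound, I would sum across all $(\log\Delta)^{O(1/\epsilon)}$ distinct frequency vectors, each contributing $O(2^{O(1/\epsilon)} n)$ space for its marked sets; this gives total marked-set size $O((\log\Delta)^{O(1/\epsilon)} n)$. This absorbs the $O((\log\Delta)^{O(1/\epsilon)} \Delta)$ space for the forest topologies, the random color-package assignments, and the per-vertex counters $r_u$ and $\cnt(\cdot)$, yielding the claimed bound. The main step to get right is the per-level charging: \Cref{inv}(1) is what confines all level-$k$ marks to a single subtree at every moment, keeping the reservoir of active pairs against which we charge small enough, while \Cref{inv}(2) supplies the per-mark lower bound $A_k$. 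Once these two are combined correctly, the rest is straightforward algebra on the forest parameters.
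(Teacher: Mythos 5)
Your high-level structure --- level-by-level charging, using Invariant (1) to confine the level-$k$ marks to the $f_k$ children of the level-$(k+1)$ node on the current root-to-leaf path, and Invariant (2) to get a per-mark lower bound --- is exactly the paper's approach, and the target of $O(2^k n)$ per level and $O(2^h n)$ overall is right. However, there is a gap in the intermediate step. You claim that ``only active batches contribute'' to the degree $\deg_U(v) \ge D_k$ guaranteed by Invariant (2), and conclude that each mark of $v$ must cover at least $A_k = D_k / 2^{r+1}$ batches $F'$ with $\deg_{F'}(v) \in [2^r, 2^{r+1})$. That does not follow: in Invariant (2), $U$ is the union of \emph{all} batches below the marked node, not only those where $v$ is active, so $\deg_U(v)$ can accumulate contributions from batches in which $\deg_{F'}(v)$ is above $2^{r+1}$ or below $2^r$, and the number of active batches covered by the mark could in principle be far smaller than $A_k$. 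Without a strengthened invariant that explicitly tracks active batches (rather than raw degree), the charge against the reservoir of ``active vertex-batch pairs'' is unsupported.

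The repair is to charge directly to edges, which is what the paper does. A level-$k$ node $N$ has $2^{r+1}\prod_{i=1}^{k-1}f_i$ batches in its subtree, hence at most $2^{r+1}\prod_{i=1}^{k-1}f_i\cdot n$ edges there, and Invariant (2) says each $v$ with a mark inside $\tree_\freq(N)$ accounts for at least $2^{r-k}\prod_{i=1}^{k}f_i$ of those edges; since edges incident on distinct $v\in R$ are disjoint, at most $2^{k+1}n/f_k$ vertices can have a mark inside $\tree_\freq(N)$. Summing over the $f_k$ candidate positions at level $k$ and then over levels gives $O(2^h n)$, with no detour through active pairs and no need for the unjustified claim.
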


\begin{proof}
    For any level-$k$ vertex $N \in \tree_\freq$, the subtree $ \tree_{\freq}(N) $ contains exactly $ 2^{r+1} \cdot \prod_{i=1}^{k-1} f_i $ batches, and thus $ 2^{r+1} \cdot \prod_{i=1}^{k-1} f_i \cdot n$ edges. 
    Let $U$ denote the union of all batches in $ \tree_{\freq}(N) $. For any vertex $v$, if $M_{v, \freq} \ni N$, by \Cref{inv}(2), we have $\deg_U(v) \geq 2^{r-k} \cdot \prod_{i=1}^k f_i$. 	
	Thus, the number of vertices $v$ such that $ M_{v, \freq} \ni N$ is bounded by: 
	$$
	\frac{2^{r+1} \cdot \prod_{i=1}^{k-1} f_i \cdot n}{2^{r-k} \cdot \prod_{i=1}^k f_i} = \frac{2^{k+1} \cdot n}{f_k}.
	$$

    Suppose the current working batch is $F$. By \Cref{inv}(1), any marked vertex is a child of a node on the root-to-leaf path $P$. Thus, among all vertices at level $k$, there are $f_k$ vertices that are candidates for being marked. 
	Taking the summation over all levels, the total size of $ \sum_{v \in R} |M_{v, \freq}| $ is bounded by
	$$
	\sum_{v \in R} |M_{v, \freq}|   
	\le \sum_{k = 0}^{h + 1}   \frac{2^{k + 1} \cdot n}{f_k} \cdot f_k    
	= O(2^h n). 
	$$
    Since the depth of the tree is at most $O(1/\epsilon)$, and there are $O( (\log \Delta)^{O(1 / \epsilon)})$ distinct frequency vectors $\freq$, the total space usage is therefore bounded by $O((\log\Delta)^{O(1/\epsilon)} n)$.
    \qedhere
\end{proof}
Now, our main focus will be on verifying all the properties in \Cref{inv}.
\begin{lemma}\label{verify-inv}
	\Cref{inv} is preserved by the algorithm throughout its execution.
\end{lemma}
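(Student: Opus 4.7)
The plan is to proceed by induction on the number of processed batches, showing that each of the four properties is preserved by the three phases of \Cref{alg-low-deg}: the preprocessing call to $\textsc{UpdateMarkSet}$, the color-assignment phase (which never touches marked sets), and the postprocessing step that adds $F$ to $M_{v, \freq_v}$. The base case before any batch arrives is vacuous, since every $M_{v, \freq}$ is empty.

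I would first dispatch the three easier properties. Property (4) is immediate by inspection: $\textsc{UpdateMarkSet}$, $\textsc{FreqVec}$, and the postprocessing read only the incoming batch $F$ and the current marked sets, never the random shifts $r_u$ or the random palette allocations $\clr^*$. Property (1) is preserved by construction, since $\textsc{UpdateMarkSet}$ explicitly elevates every marked node in an off-path subtree to the child-of-$P$ ancestor, and the postprocessing only adds the leaf $F$, which is on $P$ while every other marked node now sits strictly in a sibling subtree. Property (3) follows from the observation that a past batch $F'$ for which $v$ used colors in $\clr^{F'}$ under forest $\tree_\freq$ was added to $M_{v, \freq}$ at that time (this is exactly when the algorithm chose $\freq_v = \freq$), and subsequent calls to $\textsc{UpdateMarkSet}$ only elevate $F'$ to ancestors within the same tree component, so $F'$ remains inside $\bigcup_{N \in M_{v, \freq}} V(\tree_\freq(N))$ until $F$ and $F'$ fall into different tree components.

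The main work is property (2). The preprocessing phase is easy: $\deg_U(v)$ depends only on the input, so its value does not change, and since elevation only moves marked nodes upward, every node that has a marked descendant after preprocessing already had one before, and hence satisfied the bound by the IH. The crux is the postprocessing step, where ancestors of $F$ in $\tree_{\freq_v}$ may newly acquire a marked descendant. For a level-$k$ ancestor $N$ that already had a marked descendant before postprocessing, the bound carries over; for one whose first marked descendant is $F$ itself, I would invoke the selection rule of $\textsc{FreqVec}$. When $k = 0$ we simply have $\deg_F(v) \geq 2^r$, the marking precondition. For $k \geq 1$, the coordinate $f_k$ was chosen as the smallest admissible power of two, and since the recursion only terminates when $f_k = \Delta^\epsilon$, the candidate $f_k / 2$ must have been rejected, meaning every vector $\freq'$ with the committed prefix and next coordinate $f_k / 2$ has $M_{v, \freq'}$ containing at least $f_k / 2$ marked children of the level-$k$ ancestor of $F$ in $\tree_{\freq'}$. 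Applying the level-$(k-1)$ IH bound to each such marked child and summing yields $\deg_U(v) \geq (f_k / 2) \cdot 2^{r-(k-1)} \cdot \prod_{i=1}^{k-1} f_i = 2^{r-k} \cdot \prod_{i=1}^{k} f_i$, as required.

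The main obstacle is that the witness vector $\freq'$ produced by the rejection of $f_k / 2$ has a different $k$-th coordinate from the committed $\freq_v$, so the marked sets $M_{v, \freq'}$ and $M_{v, \freq_v}$ need not coincide a priori. Resolving this requires a careful bookkeeping step leveraging the fact that the tree topology up through level $k$ is determined entirely by the first $k-1$ coordinates (shared by $\freq'$ and $\freq_v$), and that whether a level-$(k-1)$ node is marked is governed by the historical selections $\freq_v^{(t)}$ at earlier batches, which again share those first $k-1$ coordinates. Once the marked-child count is transferred from $\freq'$ to $\freq_v$, the degree bound for $N$ follows, closing the induction.
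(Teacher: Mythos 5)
Your proposal matches the paper's proof structurally and in almost every detail: properties (1), (3), and (4) are dispatched by inspection, property (2) is handled by induction on time, pre-processing is harmless because it only elevates marked nodes, and for post-processing you verify the degree bound along the path from $F$ to the highest ancestor $W$ with no marked descendant, invoking the minimality of $f_k$ in $\textsc{FreqVec}$ to produce a witness vector $\freq'$ with at least $f_k/2$ marked children. The arithmetic also matches the paper's.

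The only refinement concerns your final paragraph: the ``main obstacle'' you raise is not an actual gap, and no transfer of marked-child counts between $M_{v,\freq'}$ and $M_{v,\freq_v}$ is required. Property (2) bounds $\deg_U(v)$ where $U$ is a union of batches, a quantity determined purely by the forest topology and the input stream. Since the topology through level $k$ depends only on the shared prefix $(f_1,\ldots,f_{k-1})$, the subtree $\tree_{\freq'}(N_k')$ consists of exactly the same batches as $\tree_{\freq_v}(N_k)$. Applying the inductive hypothesis (for $\freq'$) to the at least $f_k/2$ marked children of $N_k'$ and summing over their disjoint subtrees already yields $\deg_U(v)\geq (f_k/2)\cdot 2^{r-k+1}\prod_{i=1}^{k-1}f_i = 2^{r-k}\prod_{i=1}^{k}f_i$, and this is verbatim the statement property (2) requires for $N_k$ in $\tree_{\freq_v}$; there is no further bookkeeping to do.
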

\begin{proof}
	Property (1) is preserved in a straightforward manner by the way we maintain all the marked sets $M_{v, \freq}$ in the pre- and post-processing step for each input batch. Property (4) is also guaranteed because our updates to marked sets only depend on the input stream, not on the randomness used by the algorithm.
	
	As for property (3), according to the coloring algorithm, for any previous input batch $F'$ where $\deg_{F'}(v)\in [2^r, 2^{r+1})$, if vertex $v\in R$ used some colors associated with leaf node $F'$ in forest $\tree_{\freq}$, then the algorithm must have added $F'$ to $M_{v, \freq}$ back then. Then, according to the preprocessing rules, $F'$ would always be contained in the subtree of some marked as long as the current leaf node $F$ belongs to the same connected tree component as $F'$ in $\tree_{\freq}$.
	
	Next, we will focus on property (2). We will prove this by an induction on time. As the basis, property (2) holds trivially because all the marked sets are empty. Consider the arrival of any new input batch $F$ and any vertex $v\in R$ such that $\deg_F(v)\in [2^r, 2^{r+1})$. 
	
	First, we argue that the pre-processing step does not harm property (2). This is because the inequality in property (2) is required for all ancestors of any marked node (including itself), so if it held right before the arrival of $F$, then it should also hold after the pre-processing step as we are only raising the positions of marked nodes to their ancestors.
	
	Next, we consider the coloring step and the post-processing step. According to the coloring procedure, since $v$ will only use colors and mark nodes in forest $\tree_{\freq_v}$, we will only need to worry about property (2) in forest $\tree_{\freq_v}$. Before the post-processing step, let $P$ be the root-to-leaf path ending at leaf node $F$ in $\tree_{\freq_v}$, and let $W$ be the highest (in terms of levels) node on $P$ such that \(V(\tree_{\freq_v}(W))\cap M_{v, \freq_v} = \emptyset\), namely the subtree $\tree_{\freq_v}(W)$ does not contain any marked nodes; this node $W$ always exists because $F$ itself is a possible candidate.
	
	To prove property (2) after we add $F$ to $M_{v, \freq_v}$, we only need to verify the inequality for all nodes on $P$ between $W$ and $F$. List all these nodes as $F = N_0, N_1, N_2, \ldots, N_k = W$. We will prove the inequality of property (2) for each index $i = 0, 1, \ldots, k$.
	\begin{itemize}
		\item When $i = 0$, we already know that $\deg_F(v)\geq 2^r$, so the inequality holds.
		\item For any index $1\leq i\leq k$, consider the procedure that chose the coordinate $f_i$. Because of the minimality of $f_i$, we know that for some alternate frequency vector $\freq' = (f_1, f_2, \ldots, f_{i-1}, f_i / 2, \ldots)$, there are at least $f_i/2$ marked children of $N_i'$ (before the post-processing step), where $N_i'\in V(\tree_{\freq'})$ sits at the same topological position as $N_i$ (or in other words, $\tree_{\freq_v}(N_i)$ and $\tree_{\freq'}(N_i')$ are isomorphic). Let $U$ be the union of the batches which are all leaf nodes of $\tree_{\freq'}(N_i')$. Therefore, by our inductive assumption regarding property (2), we know that: 
		$$\deg_U(v)\geq  (f_i / 2)\cdot 2^{r - i+1}\cdot \prod_{j=1}^{i-1}f_j = 2^{r - i}\cdot \prod_{j=1}^{i}f_j$$
		This verifies the inequality at node $N_i$.\qedhere
	\end{itemize}
\end{proof}

Next, let us turn to the color assignment part. First, we need to verify that the algorithm never assigns the same color twice around the neighborhood of a single vertex.
\begin{lemma}
	In the output stream, the algorithm never prints the same color for two adjacent edges.
\end{lemma}
\begin{proof}
	First, let us rule out color conflicts for edges around the same vertex $u\in L$. This is rather straightforward, because according to the algorithm description, within each batch, we only assign tentative colors with distinct color indices around each vertex $u\in L$. For two different batches $F, F'$, if we happen to use the same color palette $\clr$ in $F, F'$ around the same vertex $u$, then the counter values $\cnt(\clr)$ must be different in these two batches. According to \Cref{cnt-bound} and that $r_u\in [3\cdot 2^{r+1}]$, the value of $\cnt(\clr_v+r_u)$ is always in the range $[3\cdot 2^{r+1}, 4\cdot 2^{r+1}]$. Together with the fact that $\deg_F(u), \deg_{F'}(u) < 2^{l+1}$, the algorithm must use distinct colors in $F, F'$ from $\clr$.
	
	Next, let us rule out color conflicts for edges around the same vertex $v\in R$. For any color palette provided by $\tree_{\freq_v}$ which was used before in some previous batch $F'$, according to \Cref{inv}(3), there must be a marked node $N'$ whose subtree contains $F'$. According to our coloring procedure, $\clr_v$ is nonempty only when $\clr^{N}$ is disjoint from all the color packages of its marked siblings $N'$, for any node $N$ on the root-to-leaf path to $F$ in $\tree_{\freq_v}$. Therefore, $v$ could not reuse any  palettes previously assigned. Furthermore, since we discard repeated colors within a single palette, the assigned colors must be distinct.
\end{proof}

Finally, let us prove that the algorithm successfully colors a good fraction of all edges in the input stream.
\begin{lemma}
	The total number of colored edges in the output stream is at least $\delta m$ in expectation, where $\delta = 2^{-O(1/\epsilon)}$.
\end{lemma}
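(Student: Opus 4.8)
The plan is to lower-bound, for each batch $F$, the expected number of edges in $F_{l,r}$ that survive with an actual color. There are two ways an edge $(u,v)$ with $\deg_F(v)\in[2^r,2^{r+1})$ can fail: (i) the palette $\clr_v$ is set to $\emptyset$ because the root-to-leaf path to $F$ in $\tree_{\freq_v}$ hits a marked sibling carrying the same color package; or (ii) $\clr_v\neq\emptyset$ but the tentative color $\kappa_i$ of $(u,v_i)$ collides with the tentative color of another edge incident on $v_i$ in this batch. I would bound the probability of each failure mode by a small constant using the randomness of the color-package permutations $\clr^*$ and the random shifts $r_u$, respectively, and crucially use \Cref{inv}(4): the marked sets $M_{v,\freq}$ (hence the frequency vectors $\freq_v$ and the counts $\cnt(\clr_v)$) are deterministic functions of the input stream, so they are independent of both sources of randomness.

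For failure mode (i), fix $v$ and its (deterministic) frequency vector $\freq_v$, and walk down the path $P$ from root to leaf $F$ in $\tree_{\freq_v}$. At a level-$k$ node $N$ on $P$ with child $N'$ on $P$, by the choice of $f_k$ in $\textsc{FreqVec}$ the set $M_{v,\freq_v}$ contains fewer than $f_k$ children of $N$; since $N'$ inherits a uniformly random package among $5f_k$ packages (each appearing $f_{k-1}/f_k$ times in the random permutation), and conditioning on which packages the $<f_k$ marked siblings hold, the chance that $N'$'s package equals one of those is at most $f_k/(5f_k)=1/5$ per level. A union bound over the $\le h = O(1/\epsilon)$ levels is too weak, so instead I would argue the events across levels are independent (the permutation used at each level is drawn independently) and bound $\Pr[\clr_v=\emptyset]$ by $1-(4/5)^{h}=1-2^{-O(1/\epsilon)}$; alternatively, redo the accounting to show the total expected loss is $2^{-O(1/\epsilon)} m_{l,r}$, which still leaves $\delta m_{l,r}$ colored with $\delta=2^{-O(1/\epsilon)}$. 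Either way this is a constant depending only on $\epsilon$, which is what the statement allows.

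For failure mode (ii), condition on everything deterministic and on the package randomness, so that $\clr_v$ and $\cnt(\clr_v)$ are fixed for every $v$. Consider $v$ with neighbors $u_1,\dots,u_d$ in $F_{l,r}$, $d<2^{r+1}$. Edge $(u_j,v)$ gets index $\kappa_j \equiv 5\cdot 2^{l+1}(\cnt(\clr_v)+r_{u_j}) + i_j \pmod{25\cdot 2^{l+r+2}}$ where $i_j<2^{l+1}$ is $v$'s local index inside $u_j$'s neighbor list. Two such indices coincide only if $5\cdot 2^{l+1}(r_{u_j}-r_{u_{j'}}) \equiv i_{j'}-i_j \pmod{25\cdot 2^{l+r+2}}$; since $|i_{j'}-i_j|<2^{l+1}$ and the left side is a multiple of $5\cdot 2^{l+1}$, this forces $r_{u_j}\equiv r_{u_{j'}} \pmod{5\cdot 2^{r+1}}$, i.e. $r_{u_j}=r_{u_{j'}}$ as both lie in $[3\cdot 2^{r+1}]$. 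The $r_u$ are i.i.d.\ uniform on a set of size $3\cdot 2^{r+1}>d$, so by a standard birthday/linearity argument the expected number of $j$ whose color is killed by a collision is at most $\binom{d}{2}/(3\cdot 2^{r+1}) \le d/6$, and hence at least $\tfrac56 d$ of the $d$ edges keep distinct tentative colors in expectation. Combining with (i), each edge incident on a valid $v$ survives with probability $\ge \tfrac56\cdot(4/5)^{O(1/\epsilon)} = 2^{-O(1/\epsilon)}$; summing over all batches and all $v$ with $\deg_F(v)\in[2^r,2^{r+1})$ (which by definition cover all of $F_{l,r}$) gives the claimed $\delta m_{l,r}$ in expectation, and the global bound $\delta m$ follows by summing over batches (restricting attention to $F_{l,r}$, so $m$ here should read $m_{l,r}$ as in the statement of \Cref{low}).

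The main obstacle is the interaction between the two randomness sources and the conditioning needed to apply linearity of expectation cleanly: I must invoke \Cref{inv}(4) to fix $\freq_v$, $\clr_v$, and $\cnt(\clr_v)$ before using the independence of the $r_u$'s, and separately argue that the per-level package choices that govern failure (i) are mutually independent so that the survival probability does not degrade below a fixed $2^{-O(1/\epsilon)}$. The modular arithmetic in (ii) — in particular checking that the factor $5\cdot 2^{l+1}$ and the range $[3\cdot 2^{r+1}]$ are chosen exactly so that an index collision is equivalent to an equality of shifts — is the one computation I would actually carry out carefully; everything else is bookkeeping.
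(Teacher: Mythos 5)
Your proposal follows the paper's proof essentially verbatim: same two-failure-mode decomposition (palette $\clr_v$ empty vs.\ tentative-color collision), same invocation of \Cref{inv}(4) to decouple the deterministic marked sets from the two sources of randomness, same level-by-level analysis giving a $(4/5)^{O(1/\epsilon)}$-type bound for the first mode, and the same conclusion that a per-batch, per-vertex constant fraction of edges in $F_{l,r}$ survives. You are in fact slightly more careful than the paper in two places the paper treats implicitly---spelling out the modular arithmetic showing a tentative-color collision at $v$ forces $r_{u_j}=r_{u_{j'}}$, and noting that the per-level package events are independent (rather than just multiplying probabilities)---but these are refinements of, not departures from, the paper's argument.
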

\begin{proof}
	Fix any input batch $F$ and any vertex $v\in R$ such that $\deg_F(v)\in [2^r, 2^{r+1})$, it suffices to lower bound the expected number of colored edges in $F_{l, r}$ incident on $v$.
	
	First, we need to analyze the probability that the color palette $\clr_v$ is nonempty, based on the randomness of the distribution of color packages on forest $\tree_{\freq_v}$. As before, let $P$ denote the root-to-leaf path in $\tree_{\freq_v}$ ending at $F$, and let $F = N_0, N_1, \ldots, N_h$ be all the nodes on the tree path. According to the selection of the frequency vector $\freq_v$, for any $0\leq k\leq h-1$, $N_k$ has less than $f_{k + 1}$ marked siblings. By design, the color packages of $N_{k + 1}$ are determined by the length-$f_{k}$ prefix of a random permutation $(f_{k} / f_{k + 1})\odot [5f_{k + 1}]$. Therefore, by the independence guarantee from \Cref{inv}(4), the probability that $\clr^{N_k}$ does not conflict with the color packages of any other marked siblings is at least $(1 - 1/(5f_{k + 1}))^{f_{k + 1}} \ge  4/5$. By applying this bound over all levels, the probability that $\clr_v$ is nonempty is at least $(4/5)^{h} \geq (4/5)^{1/\epsilon}$.
	
	Next, conditioning on the event that $\clr_v\neq \emptyset$, let us analyze the amount of edges in $F_{l, r}$ that are colored around $v$. Let $u_1, u_2, \ldots, u_k$ be the neighbors of $v$ in graph $(V, F_{l, r})$, where $k < 2^{r+1}$. 
	For any fixed $1\leq i\leq k$, as $r_{u_i}$ was chosen uniformly at random from $[3\cdot 2^{r+1}]$, the probability that $r_{u_i}\neq r_{u_j}$ for all $j\neq i$ is at least $(1 - 1 / (3 \cdot 2^{r + 1}))^k \ge 2/3$. This ensures that the tentative colors between $u_i$ and $v$ survive in the output stream with probability at least $2/3$.
	
	Overall, the expected number of colored edges in $F_{l, r}$ would be at least $(2/3)\cdot (4/5)^{1/\epsilon}\cdot |F_{l, r}|$, which concludes the proof.
\end{proof}

\subsection{Proof of \Cref{high}} \label{subsec:high}
Without loss of generality, assume $\Delta$ is a power of 2. 
\subsubsection{Data Structures}  
At the beginning, for each vertex $u \in L$, draw a random number $s_u \in [\Delta / 2^l]$ uniformly at random. For each vertex $v \in R$, draw a random number $t_v \in [\Delta / 2^r]$ uniformly at random. 

As in the preliminary steps, the input stream is divided into batches of size $n$ (except for the last one). For each $u\in L$, let $\cnt(u)$ count the number of previous batches $F$ where $\deg_{F}(u)\in [2^l, 2^{l+1})$, and symmetrically let $\cnt(v)$ count the number of previous batches $F$ where $\deg_F(v)\in [2^r, 2^{r+1})$ for $v\in R$.

Additionally, we will use a palette matrix $\mat$ of size $\frac{\Delta}{2^l} \times \frac{\Delta}{2^r}$, where each entry in $\mat[i,j]$ corresponds to a distinct palette of size $\Delta_0$, with $\Delta_0 \triangleq \left\lceil 4\cdot \left( \frac{2^{l+r+1}}{\Delta} + 1 \right) \right\rceil$. All the colors can be represented as integers in the range $\left[\frac{\Delta_0\cdot\Delta^2}{2^{l+r}}\right]$ naturally.

\subsubsection{Algorithm Description}
Let us describe the coloring procedure upon the arrival of a new input batch $F$. For each vertex $u \in L$, propose a tentative row index $x_u = (s_u + \cnt(u)) \mod \frac{\Delta}{2^l}$. For each vertex $v \in R$, propose a tentative column index $y_v = (t_v + \cnt(v)) \mod \frac{\Delta}{2^r}$. For each edge $(u, v) \in F$, we will assign a color from the matrix $\mat[x_u, y_v]$ in the following manner.

Let $E_{x, y}$ be the set of edges $(u, v) \in F_{l, r}$ where $(x_u, y_v) = (x, y)$, and let $G_{x, y}$ be the subgraph whose edge set is $E_{x, y}$. To color $G_{x, y}$ with only $\Delta_0$ colors, we need to prune it so that its maximum degree does not exceed $\Delta_0$, which is done in this way: for each edge $(u, v) \in E_{x, y}$, if $\max\{\deg_{E_{x, y}}(u), \deg_{E_{x, y}}(v)\} > \Delta_0$, mark it as $\bot$ (uncolored) and remove it from $G_{x, y}$. Finally, since $G_{x, y}$ is a bipartite graph, we can apply the $\Delta_0$-edge coloring algorithm \cite{cole2001edge} to color $G_{x, y}$ using the palette $\mat[x, y]$ which has size $\Delta_0$.

Finally, for each vertex $u\in L$ such that $\deg_F(u)\in [2^l, 2^{l+1})$, increment the counter $\cnt(u)$ by $1$; also, increment the counters for $v\in R$ in a symmetric way. The whole algorithm is summarized in \Cref{alg-high-deg}.

\begin{algorithm}
    \SetNoFillComment
    \caption{$\textsc{ColorHighDeg}(F)$}\label{alg-high-deg}
    define $x_u \leftarrow s_u + \cnt(u) \mod \Delta / 2^l, \forall u\in L$\;
    define $y_v \leftarrow t_v + \cnt(v) \mod \Delta / 2^r, \forall v \in R$\;
    define $E_{x, y} = \{(u, v)\in F_{l, r}\mid (x_u, y_v) = (x, y)\}, \forall (x, y)$\;
    \For{every pair $(x, y)$ and edge $(u, v)\in E_{x, y}$}{
        \tcc{prune $G_{x,y}$ to cap its maximum degree}
        remove edge $(u, v)$ from $E_{x, y}$ if $\max\{\deg_{E_{x, y}}(u), \deg_{E_{x, y}}(v)\} > \Delta_0$\;
    }
    \For{every pair $(x, y)$}{
        use palette $\mat[x, y]$ to color $E_{x, y}$\;
    }
    increment counters $\cnt(*)$\;
\end{algorithm}

\subsubsection{Proof of Correctness}

\paragraph*{Proper Coloring.}  
To prove that the colored edges form a proper coloring, we need to show that for any two distinct edges $e_1 = (u, v_1)$ and $e_2 = (u, v_2)$ sharing a common vertex $u$, their colors are different. Since the algorithm is symmetric for $L$ and $R$, we can assume $u\in L$. There are several cases below.

\begin{itemize}
	\item If $e_1$ and $e_2$ use different matrix entries in $\mat$, their colors are already distinct.
	\item Suppose both edges use palette $\mat[x, y]$. Then, there are two sub-cases below.
	\begin{itemize}
		\item If $e_1$ and $e_2$ belong to different batches $F_1$ and $F_2$ with batch counters $\cnt^{(1)}(u)$ and $\cnt^{(2)}(u)$, we have $x \equiv s_u + \cnt^{(1)}(u) \equiv s_u + \cnt^{(2)}(u) \pmod {\Delta / 2^l}$. Since $\deg_F(u) \in [2^l, 2^{l + 1})$, $\cnt_u$ never exceeds $\Delta / 2^l$. Thus, $\cnt^{(1)}(u) = \cnt^{(2)}(u)$, which leads to a contradiction.
		\item If $e_1$ and $e_2$ belong to the same batch, they belong to the same subgraph $G_{x, y}$. The correctness of the offline coloring algorithm guarantees they get distinct colors.
	\end{itemize}
\end{itemize}

\paragraph*{Space Usage.}  
For each vertex $u$, we maintain its batch counter $\cnt_u$ and random shift $s_u$ for $u \in L$ (or $t_v$ for $v \in R$), requiring $O(n)$ space in total. During the batch coloring process, we also store the indices $x_u$ and $y_v$, which require additional $O(n)$ space. Furthermore, each subgraph $G_{x, y}$ has at most $n$ edges, so the offline coloring algorithm requires $O(n)$ space. These spaces are reused across different subgraph coloring processes and different batches. Therefore, the overall space complexity is $O(n)$.

\paragraph*{Number of Colors.}  
The total number of colors is given by
$$
\frac{\Delta}{2^l} \cdot \frac{\Delta}{2^r} \cdot \Delta_0 = O \left( \Delta + \frac{\Delta^2}{2^{l+r}} \right).
$$

Next, we show that at least half of the edges get colored in expectation.

\begin{lemma}\label{lemma:high-deg-uncolored}
	During the algorithm, at least a $1/2$ fraction of the edges are colored in expectation.
\end{lemma}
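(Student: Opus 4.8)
# Proof Proposal for Lemma~\ref{lemma:high-deg-uncolored}

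\textbf{Overall approach.} The plan is to analyze, for a fixed batch $F$ and a fixed edge $e = (u,v) \in F_{l,r}$, the probability that $e$ survives both pruning steps, and show this probability is at least $1/2$ (or bounded below by a constant that I can push to $1/2$ by tuning $\Delta_0$). Summing over all edges and all batches and applying linearity of expectation then gives the claim. The only randomness in play is the random shifts $\{s_u\}_{u \in L}$ and $\{t_v\}_{v \in R}$; the counters $\cnt(u), \cnt(v)$ are deterministic functions of the stream, so $x_u = (s_u + \cnt(u)) \bmod (\Delta/2^l)$ is uniform on $[\Delta/2^l]$ and $y_v$ is uniform on $[\Delta/2^r]$, and moreover the collection $\{x_u\}_{u \in L} \cup \{y_v\}_{v \in R}$ restricted to vertices appearing in $F_{l,r}$ is mutually independent.

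\textbf{Key steps.} First, fix $e = (u,v) \in F_{l,r}$ and condition on the value of $(x_u, y_v) = (x,y)$; the bucket $E_{x,y}$ that $e$ lands in is then determined by which \emph{other} endpoints hash to row $x$ or column $y$. The edge $e$ is discarded only if $\deg_{E_{x,y}}(u) > \Delta_0$ or $\deg_{E_{x,y}}(v) > \Delta_0$. I will bound $\Exp[\deg_{E_{x,y}}(u) \mid x_u = x]$: the neighbors of $u$ in $F_{l,r}$ number fewer than $2^{l+1}$, and each such neighbor $v'$ contributes to $E_{x,y}$ only if $y_{v'} = y$, which happens with probability $2^r / \Delta$ (and these events are independent across the $v'$'s, none of which is $u$). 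Hence $\Exp[\deg_{E_{x,y}}(u) \mid x_u = x] \le 2^{l+1} \cdot \frac{2^r}{\Delta} = \frac{2^{l+r+1}}{\Delta}$. Symmetrically $\Exp[\deg_{E_{x,y}}(v) \mid y_v = y] \le \frac{2^{l+r+1}}{\Delta}$ since $v$ has fewer than $2^{r+1}$ neighbors each landing in row $x$ with probability $2^l/\Delta$. Now recall $\Delta_0 = \lceil 4(\frac{2^{l+r+1}}{\Delta} + 1) \rceil \ge 4 \cdot \frac{2^{l+r+1}}{\Delta}$, and crucially $\Delta_0 \ge 4$, so $\Delta_0$ is at least four times the conditional expectation of each of these degrees \emph{plus} a genuine additive slack. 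By Markov's inequality applied to $\deg_{E_{x,y}}(u) - 1$ (which counts neighbors \emph{other} than along $e$ itself landing in column $y$) — or more simply to $\deg_{E_{x,y}}(u)$ directly — we get $\Pr[\deg_{E_{x,y}}(u) > \Delta_0 \mid x_u = x] \le \frac{2^{l+r+1}/\Delta}{\Delta_0} \le \frac{1}{4}$, and likewise $\Pr[\deg_{E_{x,y}}(v) > \Delta_0 \mid y_v = y] \le \frac{1}{4}$. A union bound gives that $e$ is discarded with probability at most $1/2$ conditioned on $(x,y)$, hence unconditionally, so $e$ is colored with probability at least $1/2$. Summing over all $e$ in all batches finishes the proof.

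\textbf{Main obstacle.} The subtle point — and the step I would write most carefully — is the independence structure: when I condition on $x_u = x$ and bound $\deg_{E_{x,y}}(u)$, I am summing indicators $\mathbf{1}[y_{v'} = y]$ over neighbors $v'$ of $u$, and I need these to be independent of the conditioning event $x_u = x$ and of each other. This holds because $v' \in R$ and $u \in L$ have independent shifts, and distinct $v', v''$ have independent shifts; the only care needed is that a vertex can appear as an endpoint with a degree \emph{outside} $[2^l, 2^{l+1})$ in this batch (in which case its counter isn't incremented, but its shift is still a fixed uniform value) — this does not break independence, it only affects the deterministic counter bookkeeping, which is irrelevant here. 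A secondary subtlety is making sure Markov's inequality is applied to a nonnegative random variable and that the "$+1$" slack in $\Delta_0$ is what absorbs the contribution of the edge $e$ itself to the degree count (so that the bound is genuinely on the \emph{other} incident edges); writing $\deg_{E_{x,y}}(u) \le 1 + \sum_{v' \ne v} \mathbf{1}[y_{v'} = y]$ and using $\Delta_0 - 1 \ge 4 \cdot \frac{2^{l+r+1}}{\Delta} \ge 4\,\Exp[\sum_{v' \ne v}\mathbf{1}[y_{v'}=y]]$ makes this clean. Everything else is routine.
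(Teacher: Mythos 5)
Your proposal is correct and follows essentially the same route as the paper's own proof: fix the edge $e=(u,v)$, condition on the bucket $(x,y)=(x_u,y_v)$, note that the other neighbors of $u$ (resp. $v$) land in column $y$ (resp. row $x$) independently with probability $2^r/\Delta$ (resp. $2^l/\Delta$), apply Markov's inequality against the threshold $\Delta_0$, and union bound the two bad events. The only difference is cosmetic: the paper writes $\Pr[\deg_{E_{x,y}}(u)\ge\Delta_0]\le\frac{(2^{l+1}-2)\cdot 2^r/\Delta}{\Delta_0}$, silently absorbing the $+1$ contribution from $e$ itself, whereas you make the decomposition $\deg_{E_{x,y}}(u)\le 1+\sum_{v'\ne v}\mathbf{1}[y_{v'}=y]$ explicit and use the additive $+4$ slack in $\Delta_0$ to absorb it; this is the more careful bookkeeping, and the paper's step is justified precisely by your observation that $\Delta_0-1\ge 4\cdot 2^{l+r+1}/\Delta$.
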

\begin{proof}
    For any edge $(u, v)\in F_{l, r}$ such that $u\in L, v\in R$, we estimate the probability that $(u, v)$ is colored. 
    Define $x = x_u, y = y_v$. It suffices to lower bound the probability that $\deg_{E_{x, y}}(u), \deg_{E_{x, y}}(v) \leq \Delta_0$.

    Let $(u, v_1), (u, v_2), \ldots, (u, v_k)\in F_{l, r}, k<2^{l+1}-1$ be all edges incident on $u$ other than $(u, v)$. Since $G$ is a simple graph, all the vertices $v_1, v_2, \ldots, v_k$ are distinct and are different from $v$. Since $y_{v_i}$ is uniformly distributed in $[\Delta / 2^r]$, the probability that $y_{v_i} = y$ is at most $2^r / \Delta$. Using Markov's inequality, we have:
    $$\Pr[\deg_{E_{x, y}}(u)\geq \Delta_0]\leq \frac{(2^{l+1}-2)\cdot 2^r / \Delta}{\Delta_0} \leq 1/4$$
    Symmetrically, we can argue that $\Pr[\deg_{E_{x, y}}(v)\geq \Delta_0] \leq 1/4$. Hence, the probability that $(u, v)$ remains in $E_{x, y}$ after the pruning procedure would be at least $1/2$. This concludes the proof.
\end{proof}

\section{Derandomization via Bipartite Expanders}

In this section we will de-randomize \Cref{rand} and prove \Cref{main}. By the reductions from the preliminary section, it suffices to prove the following two statements.

\begin{lemma}\label{det-low}
	Fix a parameter $\epsilon > 0$. Given a graph $G = (V, E)$ on $n$ vertices with maximum vertex degree $\Delta$, for any constant $\epsilon > 0$, and fix an integer pair $(l, r)$, there is a deterministic W-streaming algorithm that outputs a coloring of all edges in $F_{l, r}$. The algorithm uses $O\brac{(\log\Delta)^{O(1 / \epsilon)}\cdot(1/\epsilon)^{O(1/\epsilon^3)}\cdot \Delta^{1+2\epsilon}\cdot 2^l}$ colors and $O\brac{n\cdot (\log n)^{O(1 / \epsilon^4)}}$ space.
\end{lemma}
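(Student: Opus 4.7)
The plan is to follow the template of the proof of \Cref{low} in Section 3.1 while replacing the two sources of randomness with explicit constructions based on unbalanced bipartite expanders. The two random objects to be derandomized are: (i) the shifts $r_u \in [3 \cdot 2^{r+1}]$ attached to each vertex $u \in L$, which guaranteed that the tentative colors around any $v \in R$ are mostly distinct; and (ii) the random permutations used in the top-down allocation of color sub-packages to sibling nodes at each level of every forest $\tree_\freq$, which guaranteed that the palette selected for the current batch is disjoint from the palettes of marked siblings. The forest structures, \Cref{inv}, the marked sets $M_{v, \freq}$, and the procedures \textsc{UpdateMarkSet}, \textsc{FreqVec}, and \textsc{ColorLowDeg} are all retained with only the randomness sources swapped out.

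For (ii), at every level $k+1$ of every forest $\tree_\freq$, instead of applying a fresh random permutation of $(f_k/f_{k+1}) \odot [5 f_{k+1}]$ to assign sub-packages to the $f_k$ children of a level-$(k+1)$ node, I would use an explicit unbalanced bipartite expander whose left side enumerates the children and right side enumerates the sub-packages. The expansion property ensures that for any adversarial set of fewer than $f_{k+1}$ marked children, the set of sub-packages they collectively occupy misses most other children, so in particular the child corresponding to the current batch receives a conflict-free palette deterministically. To absorb the gap between this deterministic guarantee and the $4/5$-per-level success of a truly random permutation, I would inflate the number of sub-packages (and hence the color budget at that level) by a factor of $(1/\epsilon)^{O(1/\epsilon^2)}$; accumulated across $O(1/\epsilon)$ levels this produces the $(1/\epsilon)^{O(1/\epsilon^3)}$ blow-up, and the extra $\Delta^\epsilon$ factor acts as additional slack to simplify the expander parameters.

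For (i), I would replace the single random shift $r_u \in [3 \cdot 2^{r+1}]$ with a deterministic code of length polylogarithmic in the number of batches, whose coordinates are shift values in $[3 \cdot 2^{r+1}]$ derived from an explicit unbalanced expander whose left side is $L$. Different coordinates are used in different contexts, indexed by the palette currently in use and its counter value $\cnt(\clr)$. By expansion, for any vertex $v \in R$ with at most $2^{r+1}$ neighbors in a single batch, the shift values accessed for these neighbors are pairwise distinct in the relevant context, which is stronger than the $2/3$ probability achieved by a union bound in the randomized version and is what allows the lemma to output a \emph{full} coloring of $F_{l, r}$ rather than merely a constant fraction. Storing these codes for every $u \in L$ accounts for the $(\log n)^{O(1/\epsilon^4)}$ blow-up in space.

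The main obstacle will be making the expander parameters compose cleanly across the $O(1/\epsilon)$ levels inside a single forest and across the $(\log\Delta)^{O(1/\epsilon)}$ different forests, so that the losses remain additive in the exponents rather than multiplicative; the parameters for (i) and (ii) also need to be chosen in a compatible way so that the total color blow-up does not exceed $(1/\epsilon)^{O(1/\epsilon^3)} \cdot \Delta^{2\epsilon}$. A secondary subtlety is that the marked sets $M_{v, \freq}$ evolve adversarially as batches arrive, so the expansion guarantees must hold simultaneously against every possible evolution; this is precisely the defining property of unbalanced vertex expansion for all small left-subsets, but one still has to verify that the adversary induced by the input stream only produces marked sets consistent with \Cref{inv}. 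Once these choices are pinned down, the correctness and accounting arguments parallel those of Section 3.1, with each probabilistic bound replaced by a deterministic expansion-based bound, and the color and space counts aggregated over frequency vectors as before.
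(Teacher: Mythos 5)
Your high-level strategy — replace the random shifts and random sub-package permutations with explicit unbalanced bipartite expanders, and observe that determinism upgrades the partial coloring of \Cref{low} to a full coloring — correctly matches the paper's plan. However, the mechanism is underspecified in a way that would prevent the argument from closing.

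The central gap is in part (ii). You propose to use a bipartite expander whose left side indexes the $f_k$ children of a node and whose right side indexes sub-packages, and to argue that expansion lets the current child avoid the sub-packages "occupied" by fewer than $f_{k+1}$ marked siblings. But vertex expansion of a small left set $S$ bounds $|N(S)|$ from \emph{below}; it does not guarantee that a \emph{fixed} left vertex $s \notin S$ still has a free neighbor once the members of $S$ have each claimed a neighbor. For that you need the sub-package choices of the marked siblings to have been made \emph{online and greedily}, so that Hall's condition (applied to a union of $O(\log|B|)$ independent copies of the expander, which is \Cref{perfect} in the paper) guarantees the matching always extends. This dynamic palette selection is the load-bearing combinatorial tool; it is not equivalent to a static labeling plus inflating the palette count, and it forces an additional data structure (the tuples $\overrightarrow{c_N}$ recording which sub-package each marked node claimed) and the corresponding consistency invariant (\Cref{det-inv}(4)), neither of which appears in your proposal.

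Part (i) has a related issue: you propose a per-$u$ "code" whose coordinate is selected by the pair (palette, counter). In the paper, the right-side label of $u$ is determined instead by an explicit perfect matching in $H^{\freq_v}$ over \emph{all} of $v$'s neighbors in the current batch, so the label of $u$ depends on the identity of $v$ and on $v$'s other neighbors, not just on the palette and its counter. The color index $\kappa_i=(t_i,\cnt(\clr_v)+r_i,j)$ then combines the matching position $t_i$, the counter-shifted matched label $\cnt(\clr_v)+r_i$, and the local edge index $j$; distinctness around $u$ across batches is argued via the property that $t_i$ determines $r_i$ and counters differ without wraparound (\Cref{palette-cnt}). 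With a fixed palette/counter-indexed code it is not clear how to simultaneously enforce distinctness around $u$ and around $v$. Finally, the blow-up you attribute to "inflating sub-packages by $(1/\epsilon)^{O(1/\epsilon^2)}$ per level" does not match the paper's accounting: the $(1/\epsilon)^{O(1/\epsilon^3)}$ and $(\log\Delta)^{O(1/\epsilon)}$ factors both come from the single calibration parameter $\lambda=\ceil{\log^{2+3/\delta}n\cdot(2+3/\delta)^{2+3/\delta}}$ (with $\delta=\epsilon^2/10$) raised to the $\Theta(h)$-th power, which is what makes the multiplicity-code expander parameters in \Cref{multi} and the matching bound in \Cref{perfect} go through.
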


\begin{lemma}\label{det-high}
	Fix a parameter $\epsilon > 0$. Given a graph $G = (V, E)$ on $n$ vertices with maximum vertex degree $\Delta$, and fix an integer pair $(l, r)$, there is a deterministic W-streaming algorithm that outputs a coloring of all edges in $F_{l, r}$. The algorithm uses $O\brac{\Delta^{1+\epsilon} + \Delta^{2+\epsilon} / 2^{l+r}}$ colors and $O\brac{n \cdot (\log n)^{{O(1/\epsilon^3)}}}$ space.
\end{lemma}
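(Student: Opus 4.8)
The plan is to derandomize the algorithm from \Cref{high} (the color-table approach) by replacing the two families of uniform random shifts --- $s_u\in[\Delta/2^l]$ for $u\in L$ and $t_v\in[\Delta/2^r]$ for $v\in R$ --- with the left-vertex neighborhoods of a suitable unbalanced bipartite expander, and then handling the small blow-up by recursing on the uncolored residual graph. The key obstacle, already flagged in the technical overview, is that in the streaming execution each vertex $u$ does not query a single fixed shift but the \emph{shifted} value $(s_u+\cnt(u))\bmod(\Delta/2^l)$, where $\cnt(u)$ is a stream-dependent counter; an arbitrary expander need not behave well under these coordinate-wise cyclic shifts, which is why I would instantiate the construction from multiplicity codes \cite{kalev2022unbalanced}, whose neighborhood function is a low-degree polynomial map and hence (after the right encoding of indices) commutes appropriately with additive shifts. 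So the first step is to set up the expander: with left vertex set identified with $L$ (padded to a prime power), right side of size $\mathrm{poly}(\log n)^{O(1/\epsilon)}$ groups, left-degree $D=(\log n)^{O(1/\epsilon^3)}$, and expansion guaranteeing that any set of $\le 2^{l+1}$ left vertices has neighborhood of size $\ge (1-\epsilon/10)\cdot D\cdot 2^{l+1}$; do the symmetric thing for $R$. Then redefine $x_u$ to range over $D$ copies of the column index space and likewise $y_v$, so the palette matrix $\mat$ grows by a factor $D^2=(\log n)^{O(1/\epsilon^3)}$ in the number of colors and the space grows by the same factor --- this is where the $(\log n)^{O(1/\epsilon^3)}$ space bound and the $\Delta^\epsilon$ overhead on colors come from.

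The second step is the pruning/coloring analysis, mirroring \Cref{lemma:high-deg-uncolored} but with the expander replacing Markov's inequality. Fix an edge $(u,v)\in F_{l,r}$ and a batch $F$; I want to show that after intersecting neighborhoods and pruning, a constant fraction of edges survive. Concretely, for the $<2^{l+1}$ neighbors of any $u$ lying in one batch, the expander mixing property bounds the number of collisions $y_{v_i}=y_{v_j}$ by $O(\epsilon)\cdot D\cdot 2^{l+1}$ (vertices in $R$ whose $D$ expander-neighbors overcrowd a column), so the expected degree in each $E_{x,y}$ stays below $\Delta_0$ for all but an $O(\epsilon)$ fraction of the copies; symmetric reasoning on the $R$-side, plus a union bound over the two sides, shows at least a $1-O(\epsilon)$ fraction of the $D$ copies of each edge land in a low-degree $G_{x,y}$. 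We only need to keep \emph{one} surviving copy per edge, so a constant (indeed $1-O(\epsilon)$) fraction of $F_{l,r}$ gets colored deterministically in each pass.

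The third step is the recursion to go from partial to full coloring, which is deterministic and routine: since each pass colors a $1-O(\epsilon)\ge \Omega(1)$ fraction of the remaining edges, after $O(\log\Delta)$ passes the residual graph is empty; re-running with fresh colors each time multiplies the color count by $O(\log\Delta)$, absorbed into the stated bound. One subtlety I would address carefully: the counters $\cnt(u),\cnt(v)$ still never exceed $\Delta/2^l,\Delta/2^r$ respectively (the argument from the proper-coloring proof of \Cref{high} is unchanged), so the multiplicity-code neighborhood map is evaluated only on a bounded range of shifts, and I need the expansion guarantee to hold \emph{uniformly} over all such shifts --- this is precisely the property of the multiplicity-code expander I would invoke as a black box, and verifying that its parameters can be tuned to left-degree $(\log n)^{O(1/\epsilon^3)}$ with the required expansion for sets of size up to $\Delta$ is the one place where I expect to have to do real work rather than cite. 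Finally, collecting the color count $\frac{\Delta}{2^l}\cdot\frac{\Delta}{2^r}\cdot\Delta_0\cdot D^2 = O(\Delta^{1+\epsilon}+\Delta^{2+\epsilon}/2^{l+r})$ and the space $O(n\cdot(\log n)^{O(1/\epsilon^3)})$ (the $D^2$ blow-up on the matrix plus the per-batch offline coloring) completes the proof.
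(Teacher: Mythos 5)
Your proposal captures the right high-level tension --- the stream-dependent counters $\cnt(u)$ interact badly with an arbitrary expander, so multiplicity codes are the right instantiation --- but the way you resolve that tension is substantively different from the paper, and the proposal as written has gaps that would not close easily.

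\textbf{The counter trick is the wrong one.} You postulate that the multiplicity-code neighborhood map ``commutes appropriately with additive shifts'' of the query point. The paper does not argue or need anything of the sort: instead it folds the counter \emph{into the polynomial itself}, defining $h_u(X)=g_u(X)+X^{\lceil\log_q(\Delta/2^l)\rceil}f_u(X)$ where $g_u$ encodes $\cnt(u)$ and $f_u$ encodes the identity of $u$. Distinct counter values give distinct polynomials, and injectivity of $\Gamma(h,\cdot)$ across batches follows directly from linearity of derivatives plus the degree bound on $g_u$. No compatibility of the code with cyclic shifts is needed. Your version would require proving that the $\Gamma$ map respects a group action on its first argument's index, which is not a stated property of the Kalev--Ta-Shma construction and I do not think holds.

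\textbf{Mixing vs.\ matching.} Your collision-bounding argument is an averaging argument: make $D$ copies, show that for most copies the per-cell degree is below $\Delta_0$, prune, and keep one surviving copy per edge. The paper does something stronger and cleaner: it groups the $<2^{l+1}$ neighbors of $u$ into blocks of size $\le\Delta/2^r$, invokes the $(K,D)$-expansion with $D>1$ and Hall's theorem to get a \emph{perfect matching} of each block into $B_2$, and sets $t_e$ to the matched right vertex. This gives, deterministically, $\deg_{F^{(s,t)}_{l,r}}(u)\le\lceil 2^{l+r+2}/\Delta\rceil$ for \emph{every} $(s,t)$ and \emph{every} $u$, with no pruning, no surviving fraction, and no copies. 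Consequently the paper colors \emph{all} of $F_{l,r}$ in one pass, which is what \Cref{det-high} actually claims; your proposal only colors a $1-O(\epsilon)$ fraction and then recurses, which is a weaker statement than the lemma. (The recursion could be patched in, at a $\log\Delta$ cost, but then you are no longer proving the stated lemma verbatim.)

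\textbf{The unfinished step is the crux.} You flag yourself that verifying the expansion parameters under arbitrary counter shifts ``is the one place where I expect to have to do real work rather than cite.'' That is exactly where the proposal breaks: the mixing-style bound you sketch is per-vertex, and you would need a simultaneous guarantee over both endpoints of every edge to show an $\Omega(1)$ deterministic surviving fraction; you gesture at a union bound over the two sides but never argue that the set of ``bad copies'' of $u$ and of $v$ cannot jointly cover all $D^2$ cells for some edge. The Hall's-theorem route avoids this entirely. I would advise re-deriving the argument with the polynomial-encoding-of-counters idea and the perfect-matching bound on per-cell degree; that is both simpler and gives the full, rather than partial, coloring the lemma requires.
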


\begin{proof}[Proof of \Cref{main}]
	Basically, \Cref{det-low} deals with the case when $\min\{l, r\}$ is small, and \Cref{high} deals with the case when $\min\{l, r\}$ is large. For any $(l, r)$, if $\min\{2^l, 2^r\}\leq \Delta^{1/3}$, then by \Cref{det-low}, the number of colors is at most $O\brac{(\log\Delta)^{O(1 / \epsilon)}\cdot(1/\epsilon)^{O(1/\epsilon^3)}\cdot \Delta^{4/3+\epsilon}}$ (after scaling down $\epsilon$ by $2$), and otherwise by \Cref{det-high} the number of colors would be $O\brac{\Delta^{4/3+\epsilon}}$. Either way, the total number of colors over all $(l, r)$ would can be bounded by $O\brac{(\log\Delta)^{O(1 / \epsilon)}\cdot(1/\epsilon)^{O(1/\epsilon^3)}\cdot \Delta^{4/3+\epsilon}}$.
\end{proof}

\subsection{Bipartite Expanders via Multiplicity Codes}
We will use explicit constructions of unbalanced bipartite expanders from \cite{kalev2022unbalanced} for de-randomization. For \Cref{det-low} we will only use bipartite expanders in a black-box manner, but for \Cref{det-high} we will also have to leverage some properties of the multiplicity code itself. 

\begin{definition}[bipartite expanders]
	Given a bipartite graph $H = (A\cup B, I)$, $H$ is a $(K, D)$-expander if for every $S\subseteq A$ of size $K$, the number of different neighbors of $S$ in $B$ is at least $K\cdot D$.
\end{definition}

Let $q$ be a prime number and $\field_q$ be the corresponding finite field. Let $a, b$ be two integers. Any vector in $\field_q^a$ can be interpreted as a uni-variate polynomial over $\field_q$, that is, as an element $f\in \field_q^{<a}[X]$. Define a mapping $\Gamma: \field_q^a\times \field_q\rightarrow \field_q^{b+2}$ as $\Gamma(f, x) = \left(x, f(x), f^{(1)}(x), \ldots, f^{(b)}(x)\right)$, where $f^{(i)}$ is the $i$-th iterated derivative of $f$ in $\field_q[X]$. 

\begin{lemma}[multiplicity codes are bipartite expanders \cite{kalev2022unbalanced}]\label{multi}
	Construct a bipartite graph $H = (A\cup B, I)$ where $A = \field_q^a, B = \field_q^{b+2}$, and for each vertex in $A$ which is represented as a uni-variate polynomial $f\in \field_q^{<a}[X]$, connect vertex $f$ to neighbors $\Gamma(f, x)\in B, \forall x\in \field_q$. Then, under the condition that $15\leq b+2\leq a\leq q$, $H$ is a $(K, D)$-expander for every $K>0$ and $D = q - \frac{a(b+2)}{2}\cdot (qK)^{\frac{1}{b+2}}$.
\end{lemma}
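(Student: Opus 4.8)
\textbf{Proof proposal for \Cref{multi}.}

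The plan is to reduce the expansion property of the multiplicity-code graph $H$ to a list-decoding-type bound: I want to show that if a set $S\subseteq A$ of size $K$ had fewer than $K\cdot D$ distinct neighbors in $B$, then one of the polynomials in $S$ would have to ``agree too much'' with the combined information coming from the other polynomials, contradicting a degree count. Concretely, suppose $S=\{f_1,\dots,f_K\}$ and let $N$ be the set of distinct neighbors, with $|N| = \sum_{x\in\field_q} |\{f_j(x),f_j^{(1)}(x),\dots,f_j^{(b)}(x)) : j\in[K]\}|$ counting for each $x\in\field_q$ how many distinct length-$(b+2)$ ``local jets'' occur. If $|N| < K D$, then on average over $x\in\field_q$ the number of distinct jets is less than $KD/q$, so there are many points $x$ and many pairs $(i,j)$ for which $f_i$ and $f_j$ share the same jet at $x$, i.e. $f_i - f_j$ vanishes to order $b+1$ at $x$.

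The key steps, in order, would be: (1) Set up the double counting: for each $x\in\field_q$ let $t_x$ be the number of distinct jets among $f_1(x),\dots$ at $x$; then $|N|=\sum_x t_x$, and the number of collision pairs at $x$ is at least $\binom{K}{2}$ minus something controlled by $t_x$ — more precisely the number of ordered pairs $(i,j)$, $i\ne j$, with the same jet at $x$ is at least $K^2/t_x - K \ge K^2(1/t_x) - K$. Summing and using convexity ($\sum_x 1/t_x \ge q^2/\sum_x t_x = q^2/|N|$) gives a lower bound of roughly $K^2 q^2/|N| - Kq$ on the total number of (point, ordered-pair) incidences where a difference $f_i-f_j$ vanishes to order $b+1$. (2) On the other hand, for a fixed pair $i\ne j$, the nonzero polynomial $f_i-f_j$ has degree $<a$, so it can vanish to order $b+1$ at most $\lfloor (a-1)/(b+1)\rfloor < a/(b+2)$ times (each such root costs $b+1$ in the degree; here I'd use $b+1 \le b+2$ to keep the arithmetic clean and match the stated bound). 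Hence the total incidence count is at most $K(K-1)\cdot \frac{a}{b+2} < K^2\cdot\frac{a}{b+2}$. (3) Combine: $K^2 q^2/|N| - Kq \le K^2 a/(b+2)$, rearrange to $|N| \ge \frac{K q^2}{a K/(b+2) + q} = \frac{Kq}{aK/(q(b+2)) + 1}$, and then check that this is at least $K\cdot D$ with $D = q - \frac{a(b+2)}{2}(qK)^{1/(b+2)}$; this last verification is the one genuinely fiddly inequality, where the exponent $1/(b+2)$ enters — I expect the clean way is to instead bound $t_x$ from below using a sharper local argument (a jet-collision at $x$ among $m$ polynomials forces, via a Vandermonde/Wronskian-type or Schwartz–Zippel-in-the-derivative argument, that $\binom{m}{2}$ differences vanish at $x$ to order $b+1$), and then feed that into a global count that naturally produces the $(qK)^{1/(b+2)}$ term rather than the crude $1/t_x$ convexity bound.

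Actually, the cleanest route — and the one I would commit to — is to bound, for each $x$, the quantity $K - t_x$ (the ``deficiency''): I claim $\sum_{x}(K - t_x)$ counts pairs-with-collisions with multiplicity, and each pair $(i,j)$ contributes at most $a/(b+2)$ to it (again by the degree bound on $f_i-f_j$ vanishing to order $b+1$), but the contribution is also controlled more efficiently when many polynomials collide simultaneously. Specifically, if at point $x$ the $K$ polynomials fall into $t_x$ jet-classes of sizes $m_1,\dots,m_{t_x}$, then $\sum_\ell \binom{m_\ell}{2} \ge \binom{K/t_x}{2}\cdot t_x \gtrsim K^2/(2t_x)$ collision pairs occur at $x$, each consuming $b+1$ from a degree budget of $a-1$; summing over $x$ and over pairs and solving the resulting inequality for the worst case (all $t_x$ equal to some common $t$) yields $t \le \frac{a(b+2)}{2}(qK)^{1/(b+2)}\cdot\frac{1}{q}\cdot$ (lower-order), hence $|N| = \sum_x t_x \ge q\cdot(K - t) \ge K\cdot\bigl(q - \frac{a(b+2)}{2}(qK)^{1/(b+2)}\bigr) = KD$ after using $K \ge 1$ to absorb constants, provided $b+2\le a\le q$ keeps all the floor/ceiling slack negligible.

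The main obstacle I anticipate is precisely the passage in step (3): getting the exponent $1/(b+2)$ and the exact constant $a(b+2)/2$ to come out requires choosing the double-counting weights correctly — a naive Markov/convexity bound gives a weaker $D$ — so the heart of the argument is the local collision lemma (a jet-collision among $m$ polynomials at a single point $x$ costs at least $\binom{m}{2}(b+1)$ total degree across the pairwise differences, sharpened by noting these are differences of a common family so the collisions are ``nested''). I would isolate that as a sub-lemma, prove it by a Wronskian / iterated-derivative argument over $\field_q$ (valid since $a\le q$ so characteristic is large enough that the relevant derivatives don't degenerate), and then the global count plus solving the inequality is mechanical. The side conditions $15\le b+2\le a\le q$ are exactly what is needed to make the rounding and the ``large characteristic'' steps go through, so I would invoke them only at those two points.
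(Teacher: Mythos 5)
The paper does not prove this lemma at all; it is invoked as a black-box citation of Kalev and Ta-Shma, so there is no in-paper argument to compare against. Evaluating your proposal on its own merits, the collision-counting skeleton cannot produce the stated bound, and the step you flag as ``fiddly'' is where it breaks irreparably rather than where it merely needs care.

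Concretely: your double count, done carefully, gives $\sum_{x}(K-t_x)\le\sum_x\sum_\ell\binom{m_\ell^{(x)}}{2}\le\binom{K}{2}\cdot\frac{a-1}{b+1}$ (since a nonzero difference $f_i-f_j$ of degree $<a$ vanishes to order $b+1$ at fewer than $a/(b+1)$ points, and $m-1\le\binom{m}{2}$), hence $|\Gamma(S)|=\sum_x t_x\ge K\bigl(q-\frac{(K-1)(a-1)}{2(b+1)}\bigr)$. The deficiency $q-D$ here is \emph{linear} in $K$. Your proposed refinement (assume a uniform $t_x=t$ and balanced class sizes $m_\ell\approx K/t$) yields $q\cdot\frac{K^2-Kt}{2t}\le\binom{K}{2}\frac{a-1}{b+1}$, i.e. $t\ge\frac{qK(b+1)}{Ka+q(b+1)}$, which is again a deficiency that is linear in $K$ once $Ka\gtrsim q(b+1)$. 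Neither produces the claimed $(qK)^{1/(b+2)}$, and the assertion that ``solving the resulting inequality for the worst case ... yields $t\le\frac{a(b+2)}{2}(qK)^{1/(b+2)}/q$'' is not what the arithmetic gives. This is not a constant-factor or rounding issue: the lemma needs to deliver nontrivial expansion for $K$ up to roughly $(2q)^{b+2}/\bigl(q\,a^{b+2}(b+2)^{b+2}\bigr)$ (this is exactly the threshold exploited in \Cref{perfect}), whereas a deficiency-linear-in-$K$ bound is vacuous already once $K$ is on the order of $q^{1/(b+1)}$. The ``nested collisions / Wronskian'' sub-lemma you would isolate is the entire content of the theorem, you do not supply it, and I do not believe it can be supplied within the pairwise-collision framework: pairwise degree budgets intrinsically give $K^2$-type incidence counts, never $K^{1+1/(b+2)}$.

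The actual proof is a list-recovery/interpolation argument in the Guruswami--Umans--Vadhan and Kopparty lineage, which is a genuinely different technique. Assuming $|\Gamma(S)|<KD$, one interpolates a single nonzero polynomial $Q(X,Y_0,\dots,Y_b)$ over $\field_q$ with carefully weighted degree constraints so that $Q$ vanishes on all of $\Gamma(S)$; then for each $f\in S$ the univariate polynomial $Q\bigl(X,f(X),f^{(1)}(X),\dots,f^{(b)}(X)\bigr)$ has degree $<q$ yet vanishes at every $x\in\field_q$, hence is identically zero; and a final algebraic step bounds the number of $f\in\field_q^{<a}[X]$ that can satisfy that polynomial identity, contradicting $|S|=K$. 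The exponent $1/(b+2)$ comes from balancing the $X$-degree of $Q$ against its degree in the $Y$-variables in the interpolation step --- a global trade-off that a local, pairwise argument cannot see. If you want to actually prove the lemma, you would need to abandon the collision count and rebuild the argument around this interpolation polynomial.
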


We will also need an upper bound on the degrees on the right-hand side of the bipartite expanders.
\begin{lemma}\label{deg}
	Let $H = (A\cup B, I)$ be the above bipartite graph by multiplicity codes. Then, for any $v\in B$, $\deg_H(v)\leq q^{a-b-1}$. 
\end{lemma}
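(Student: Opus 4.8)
The plan is to count, for a fixed right-vertex $v = (x_0, y_0, y_1, \ldots, y_b) \in B = \field_q^{b+2}$, how many left-vertices $f \in \field_q^{<a}[X]$ satisfy $\Gamma(f, x) = v$ for some $x \in \field_q$. Since the first coordinate of $\Gamma(f,x)$ is $x$ itself, the only $x$ that can possibly map to $v$ is $x = x_0$; so a left-vertex $f$ is a neighbor of $v$ exactly when $f^{(i)}(x_0) = y_i$ for all $0 \le i \le b$. Thus $\deg_H(v)$ equals the number of polynomials $f$ of degree less than $a$ whose first $b+1$ Hasse/iterated derivatives at the single point $x_0$ are prescribed to be $y_0, \ldots, y_b$.

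First I would observe that the map $f \mapsto (f^{(0)}(x_0), f^{(1)}(x_0), \ldots, f^{(b)}(x_0))$ is $\field_q$-linear on the $a$-dimensional space $\field_q^{<a}[X]$, so the fiber over any prescribed value $(y_0, \ldots, y_b)$ is either empty or an affine subspace, and in all cases has size at most $q^{a - \mathrm{rank}}$, where $\mathrm{rank}$ is the rank of this linear map. Next I would argue that the map is surjective onto $\field_q^{b+1}$ whenever $b+1 \le a$ (which holds here since $b+2 \le a$): indeed, expanding $f$ in the Taylor/Hasse basis around $x_0$, namely $f(X) = \sum_{j \ge 0} c_j (X - x_0)^j$, the iterated derivative $f^{(i)}(x_0)$ picks out a fixed nonzero multiple of $c_i$ (over a field of characteristic that may be small one must use Hasse derivatives, but $\cite{kalev2022unbalanced}$ already works in that setting, so $f^{(i)}$ denotes the Hasse derivative and $f^{(i)}(x_0) = c_i$ exactly). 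Hence the $b+1$ constraints fix the coefficients $c_0, \ldots, c_b$ completely, while $c_{b+1}, \ldots, c_{a-1}$ remain free — that is $a - (b+1) = a - b - 1$ free coordinates. Therefore the fiber has size exactly $q^{a-b-1}$, giving $\deg_H(v) \le q^{a-b-1}$ as claimed.

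The only real subtlety — and the step I would be most careful about — is the characteristic-$p$ issue: ordinary derivatives can vanish identically (e.g. $X^p$), so one must confirm that $\Gamma$ in $\cite{kalev2022unbalanced}$ uses Hasse derivatives, for which $f^{(i)}(x_0)$ is literally the $i$-th Taylor coefficient at $x_0$ and the linear map above is genuinely the coordinate projection onto $(c_0,\ldots,c_b)$, with rank exactly $b+1$. Granting that, everything else is a one-line linear-algebra count. I would write the proof in the form: fix $v$, note $x$ is forced, reduce to counting preimages of the derivative-evaluation map, invoke the Hasse-derivative Taylor expansion to see this map is a surjection $\field_q^{<a}[X] \twoheadrightarrow \field_q^{b+1}$ with all fibers of size $q^{a-b-1}$, and conclude.
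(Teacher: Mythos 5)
Your argument is correct and is essentially the paper's proof: both reduce the count to the fiber of the $\field_q$-linear map sending $f$ to its derivative values at the forced point $x_0$, and then bound the fiber by $q^{a-b-1}$ once the system is seen to have rank $b+1$ over the $a$ coefficients. The paper simply asserts this system is full rank, while you justify it via the Hasse/Taylor-coefficient description; that extra care is warranted (and, for what it is worth, ordinary iterated derivatives also give full rank in this regime, since $b<a\le q$ forces every $i!$ with $i\le b$ to be a unit in $\field_q$).
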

\begin{proof}
	It suffices to show that for any vector $(x, x_0, x_1, \ldots, x_b)\in \field_q^{b+2}$, there exists at most $q^{a-b-1}$ polynomials in $\field_q^{<a}[X]$ such that $\left(f(x), f^{(1)}(x), \ldots, f^{(b)}(x)\right) = (x_0, x_1, \ldots, x_b)$. Since this equation is a full-rank linear equation on $\field_q$ over the coefficients of $f$ with $a$ variables and $b+1$ equality constraints, the number of solutions is at most $q^{a-b-1}$.
\end{proof}

\paragraph*{Online Perfect Matching in Bipartite Expanders.}
For the convenience of algorithm description, let us first extract a building block of our main algorithm here which is called \emph{online perfect matching in bipartite expanders}. In this problem, we can set up a bipartite expander $H = (A\cup B, I)$ of our own, and then an adversary picks a sequence of vertices $u_1, u_2, \ldots, u_K$ one by one. Every time a vertex $u_i$ is revealed to the algorithm, we need to irrevocably match $u_i$ to a vertex $v_i$ not previously matched, and we wish to make $K$ as large as possible.

\begin{lemma}\label{perfect}
	For any integer parameters $a, b, q$ such that $15\leq b+2\leq a\leq q$, there is a deterministic construction of a bipartite graph $H = (A\cup B, I)$, $|A| = q^a, |B|= q^{b+2}\cdot \ceil{(b+2)\log_2 q}$, such that the online perfect matching on $H$ can be solved as long as the number of online vertices in $A$ is at most $K\leq \frac{(2q-2)^{b+2}}{q\cdot a^{b+2}(b+2)^{b+2}}$. Plus, the algorithm only takes space $O(K)$.
\end{lemma}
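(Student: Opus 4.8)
\textbf{Proof proposal for \Cref{perfect}.}

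The plan is to use a fresh copy of the multiplicity-code expander from \Cref{multi} for each ``scale'' of the matching problem, and run a greedy matching on each copy while peeling off saturated right-vertices. Concretely, I would set $H' = (A\cup B', I')$ where $A = \field_q^a$ and $B' = B\times [t]$ with $B = \field_q^{b+2}$ and $t = \ceil{(b+2)\log_2 q}$; a vertex $u\in A$ is connected to $(w, s)$ for every $w\in \Gamma(u, \cdot)$-neighbor $w\in B$ and every $s\in [t]$. This only blows up the right side by the factor $t$, matching the claimed bound $|B'| = q^{b+2}\cdot t$. The idea behind the $t$ extra copies is a ``layered greedy'': think of $B'$ as $t$ disjoint layers, each a copy of $B$; when an online vertex $u_i$ arrives, scan layers $1, 2, \ldots, t$ in order and match $u_i$ to the first layer in which $u_i$ has a neighbor (under the base expander $H$) that is still free in that layer.

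The key step is to argue that this greedy never gets stuck as long as $K\le \frac{(2q-2)^{b+2}}{q\cdot a^{b+2}(b+2)^{b+2}}$. For a single layer (a single copy of the $(K,D)$-expander $H$), a standard argument shows greedy matching succeeds for the first $D/?$ arrivals; more precisely, I would show that after processing any prefix $u_1,\dots,u_i$, if $u_{i+1}$ cannot be matched in layer $j$, then every neighbor of $u_{i+1}$ in layer $j$ is already taken, i.e. all of $u_{i+1}$'s $q$ neighbors lie in the matched set of that layer. Taking $S$ to be $u_{i+1}$ together with a carefully chosen set of size $K$ of previously-arrived vertices matched inside layer $j$, the expansion guarantee $|N(S)|\ge K\cdot D$ from \Cref{multi} with the chosen parameters forces $D$ to be small, contradicting the lower bound $D = q - \frac{a(b+2)}{2}(qK)^{1/(b+2)}$ once $K$ satisfies the stated bound (which is exactly engineered so that $(qK)^{1/(b+2)}\le \frac{2q-2}{a(b+2)}$, hence $D\ge q/2 \ge 1$ with room to spare). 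The cleanest way to package this: show by a counting/expansion argument that in each layer greedy can absorb at least $D$ vertices before saturating, and since $tD \ge K$ for our parameter regime (using $D\ge q/2$ and $t = \ceil{(b+2)\log_2 q}$, with $K$ at most roughly $q^{b+2}$, one checks $t\cdot(q/2) \cdot (\text{free fraction})$ suffices — actually the honest statement is that the total matching capacity across $t$ layers, $t\cdot\lfloor D\rfloor$ if each layer only guarantees $D$, must exceed $K$), all $K$ online vertices get matched. I would double-check the exact inductive invariant — ``layer $j$ has at most (something) matched vertices before it refuses a vertex'' — since that is where the bookkeeping is most delicate.

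For the space bound: we never store $H'$ explicitly (it has $q^a$ left-vertices); instead, when $u_i$ arrives we compute its $q$ neighbors on the fly via $\Gamma$, and we maintain only the set of matched right-vertices, which has size at most $K$ (one per online vertex). Looking up whether a given right-vertex is free is a membership query in this set. So the working memory is $O(K)$ field elements, i.e. $O(K)$ as claimed (treating a field element as $O(1)$ words, consistent with the rest of the paper's accounting). The main obstacle I anticipate is getting the constants in the inductive greedy argument to line up cleanly with the precise threshold $K\le \frac{(2q-2)^{b+2}}{q\, a^{b+2}(b+2)^{b+2}}$: one has to be careful that the set $S$ fed into the expander guarantee has size exactly $K$ (or that the guarantee is applied at the right $K'\le K$, using that $(K',D)$-expansion follows from $(K,D)$-expansion for the same $D$ when the bound on $D$ is monotone), and that the layering genuinely multiplies the capacity by $t$ rather than leaving some layers underused. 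Everything else — computing $\Gamma$, the derivatives, membership queries — is routine.
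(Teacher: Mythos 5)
Your construction is exactly the paper's: take $t = \lceil (b+2)\log_2 q\rceil$ disjoint copies $H_1,\dots,H_t$ of the multiplicity-code expander from \Cref{multi} on a common left side $A$, and for each arriving $u_i$ scan the layers in order and match $u_i$ to a free neighbor in the first layer where one exists. So the setup and the space argument are fine. The gap is in the core counting argument — the step you yourself flagged as needing a double-check.

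You propose to show that ``each layer can absorb at least $D$ vertices before saturating,'' and to conclude via a per-layer capacity bound $t\cdot\lfloor D\rfloor \ge K$. That inequality is simply false in the relevant parameter regime: with $D \ge q/2$ and $t = \lceil (b+2)\log_2 q\rceil$ you get $t\lfloor D\rfloor = \Theta(q\log q)$, whereas the threshold on $K$, namely $K \le \frac{(2q-2)^{b+2}}{q\,a^{b+2}(b+2)^{b+2}}$, scales like $q^{b+1}$ for fixed $a,b$. Since $b+2\ge 15$ forces $b\ge 13$, we have $K \gg t\lfloor D\rfloor$, so an additive capacity argument cannot work. There is also no clean ``layer $j$ saturates only after $D$ arrivals'' invariant: once one vertex fails in a layer, later ones may still succeed there, so the layers do not behave like buckets with a fixed capacity.

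The correct argument is multiplicative, not additive, and you are one observation away from it. The reason $D\ge 1$ matters is not to give a capacity of $D$, but to give Hall's condition: for any subset $S$ of the $k\le K$ vertices reaching layer $j$, the $(k,D_k)$-expansion from \Cref{multi} (with $D_k \ge D_K \ge 1$) yields $|N(S)|\ge |S|$, so there is a perfect matching of all $k$ vertices into layer $j$. The online greedy produces a maximal matching in that layer, and a maximal matching has at least half the size of a maximum matching; hence at most $k/2$ vertices remain unmatched after layer $j$. Inductively, at most $K/2^j$ vertices survive past layer $j$, and since $K < q^{b+2} \le 2^t$ this count drops below $1$ after $t$ layers, proving every online vertex is matched. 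You should replace your ``$t\lfloor D\rfloor\ge K$'' plan with this halving argument; with it, the constants line up exactly as stated and there is no delicate bookkeeping left.
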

\begin{proof}
	To do this, we will make multiple copies of the bipartite expander based on multiplicity codes in \Cref{multi}. More specifically, fix any parameter $a, b$ and prime $q$ such that $15\leq b+2\leq a\leq q$, for every index $1\leq j\leq \lceil (b+2)\log_2 q\rceil$, construct a bipartite expander $H_j = (A\cup B_j, I_j)$ where $A = \Gamma_q^a, B = \Gamma_q^{b+2}$ using the construction from \Cref{multi}. After that, define $B = \bigcup_{j=1}^{\lceil (b+2)\log_2 q\rceil} B_j$, $I = \bigcup_{j=1}^{\lceil (b+2)\log_2 q\rceil} I_j$, and $H = (A\cup B, I)$.
	
	To run an online perfect matching on graph $H$, we use the simple greedy matching approach. Basically, for each upcoming vertex $u_i\in A$ selected by the adversary, we find the smallest index $1\leq j\leq \lceil (b+2)\log_2 q\rceil$ such that there exists a neighbor of $u_i$ in $H_j$ not matched with previous online vertices $u_1, \ldots, u_{i-1}$. We show that this approach can accommodate a large number of online vertices in a perfect matching.
	
	\begin{claim}
		If $K \leq \frac{(2q-2)^{b+2}}{q\cdot a^{b+2}(b+2)^{b+2}}$, then all vertices $u_1, \ldots, u_K$ are matched in $H$ by the greedy algorithm.
	\end{claim}
	\begin{proof}[Proof of claim]
		It suffices to prove by an induction that the number of vertices in $u_1, u_2, \ldots, u_K$ not matched in $\bigcup_{i=1}^j H_j$ is at most $K / 2^j$. This is proved by an induction on $j\geq 0$ ($H_0 = \emptyset$). The basis $j = 0$ holds trivially. In general, suppose $u_1', u_2', \ldots u_k'$ is a sub-sequence of $u_1, u_2, \ldots, u_K$ which are not matched in $\bigcup_{i=1}^{j-1}H_i$. According to \Cref{multi}, for any $k\leq K \leq \frac{(2q-2)^{b+2}}{q\cdot a^{b+2}(b+2)^{b+2}}$, the graph $H_j$ is a $(k, D)$-expander where $D = q - \frac{a(b+2)}{2}\cdot (qK)^{\frac{1}{b+2}} \geq q - (q-1) = 1$. Therefore, according to Hall's theorem, there exists a perfect matching which matches all vertices $u_1', u_2', \ldots u_k'$ in $H_j$. Since greedy matching is a $1/2$-approximation of maximum matching, the algorithm must have matched at least $k/2$ of $u_1', u_2', \ldots u_k'$ in $H_j$. By induction, there are at most $k/2 \leq K/2^j$ vertices which are not matched in $\bigcup_{i=1}^j H_j$.
	\end{proof}
\end{proof}

\subsection{Proof of \Cref{det-low}}
Let $\delta = \epsilon^2/10$ be a small constant. For convenience, let us assume $1/\delta$ is an integer. If $\Delta < \log^{20/\epsilon^2\delta} n$, then a $O(n\log^{200/\epsilon^4}n)$-space algorithm can store the whole input graph in memory and print a $\Delta$-edge coloring in the output stream. For the rest, let us assume $\Delta \geq \log^{20/\epsilon^2\delta}n$.

\subsubsection{Data Structures}
We will reuse all the forest data structures defined previously in \Cref{Btree}, so we will not describe them again here. However, the color package allocations on those forests will be different because previously we have used randomization for this part.

\paragraph*{Deterministic Color Allocation on Forests.} For any choice of the frequency vector $\freq = (f_1, f_2, \ldots, f_h)$, we will allocate color packages at each tree node of each forest $\tree_\freq$ in a deterministic manner. Take a prime number $q\in [\Delta^\delta, 2\Delta^\delta)$. By construction, each forest structure $\tree_\freq$ is a forest of $h+1$ levels (from level-$0$ to level-$h$), and each tree is rooted at a level-$h$ node. For each coordinate $f_i$ of the frequency vector $\freq = (f_1, f_2, \ldots, f_h)$, define some parameters below:
$$\begin{aligned}
	\lambda &= \ceil{\log_2^{2+3/\delta}n\cdot (2 + 3/\delta)^{2+3/\delta}}\\
	b_0 &= \left\lceil\log_q(\lambda\cdot 2^{r+1})\right\rceil\\
	b_i &= \left\lceil\log_q(\lambda\cdot f_i)\right\rceil, \forall 1\leq i\leq h\\
	\lambda_i &= \lambda\cdot \ceil{(b_i+2)\cdot\log_2q}, \forall 1\leq i\leq h
\end{aligned}$$

Here are some basic estimations of these parameters.
\begin{lemma}\label{param-ineq}
    For any $0\leq i\leq h$, we have $b_i\leq 3/\delta$.
\end{lemma}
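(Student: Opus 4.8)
\textbf{Proof plan for \Cref{param-ineq}.}

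The plan is to bound each $b_i$ directly from its definition together with the two lower bounds on $\Delta$ that we may assume, namely $\Delta \ge \log^{20/\epsilon^2\delta} n$ and the choice $\delta = \epsilon^2/10$. First I would dispatch the generic case $1 \le i \le h$: by definition $b_i = \ceil{\log_q(\lambda \cdot f_i)}$, and since $f_i \le 2^{r+1} \le 2\Delta$ (recall $2^r \le \Delta$) while $\lambda = \ceil{\log_2^{2+3/\delta} n \cdot (2+3/\delta)^{2+3/\delta}}$ is only polylogarithmic in $n$, the argument $\lambda \cdot f_i$ is at most $\mathrm{poly}(\log n)\cdot \Delta$. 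The same estimate covers $b_0 = \ceil{\log_q(\lambda \cdot 2^{r+1})}$ since $2^{r+1} \le 2\Delta$ as well, so it suffices to handle all indices uniformly via the bound $\lambda \cdot f_i \le \lambda \cdot 2\Delta$.

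The key step is then to compare this against $q^{3/\delta}$, using $q \ge \Delta^\delta$. We have $q^{3/\delta} \ge \Delta^{3}$, so it is enough to check that $\lambda \cdot 2\Delta \le \Delta^{3}$, i.e. $2\lambda \le \Delta^2$. Since $\lambda = O\!\left(\log^{2+3/\delta} n\right)$ and, by assumption, $\Delta \ge \log^{20/\epsilon^2\delta} n = \log^{2/\delta^2} n$ (using $\epsilon^2 = 10\delta$), we get $\Delta^2 \ge \log^{4/\delta^2} n$, which dominates $2\lambda = O(\log^{2+3/\delta} n)$ once $n$ is larger than an absolute constant (the exponent $4/\delta^2$ exceeds $2+3/\delta$ for all $\delta \le 1$). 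Hence $\lambda \cdot f_i \le \lambda \cdot 2\Delta \le \Delta^3 \le q^{3/\delta}$, so $\log_q(\lambda f_i) \le 3/\delta$, and since $3/\delta$ is an integer (as $1/\delta$ is), taking ceilings preserves the inequality: $b_i \le 3/\delta$. The same chain gives $b_0 \le 3/\delta$.

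The only mild obstacle is making the polylog-versus-polynomial comparison fully rigorous with the exact constants; one has to be slightly careful that the threshold $\Delta \ge \log^{20/\epsilon^2\delta} n$ — equivalently $\Delta \ge \log^{200/\epsilon^4} n$ after substituting $\delta = \epsilon^2/10$ — is indeed strong enough, but since the exponent on $\log n$ available on the $\Delta$ side grows like $1/\epsilon^4$ while the exponent we must beat grows only like $1/\delta = 10/\epsilon^2$, there is ample slack and no delicate optimization is needed. I would close the argument by remarking that all of $\lambda, b_0, \dots, b_h, \lambda_1, \dots, \lambda_h$ are therefore bounded by functions of $1/\epsilon$ times $\mathrm{polylog}(n)$, which is what subsequent space bounds will rely on.
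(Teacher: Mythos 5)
Your proof is correct and follows essentially the same route as the paper's: bound $\lambda$ by a small power of $\Delta$ using $\Delta\ge\log^{20/\epsilon^2\delta}n$, bound $\lambda f_i$ by $\Delta^{O(1)}$, and divide out by $\log q\ge\delta\log\Delta$. One small slip: the parenthetical ``the exponent $4/\delta^2$ exceeds $2+3/\delta$ for all $\delta\le 1$'' is false at $\delta=1$ (where $4<5$); it holds only for $\delta\lesssim 0.85$, which is fine here since $\delta=\epsilon^2/10\le 0.1$, but you should scope the claim to that regime (as the paper implicitly does by invoking ``$n$ super-constant'').
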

\begin{proof}
    By the assumption that $\Delta \geq \log^{20/\epsilon^2\delta}n$, the following inequality holds when $n$ is a super-constant:$$\lambda = \ceil{\log_2^{2+3/\delta}n\cdot (2 + 3/\delta)^{2+3/\delta}} < \Delta$$
    Hence, $b_i = \ceil{\log_q(\lambda \cdot f_i)}\leq \ceil{\log_q \Delta^2} < 3/\delta$ as $q\in [\Delta^\delta, 2\Delta^\delta)$.
\end{proof}

Allocate an overall color package $\clr^\freq$ with $2^{l+1}\cdot q^{b_0+3}\cdot\prod_{i=1}^h(\lambda_i\cdot q^{b_i+2})$ new colors. Each color in this color package can be identified as a tuple $(c_0, c_1, \ldots, c_h)$ from the direct product space $[2^{l+1}\cdot q^{b_0+3}]\times [\lambda_1\cdot q^{b_1+2}] \times\cdots\times [\lambda_{h}\cdot q^{b_{h}+2}]$. For any tuple $(c_1, c_2, \ldots, c_h)$, the collection of all colors $\{(*, c_1, c_2, \ldots, c_h)\}$ will be called a \textbf{palette}. 

As we did in the randomized algorithm, we will specify a color package for each tree node. To do this, for any level $0\leq i\leq h-1$, apply \Cref{perfect} which deterministically builds a bipartite graph $H_i^\freq = (A_i^\freq\cup B_i^\freq, I_i^\freq)$ where $|A_i^\freq| = q^{\ceil{\log_q f_i}}, |B_i^\freq| = q^{b_{i+1}+2}\cdot \ceil{(b_{i+1}+2)\cdot \log_2q}$.

\begin{lemma}\label{verify-perfect}
    Bipartite graph $H_i^\freq$ admits an online perfect matching with $f_{i+1}$ vertices in $A_i^\freq$.
\end{lemma}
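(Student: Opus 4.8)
The goal is to apply \Cref{perfect} with the right choice of parameters $a = \ceil{\log_q f_i}$, $b = b_{i+1}$, and prime $q$, and verify that the hypotheses of that lemma are met and that the number of online vertices $f_{i+1}$ does not exceed the capacity bound. So the plan is as follows.

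First I would record the parameter identifications: in \Cref{perfect} we take $a = \ceil{\log_q f_i}$, $b = b_{i+1}$, and $q$ the chosen prime in $[\Delta^\delta, 2\Delta^\delta)$. Then $|A_i^\freq| = q^a = q^{\ceil{\log_q f_i}}$ and $|B_i^\freq| = q^{b+2}\cdot\ceil{(b+2)\log_2 q}$, which matches the stated sizes. I need to check the side condition $15\le b+2\le a\le q$. The inequality $b_{i+1}+2 \le a = \ceil{\log_q f_i}$ should follow from the definition $b_{i+1} = \ceil{\log_q(\lambda\cdot f_{i+1})}$ together with $f_{i+1}\le f_i$ and $\lambda$ being only polylogarithmic (so $\log_q \lambda$ is $O(1/\delta)$ but small compared to $\log_q f_i$ when $f_i\ge \Delta^\epsilon$ — here I'd use $f_i \ge \Delta^\epsilon$ which gives $\log_q f_i \ge \epsilon/\delta = 10/\epsilon$, a large constant, while $\log_q\lambda = O(\log\log n/\log\Delta)$ which is tiny under the assumption $\Delta\ge \log^{20/\epsilon^2\delta} n$). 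The bound $a\le q$ holds since $a = \ceil{\log_q f_i} = O(1/\epsilon)$ is constant while $q\ge \Delta^\delta$ grows. The lower bound $b+2\ge 15$ may require padding: if $b_{i+1}+2 < 15$ one simply works with $b' = 13$ in place of $b_{i+1}$, which only enlarges $B_i^\freq$ by a constant factor and does not affect anything downstream; I would note this rather than belabor it.

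Next I would verify the capacity condition of \Cref{perfect}, namely that $f_{i+1} \le K$ where $K = \frac{(2q-2)^{b+2}}{q\cdot a^{b+2}(b+2)^{b+2}}$. Since $b+2 = b_{i+1}+2 \ge \log_q(\lambda f_{i+1}) + 2$, we get $q^{b+2}\ge q^2\cdot\lambda\cdot f_{i+1}$, so $(2q-2)^{b+2} \ge q^{b+2} \ge q^2\cdot\lambda\cdot f_{i+1}$ (crudely $2q-2\ge q$ for $q\ge 2$). It then suffices to check $q^2\cdot\lambda \ge q\cdot a^{b+2}(b+2)^{b+2}$, i.e. $q\cdot\lambda \ge (a(b+2))^{b+2}$. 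Using $a, b+2 \le 3/\delta$ (Lemma~\ref{param-ineq} and the analogous bound on $a$), the right side is at most $(9/\delta^2)^{3/\delta} \le (3/\delta)^{6/\delta}$-ish, a quantity absorbed by $\lambda = \ceil{\log_2^{2+3/\delta} n\cdot (2+3/\delta)^{2+3/\delta}}$ together with the factor $q\ge \Delta^\delta \ge \log^{20/\epsilon^2} n$. I'd spell out this last inequality chain carefully since it is where the specific shape of $\lambda$ is used, but it is just arithmetic.

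The main obstacle, such as it is, is bookkeeping: making sure the chain of inequalities $b_{i+1}+2\le\ceil{\log_q f_i}$ and $f_{i+1}\le K$ both go through simultaneously with one fixed choice of $q$, $\lambda$, and the constants, and that the polylogarithmic slack in $\lambda$ genuinely dominates the $(1/\delta)^{O(1/\delta)}$ combinatorial factors under the standing assumption $\Delta\ge\log^{20/\epsilon^2\delta}n$. There is no conceptual difficulty — \Cref{perfect} is a black box — it is purely a matter of confirming the parameters were defined to make these estimates hold, so I would present the two verifications (side condition, capacity bound) as short displayed computations and conclude.
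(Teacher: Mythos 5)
Your plan mirrors the paper's proof: instantiate \Cref{perfect} with $a = \ceil{\log_q f_i}$, $b = b_{i+1}$, and the chosen prime $q$, then verify the capacity bound. Your capacity computation goes through — you bound $a \le O(1/\delta)$ and lean on the extra factor of $q$, whereas the paper bounds $\ceil{\log_q f_i}\le\log n$ so that $a^{b+2}$ cancels exactly against the $\log_2^{2+3/\delta} n$ factor built into $\lambda$; both chains of inequalities are valid. Your instinct to pad $b$ up when $b_{i+1}+2<15$ is also the right move (the paper silently ignores this).

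However, your argument that $b_{i+1}+2 \le a = \ceil{\log_q f_i}$ ``should follow'' from $f_{i+1}\le f_i$ together with $\lambda$ being small does not hold up. Since $\lambda\ge 1$ we have $b_{i+1}=\ceil{\log_q(\lambda f_{i+1})}\ge\ceil{\log_q f_{i+1}}$, so whenever $f_{i+1}=f_i$ — which the frequency-vector definition permits, as it only requires $f_1\ge f_2\ge\cdots\ge f_h$ — one gets $b_{i+1}+2\ge \ceil{\log_q f_i}+2 = a+2 > a$. The additive $+2$ (and the nonnegative $\log_q\lambda$ term) cannot be rescued by $\lambda$ being polylogarithmic; the inequality genuinely fails by an additive constant in that regime. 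To be fair, the paper's own proof never checks the side conditions $15\le b+2\le a\le q$ of \Cref{perfect} at all, so you have surfaced a small gap that the paper glosses over — but your proposed justification for it is wrong rather than merely incomplete. A clean repair is to set $a=\max\{\ceil{\log_q f_i},\,b_{i+1}+2\}$ when invoking \Cref{perfect}; this keeps $|A_i^\freq|\ge f_i$, and the resulting $a^{b+2}\le\max\{\log n,\,2+3/\delta\}^{2+3/\delta}$ is still absorbed by $\lambda$ in the capacity estimate.
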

\begin{proof}
    Plugging in the parameters in \Cref{perfect} and using \Cref{param-ineq}, the greedy algorithm always finds a perfect matching in $H_i^\freq$ as long the number of online vertices is at most 
    $$\begin{aligned}
        \frac{(2q-2)^{b_{i+1}+2}}{q\cdot \ceil{\log_q f_i}^{b_{i+1}+2}\cdot (b_{i+1}+2)^{b_{i+1}+2}}\geq \frac{\lambda f_{i+1}\cdot q}{q\cdot \log n^{2+3/\delta}\cdot (2+3/\delta)^{2+3/\delta}} \geq f_{i+1}
    \end{aligned}$$
    which concludes the proof.
\end{proof}

For any tree node $N$ in $\tree_\freq$ on level-$i$ for some $0\leq i\leq h-1$, let $P$ be the tree path which connects $N$ and the root of $\tree_{\freq}$. List all the nodes of $P$ as $N = N_i, N_{i+1}, \ldots, N_h$, and assume $N_j$ is the $k_j$-th child of $N_{j+1}$. Then, define the color package at $N$ to be:
$$\clr^N = \{(*, *, \ldots, *, c_{i+1}, c_{i+2}, \ldots, c_h)\mid (k_j, c_{j+1})\in I_j, \forall i\leq j < h\}$$
To justify this definition, since $1\leq k_j\leq f_j\leq |A_j^\freq|$ and $1\leq c_{j+1}\leq \lambda_{j+1}\cdot q^{b_{j+1}+2}\leq |B_j^\freq|$, we can encode $k_j$ as a vertex in $A_j^\freq$ and $c_{j+1}$ as a vertex in $B_j^\freq$.

Notice that, by definition, the same palette may appear at multiple leaf nodes (which represent input batches). For any leaf node $N$ (or equivalently, a batch), let $\cnt(\clr^N)$ count the total number of times that palette $\clr^N$ was contained in the color packages of previous leaf nodes (batches). Note that these counters do not require extra space, because we can recompute them upon the arrival of any input batch.

\paragraph*{Vertex-Wise Data Structures.} As before, we will maintain some data structures for the vertices in $V$. For the vertices in $L$, we will not maintain the random shifts $\{r_u\in [3\cdot 2^{r+1}] \mid u\in L\}$. Instead, we build a bipartite expander $H^\freq = (A^\freq\cup B^\freq, I^\freq)$ using multiplicity codes according to \Cref{multi} where $|A^\freq| = q^{\ceil{\log_q n}}, |B^\freq| = q^{b+2}, b = \ceil{\log_q(\lambda\cdot 2^{r+1})}$; recall the definitions of $q, \lambda$ from the previous paragraph.

For vertices $v\in R$, we will reuse the same data structures of marked nodes $M_{v, \freq}\subseteq V(\tree_\freq)$. In addition, for each marked node $N\in M_{v, \freq}$ which is on level-$i$, we will store a length-$h$ tuple $$\overrightarrow{c_N} = (*, *, \ldots, *, c^N_{i+1}, c^N_{i+2}, \ldots, c^N_h)$$ to represent all the color palettes that were used at node $N$ (the first $i$ coordinates are $*$). Similar to \Cref{inv}, the marked sets will have the following requirements.
\begin{invariant}\label{det-inv}
	We will ensure the following properties with respect to the marked nodes $M_{v, \freq}$ throughout the execution of the streaming algorithm.
	\begin{enumerate}[(1)]
		\item All nodes in $M_{v, \freq}$ are incomparable in forest $\tree_{\freq}$. Furthermore, suppose that the current input batch corresponds to a leaf $F$, and let $P$ tree path from $F$ to the tree root. Then, any node $N\in M_{v, \freq}$ is a child of a node on the root-to-leaf path $P$.
		
		\item For any node $N\in \tree_\freq$ on level-$k$ such that $M_{v, \freq}\cap V(\tree_\freq(N))\neq \emptyset$ , let $F_1, F_2, \ldots, F_s\subseteq E$ be all the input batches which correspond to leaf nodes in subtree $\tree_\freq(N)$. Take the union of batches $U = F_1\cup F_2\cup\cdots \cup F_s$. Then, we have $\deg_U(v)\geq 2^{r-k}\cdot\prod_{i = 1}^k f_i$.
		
		\item For any previous input batch $F'$ before $F$ such that:
		\begin{itemize}
			\item $F$ and $F'$ are in the same connected component in $\tree_\freq$,
			\item $\deg_{F'}(v)\in [2^r, r^{r+1})$,
			\item $v$ used some colors in $\clr^{F'}$ during the algorithm,
		\end{itemize} 
		we guarantee that $F'$ must be contained in some subtree $\tree_{\freq}(N)$ for some $N\in M_{v, \freq}$. In addition, assume $N$ is on level-$i$ in $\tree_\freq$, then we require that all colors used by the edges in $F'_{l, r}$ incident on $v$ share the suffix $\left(*, *, \ldots, *, c^N_{i+1}, c^N_{i+2}, \ldots c^N_h\right)$.
		
		\item For any pair of marked nodes $N_1, N_2\in M_{v, \freq}$, suppose their lowest ancestor is on level-$k$, then both tuples $\overrightarrow{c_{N_1}}, \overrightarrow{c_{N_2}}$ share the same suffix up to length $h-k$.
		
		In addition, for all marked nodes $N_1, N_2, \ldots, N_i$ which shares the same parent $W$ on level-$k$, suppose $N_j$ is the $k_j$-th child of $N$. We will make sure that $i\leq f_k$, and all the pairs $\left(k_j, c_{k}^{N_j}\right), 1\leq j\leq i$ form a perfect matching in the bipartite expander $H_{k-1}$ which is chosen by the online perfect matching algorithm we described in \Cref{perfect}.
	\end{enumerate}
\end{invariant}

\subsubsection{Algorithm Description}
Similar to the randomized algorithm in \Cref{rand-alg}, the deterministic also consists of three steps for each input batch $F$: preprocessing marked sets, coloring $F_{l, r}$, and post-processing marked sets.

\paragraph*{Preprocessing Marked Sets.} Since the current input batch has changed due to the new arrival $F$, we have might violated \Cref{inv}(2) as the root-to-leaf tree path may have changed. Therefore, we first need to update all the marked sets as in the following procedure which is named $\textsc{DetUpdateMarkSet}(F)$\label{alg:detupdatemarkset}.

Go over every vertex $v\in R$ and every frequency vector $\freq$. Consider the position of $F$ in the forest $\tree_\freq$, and let $P$ be the root-to-leaf path in $\tree_{\freq}$ ending at leaf $F$. First, remove all marked nodes $N\in M_{v, \freq}$ which are not in the same tree as $P$; note that forest $\tree_\freq$ is actually a forest of trees, and this may happen when $F$ is the first leaf in a new tree of $\tree_{\freq}$.

Next, go over every node $W$ on lying on the tree path $P$. Assume $W$ is on level-$k$. For any child node $N$ of $W$, if (1) $V(\tree_\freq(N)) \not\ni F$ and (2) $V(\tree_\freq(N))\cap M_{v, \freq} \neq \emptyset$, then take an arbitrary node $U\in V(\tree_\freq(N))\cap M_{v, \freq}$, assign $\overrightarrow{c_W} = \left(*, *, \ldots, *, c_{k}^U, c_{k+1}^U, \ldots, c_h^U\right)$. After that, remove all nodes in $V(\tree_\freq(N))\cap M_{v, \freq}$ from $M_{v, \freq}$ and add $N$ to $M_{v, \freq}$. 

\paragraph*{Coloring $F_{l, r}$.} The procedure $\textsc{DetFreqVec}(F)$\label{alg:detfreqvec} for selecting a frequency vector $\freq_v$ for each $v\in R$ such that $\deg_F(v)\in [2^r, 2^{r+1})$ is the same as \Cref{rand-alg} which is a deterministic subroutine. The different parts will be selection of the color palette $\clr_v$ and color assignment.
\begin{itemize}
	\item \textbf{Selecting $\clr_v$.} Let $W$ be the lowest ancestor of $F$ such that $V(\tree_{\freq_v}(W))\cap M_{v, \freq_v}\neq\emptyset$. If such a node $W$ does not exist, then $M_{v, \freq_v}$ must be empty at the moment. In this case, set $W$ to be the root of the tree containing $F$.
	
	In general, assume $W$ is on level-$k$, and define $$F = W_0, W_1, \ldots, W_{k-1}, W_k = W$$ to be the sub-path of $P$ connecting $W$ and $F$. Assume $W_{k-1}$ is the $s$-th child of $W$.
	
	According to our selection of $\freq_v$, $W$ has less than $f_k$ marked children. Assume these marked children of $W$ are $N_1, N_2, \ldots, N_t, s<f_k$, and $N_i$ is the $s_i$-th child of $W$. Under \Cref{det-inv}(4), all the pairs $\left(s_i, c_k^{N_i}\right)$ form a matching in bipartite expander $H_{k-1}^{\freq_v}$ and was selected by the online perfect matching algorithm from \Cref{perfect}. Since $t < f_k$, by \Cref{verify-perfect}, we can continue to apply \Cref{perfect} to find a value $c_k\notin \{c_k^{N_i} \mid 1\leq i\leq t\}$ such that $(s, c_k)\in I_{k-1}^{\freq_v} = E(H_{k-1}^{\freq_v})$.
	
	As for coordinates $c_{k+1}, c_{k+2}, \ldots, c_h$, assign $c_j = c_j^{N_1}, k<j\leq h$. To complete the coordinates of palette $\clr_v$, we still need to specify $c_1, c_2, \ldots, c_{k-1}$. To do this, for each level $0\leq i < k-1$, initialize an online perfect matching procedure on graph $H_i^{\freq_v}$. Assume $W_i$ is the $t_i$-th child of $W_{i+1}$, then view $t_i$ as the first online vertex in $H_i^{\freq_v}$ and match it using edge $(t_i, c_i)$ in $H_i^{\freq_v}$. In the end, set $\clr_v = (c_1, c_2, \ldots, c_h)$ be the palette we will use.
	
	\item \textbf{Color assignment.} Next, we are going to color all edges in $F_{l, r}$ using colors from $\clr_v$ for edges incident on $v\in R$. For any vertex $v\in R$, let $u_1, u_2, \ldots, u_k, k<2^{r+1}$ be all the neighbors of $v$ in $F_{l, r}$.
	Recall that $H^{\freq_v} = (A^{\freq_v}\cup B^{\freq_v}, I^{\freq_v})$ is a bipartite expander such that $|A^{\freq_v}|\geq n$, we can interpret each vertex $u_i$ as a vertex in $A^{\freq_v}$. Then, since $k<2^{r+1}$, plugging in $K =k$ in using \Cref{multi}, $H^{\freq_v}$ is always a $(D, k)$ expander where
	$$\begin{aligned}
		D &\geq q - \frac{\log_q n\cdot (b_0+2)}{2}\cdot (qk)^{1/(b_0+2)}\\
		&\geq q - 0.5\cdot\lambda^{1 / (b_0+2)} \cdot q\cdot \lambda^{1/(b_0+2)} = q/2 > 1
	\end{aligned}$$
	Hence, we can find a perfect matching $\{(u_i, r_i)\mid 1\leq i\leq k\}$ in $H^{\freq_v}$, where $1\leq r_i\leq q^{b_0+2}$; recall that $\lambda = \ceil{\log^{2+3/\delta}n\cdot (2 + 3/\delta)^{2+3/\delta}}$ and $b_0 = \ceil{\log_q(\lambda\cdot 2^{r+1})} < 3/\delta$ by \Cref{param-ineq}. Additionally, assume for each $1\leq i\leq k$, edge $(u_i, r_i)$ is the $t_i$-th edge in the adjacency list of $u_i$ in graph $H^{{\freq_v}}$; note that $1\leq t_i\leq q$ by construction of multiplicity codes.
	
	Order all the edges in $F_{l, r}$ alphabetically. For each edge $(u_i, v)\in F_{l, r}$, assume it is the $j$-th edge around $u_i$. To encode the colors in $\clr_v$, we interpret each color as a tuple $\in [q]\times [q^{b_0+2}]\times [2^{l+1}]$. Then, assign the $\kappa_i$-th color in $\clr_v$ to $(u_i, v)$ and print it in the output stream, where we set:
	$$\kappa_i = \brac{t_i, \cnt(\clr_v) + r_i, j}$$
	As a minor issue, when $\cnt(\clr_v) + r_i$ is larger than $q^{b_0+2}$, we take its modulo and replace it with $\cnt(\clr_v) + r_i \mod q^{b_0+2}$.
\end{itemize}

\paragraph*{Postprocessing Marked Sets.} After processing the input batch $F$, for every vertex $v\in R$ such that $\deg_F(v)\in [2^r, 2^{r+1})$, add node $F$ to $M_{v, \freq_v}$ with $\overrightarrow{c_F} = (c_1, c_2, \ldots, c_h)$. The whole algorithm is summarized as \Cref{det-alg-low}.

\begin{algorithm}
    \SetNoFillComment
    \caption{$\textsc{DetColorLowDeg}(F)$}\label{det-alg-low}
        \tcc{pre-processing marked sets}
    \For{$v\in R$ and frequency vector $\freq$}{
        call $\textsc{DetUpdateMarkSet}(F)$ (as described in \Cref{alg:detupdatemarkset}) to remove marked nodes in previous tree components in $\tree_\freq$,\\
        and elevate the positions of all the marked nodes in $M_{v, \freq}$ to their ancestors which are children of $P$,\\
        and assign the tuples $\overrightarrow{c_*}$ properly for newly marked nodes\;
    }
    \tcc{select frequency vector $\freq_v$ and palette $\clr_v$}
    \For{$v\in R$ such that $\deg_F(v)\in [2^r, 2^{r+1})$}{
        call $\textsc{DetFreqVec}(F)$ (as described in \Cref{alg:detfreqvec}) to grow a vector $\freq_q = (f_1, f_2, \ldots )$ progressively\;
        select palette $\clr_v = (c_1, c_2, \ldots, c_h)$ using online perfect matching on bipartite expanders $H_0^{\freq_v}, H_1^{\freq_v}, \ldots, H_{h-1}^{\freq_v}$\;
    }
    \tcc{assign colors for $F_{l, r}$}
    \For{$v\in R$ such that $\deg_F(v)\in [2^r, 2^{r+1})$}{
        let $u_1, u_2, \ldots, u_k$ be all neighbors of $v$ in $F_{l, r}$\;
        find a perfect matching $\{(u_i, r_i)\mid 1\leq i\leq k\}$ in $H^{\freq_v}$\;\label{perfect-matching-low}
        assign the $\kappa_i$-th color in $\clr_v$ to $(u_i, v)$, with $\kappa_i = (t_i, \cnt(\clr_v) + r_i, j)$, assuming $(u_i, r_i)$ is the $t_i$-th edge of $u_i$ in $H^{\freq_v}$, and $(u_i, v)$ is the $j$-th edge around $u_i$ in $F_{l, r}$\;
    }
    \tcc{post-processing marked sets}
    \For{$v\in R$ such that $\deg_F(v)\in [2^r, 2^{r+1})$}{
        add $F$ to $M_{v, \freq_v}$ with $\overrightarrow{c_F} = (c_1, c_2, \ldots, c_h) = \clr_v$\;
    }
\end{algorithm}

\subsubsection{Proof of Correctness}
To begin with, let us first bound the total number of colors that we use throughout the algorithm.
\begin{lemma}
	The total number of colors used by the algorithm is at most $$O\brac{(\log\Delta)^{O(1 / \epsilon)}\cdot(1/\epsilon)^{O(1/\epsilon^2)}\cdot \Delta^{1+2\epsilon}\cdot 2^l}$$ different colors.
\end{lemma}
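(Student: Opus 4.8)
The plan is to count colors in three stages: (i) fix a single frequency vector $\freq = (f_1,\dots,f_h)$ and bound the size of its overall color package $\clr^\freq$; (ii) multiply by the number of disjoint trees in $\tree_\freq$ needed to cover all $\le \Delta$ batches; (iii) multiply by the number $(\log\Delta)^{O(1/\epsilon)}$ of distinct frequency vectors. First I would recall from the construction that
$$|\clr^\freq| = 2^{l+1}\cdot q^{b_0+3}\cdot \prod_{i=1}^h \brac{\lambda_i\cdot q^{b_i+2}},$$
and that this entire package is what is allocated to one tree (more precisely, the colors of a tree are those whose suffix-coordinates are consistent with the matchings along root-to-leaf paths, all drawn from this single product space). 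So the per-tree color count is exactly $|\clr^\freq|$, and it remains to estimate each factor.

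The key estimates are routine given \Cref{param-ineq}: since $q\in[\Delta^\delta,2\Delta^\delta)$ and $b_i\le 3/\delta$ for all $i$, we have $q^{b_i+2} = O(\Delta^{\delta(3/\delta+2)}) = O(\Delta^{3+2\delta})$ — but this is too crude; instead I would use that $q^{b_i} = q^{\ceil{\log_q(\lambda f_i)}} < q\cdot \lambda f_i \le 2\Delta^\delta\lambda f_i$, so $q^{b_i+2} = O(\Delta^{3\delta}\lambda f_i)$, and similarly $q^{b_0+3} = O(\Delta^{3\delta}\lambda\cdot 2^{r+1})$. Also $\lambda_i = \lambda\cdot\ceil{(b_i+2)\log_2 q} = O(\lambda\cdot (1/\delta)\log\Delta)$ and $\lambda = O(\log^{2+3/\delta}n\cdot(2+3/\delta)^{2+3/\delta}) = (\log n)^{O(1/\delta)}\cdot(1/\delta)^{O(1/\delta)}$. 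Plugging these in,
$$|\clr^\freq| = O\brac{2^{l}\cdot \Delta^{3\delta}\lambda 2^{r}\cdot \prod_{i=1}^h \brac{\lambda\cdot \Delta^{3\delta}\lambda f_i\cdot\tfrac{1}{\delta}\log\Delta}} = O\brac{2^{l}2^{r}\cdot\prod_{i=1}^h f_i\cdot \Delta^{3\delta(h+1)}\cdot(\lambda^2\tfrac{1}{\delta}\log\Delta)^{h}}.$$
Now $h\le 1/\epsilon+O(1)$ and, by the defining inequality of a frequency vector, $2^{r+1}\cdot\prod_{i=1}^{h-1}f_i\le m/n\le\Delta$, hence $2^r\prod_{i=1}^h f_i \le 2^r\prod_{i=1}^{h-1}f_i\cdot f_h \le \Delta\cdot\Delta^\epsilon = \Delta^{1+\epsilon}$ (using $f_h=\Delta^\epsilon$). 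Since a tree in $\tree_\freq$ covers $2^{r+1}\prod_{i=1}^{h-1}f_i$ batches and there are $\le\Delta$ batches, the number of trees is $\le \ceil{\Delta/(2^{r+1}\prod_{i=1}^{h-1}f_i)}$, which multiplies $2^r\prod f_i$ back up to at most $\Delta\cdot\Delta^\epsilon=\Delta^{1+\epsilon}$ again after the same cancellation; so the color count per frequency vector is $O(2^l\Delta^{1+\epsilon}\cdot\Delta^{3\delta(h+1)}\cdot(\lambda^2\tfrac1\delta\log\Delta)^h)$. With $\delta=\epsilon^2/10$ we get $\Delta^{3\delta(h+1)} = \Delta^{O(\epsilon)}$, and $(\lambda^2\tfrac1\delta\log\Delta)^h = (\log n)^{O(1/\delta)\cdot O(1/\epsilon)}\cdot(1/\epsilon)^{O(1/\epsilon^2)} = (\log\Delta)^{O(1/\epsilon)}\cdot(1/\epsilon)^{O(1/\epsilon^2)}$ after absorbing $\log n$ versus $\log\Delta$ (legitimate since we assumed $\Delta\ge\log^{20/\epsilon^2\delta}n$, so $\log n = O((\log\Delta)^{\epsilon^2\delta/20})$ and the polynomial-in-$\log n$ factor is absorbed into $\Delta^{O(\epsilon)}$). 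Finally multiply by the $(\log\Delta)^{O(1/\epsilon)}$ choices of $\freq$.

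The main obstacle is bookkeeping the $\lambda$ and $q$ factors cleanly so that everything either collapses into $(\log\Delta)^{O(1/\epsilon)}\cdot(1/\epsilon)^{O(1/\epsilon^2)}$ or is small enough ($\Delta^{O(\delta/\epsilon)}=\Delta^{O(\epsilon)}$) to be swept into the $\Delta^{2\epsilon}$ slack; the choice $\delta=\epsilon^2/10$ and the hypothesis $\Delta\ge\log^{20/\epsilon^2\delta}n$ are exactly what make this work, converting stray $\mathrm{poly}(\log n)$ factors into $\Delta^{o(\epsilon)}$. I would present the computation as a single chain of $O(\cdot)$ inequalities rather than tracking constants, emphasizing the two cancellations (the frequency-vector inequality $2^{r+1}\prod_{i=1}^{h-1}f_i\le\Delta$ used once for $|\clr^\freq|$ and once again against the number-of-trees ceiling).
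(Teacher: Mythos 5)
Your decomposition and bookkeeping strategy are exactly the paper's: bound $|\clr^\freq|$ for a single tree by isolating the $q^{b_i}$, $\lambda_i$, and $f_i$ factors, cancel the tree-count ceiling against $2^{r+1}\prod_{i=1}^{h-1}f_i$, use $f_h = \Delta^\epsilon$ and $\delta=\epsilon^2/10$ to sweep the $q$- and $\lambda$-powers into $\Delta^{O(\epsilon)}$, and finally multiply by the $(\log\Delta)^{O(1/\epsilon)}$ choices of $\freq$. The minor slips ($q^{b_0+3}$ costs $q^4$ not $q^3$; a $\lambda$-factor dropped when folding $q^{b_0+3}$ into the product) are inconsequential.

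However, the final $(1/\epsilon)$-exponent in your chain is wrong: you claim $\brac{\lambda^2\tfrac1\delta\log\Delta}^h = (\log n)^{O(1/\delta)\cdot O(1/\epsilon)}\cdot(1/\epsilon)^{O(1/\epsilon^2)}$, but $\lambda$ already carries a $(2+3/\delta)^{2+3/\delta} = (1/\epsilon^2)^{\Theta(1/\epsilon^2)}$ factor, so $\lambda^{2h}$ with $h=\Theta(1/\epsilon)$ contributes $(1/\epsilon)^{\Theta(1/\epsilon^2)\cdot\Theta(1/\epsilon)} = (1/\epsilon)^{\Theta(1/\epsilon^3)}$, not $(1/\epsilon)^{O(1/\epsilon^2)}$. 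Tracked honestly, your own computation yields $(1/\epsilon)^{O(1/\epsilon^3)}$ -- which is exactly what the paper's displayed proof concludes and what \Cref{det-low} states. The $O(1/\epsilon^2)$ in the lemma's statement appears to be a typo in the paper (inconsistent with both its own proof and the lemma it feeds into); you should not reproduce it as if the arithmetic delivered it.
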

\begin{proof}
	It suffices to estimate the number of colors we have created for each frequency vector $\freq = (f_1, f_2, \ldots, f_h)$. By the color allocation process, the total number of colors in $\clr^\freq$ is bounded equal to (for a single tree component in $\tree_{\freq}$):
	$$\begin{aligned}
		|\clr^\freq| &= 2^{l+1}\cdot q^{b_0+3}\cdot\prod_{i=1}^h(\lambda_i\cdot q^{b_i+2})\\
		&= \lambda^{h}\cdot 2^{l+1}\cdot q^{2h+3}\cdot\prod_{i=0}^h q^{b_i}\cdot \prod_{i=1}^h \ceil{(b_i+2)\cdot \log_2q}\\
		&\leq \lambda^{h}\cdot 2^{l+1}\cdot q^{2h+3}\cdot 2^{r+1}\cdot\prod_{i=1}^h (\lambda\cdot f_i\cdot q)\cdot \prod_{i=1}^h \ceil{(b_i+2)\cdot \log_2q}\\
		&\leq \lambda^{2h}\cdot 2^{l+1}\cdot q^{3h+3}\cdot \prod_{i=1}^h \ceil{(b_i+2)\cdot \log_2q} \cdot \prod_{i=0}^h f_i
	\end{aligned}$$
	Since there are at most $\ceil{\Delta / \prod_{i=0}^{h - 1}f_i}$ different tree components, the total number of colors would be:
	$$\lambda^{2h}\cdot 2^{l+1}\cdot q^{3h+3}\cdot \prod_{i=1}^h \ceil{(b_i+2)\cdot \log_2q} \cdot 2\Delta \cdot f_h$$
	
	Note that $\lambda = \ceil{\log^{2+3/\delta}n\cdot (2 + 3/\delta)^{2+3/\delta}} < (4/\delta)^{4/\delta} \cdot \log^{4/\delta} n < (4/\delta)^{4/\delta}\cdot \Delta^{\epsilon^2 / 5}$, and $h\leq 1/\epsilon$. Also, for each $1\leq i\leq h$, $b_i = \ceil{\log_q(\lambda\cdot f_i)} < \ceil{\log_q \Delta^2} < 1 + 2/\delta < 3/\delta$. Therefore, in the end, the total number of colors for each $\freq$ is (recall that $\delta = \epsilon^2/10$, $q < 2\Delta^{\delta}$):
	$$\begin{aligned}
		&\lambda^{2h}\cdot 2^{l+1}\cdot q^{3h+3}\cdot \prod_{i=1}^h \ceil{(b_i+2)\cdot \log_2q}\cdot 2\Delta\cdot f_h\\
		&\leq (4/\delta)^{8/\epsilon \delta}\cdot \Delta^{0.4\epsilon}\cdot 2^{l+1}\cdot (2\Delta)^{\delta \cdot (3 + 3/\epsilon)} \cdot \ceil{(2+3/\delta)\cdot \log_2 q}^{1/\epsilon}\Delta\cdot 2\Delta \cdot \Delta^{\epsilon}\\
		&= (\log\Delta)^{O(1 / \epsilon)}\cdot(1/\epsilon)^{O(1/\epsilon^3)}\cdot \Delta^{1+2\epsilon}\cdot 2^l
	\end{aligned}$$
\end{proof}

\begin{lemma}\label{palette-cnt}
	Each palette is used in at most $2^{r+1}$ different batches.
\end{lemma}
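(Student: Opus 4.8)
The plan is to follow the same strategy as the proof of the corresponding randomized statement (``each palette is used by at most $2^{r+1}/\Delta^\epsilon$ different batches''), with the right-degree bound for multiplicity codes from \Cref{deg} replacing the permutation-multiplicity count used there. Fix a frequency vector $\freq=(f_1,\dots,f_h)$. As in the randomized construction, each tree component of $\tree_\freq$ receives its own disjoint block of fresh colors, so any fixed palette $\pi=(c_1,\dots,c_h)$ lies inside a single tree component $T$, and for every other component $\pi$ is not even a valid color; hence it suffices to bound the number of leaves (batches) of $T$ whose color package contains $\pi$. For a leaf $N$ of $T$, write $N=N_0,N_1,\dots,N_h$ for the root-to-leaf path, where $N_j$ is the $k_j$-th child of $N_{j+1}$, so $k_j\in[f_j]$. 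Unrolling the definition of the color package at level $i=0$ gives: $\pi\subseteq\clr^N$ if and only if $(k_j,c_{j+1})\in I_j^\freq$ for every $0\le j\le h-1$.

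A leaf of $T$ is uniquely determined by its tuple $(k_0,\dots,k_{h-1})\in[f_0]\times\dots\times[f_{h-1}]$ of child indices, and the above conditions are imposed one coordinate at a time, so the number of leaves of $T$ whose package contains $\pi$ is at most
\[
  \prod_{j=0}^{h-1}\bigl|\{\,k\in[f_j]:(k,c_{j+1})\in I_j^\freq\,\}\bigr|\ \le\ \prod_{j=0}^{h-1}\deg_{H_j^\freq}(c_{j+1}).
\]
Now $H_j^\freq$ is the graph built by \Cref{perfect}, i.e.\ a disjoint union over the right side of copies of the multiplicity-code expander of \Cref{multi} with parameters $a=\ceil{\log_q f_j}$ and $b=b_{j+1}$; the vertex $c_{j+1}$ lies in exactly one copy, so \Cref{deg} yields $\deg_{H_j^\freq}(c_{j+1})\le q^{\,\ceil{\log_q f_j}-b_{j+1}-1}$. (If $c_{j+1}$ falls outside $B_j^\freq$ the corresponding factor is $0$; and if the exponent is negative then $q^{\,\ceil{\log_q f_j}-b_{j+1}-1}<1$ forces the integer degree to be $0$ --- so this inequality is unconditional.)

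Finally, $\ceil{\log_q f_j}\le\log_q f_j+1$ and $b_{j+1}=\ceil{\log_q(\lambda f_{j+1})}\ge\log_q(\lambda f_{j+1})$ give $q^{\,\ceil{\log_q f_j}-b_{j+1}-1}\le f_j/(\lambda f_{j+1})$, and telescoping (recall $f_0=2^{r+1}$, $f_h=\Delta^\epsilon$, $\lambda\ge1$),
\[
  \prod_{j=0}^{h-1}\frac{f_j}{\lambda f_{j+1}}\ =\ \frac{1}{\lambda^{h}}\cdot\frac{f_0}{f_h}\ =\ \frac{2^{r+1}}{\lambda^{h}\Delta^\epsilon}\ \le\ 2^{r+1},
\]
which proves the claim with a comfortable margin. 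The whole argument is essentially bookkeeping once \Cref{deg} is in hand: the only point requiring care is keeping ``forest level $j$'', ``palette coordinate $c_{j+1}$'', and ``bipartite expander $H_j^\freq$'' consistently aligned, and handling the degenerate ranges just noted; I do not foresee a genuine obstacle, and the slack of $\lambda^{h}\Delta^\epsilon$ leaves ample room.
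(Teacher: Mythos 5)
Your argument is correct and follows the paper's proof essentially verbatim: both restrict attention to a single tree component, observe that a palette $(c_1,\dots,c_h)$ reaches a leaf iff each pair $(k_j,c_{j+1})$ is an edge of $H_j^\freq$, bound the per-level fan-out by the right-degree bound $q^{\ceil{\log_q f_j}-b_{j+1}-1}\le f_j/f_{j+1}$ from \Cref{deg}, and telescope to $f_0=2^{r+1}$ (you even get the slightly stronger $2^{r+1}/(\lambda^h\Delta^\epsilon)$, as does the paper's induction). The paper phrases this as a top-down induction on tree levels rather than a direct product over coordinates, but the content is identical.
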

\begin{proof}
	Consider any palette specified by a tuple $(c_1, c_2, \ldots, c_h)$. Recall that $f_0 = 2^{r+1}$, so it suffices to prove by an induction that the number of level-$k$ tree nodes in the same connected component in $\tree_\freq$ with the same color package $(*, *, \ldots, *, c_{k+1}, c_{k+2}, \ldots, c_h)$ is at most $f_k$, for any $0\leq k\leq h$. The basis is trivial because the level-$h$ tree node in a component in $\tree_{\freq}$ is its root.
	
	By the allocation of color packages, for any node $N$ on level-$(k-1)$ with package specified by tuple $(*, *, \ldots, *, c_k, c_{k+1}, \ldots, c_h)$, its parent has color package $(*, *, \ldots, *, c_{k+1}, c_{k+2}, \ldots, c_h)$. So, it suffices to bound the number of children with color package $(*, *, \ldots, *, c_k, c_{k+1}, \ldots, c_h)$ for each level-$k$ node with color package $(*, *, \ldots, *, c_{k+1}, c_{k+2}, \ldots, c_h)$. Take any such level-$k$ node $W$ and its $s$-th child $N$, $N$ has color package $(*, *, \ldots, *, c_k, c_{k+1}, \ldots, c_h)$ if and only if $(s, c_k)$ is an edge in graph $H_{k-1}^\freq$. Since $H_{k-1}^\freq$ is a union of copies of the bipartite expanders based on multiplicity codes (to be more precise, $H_{k-1}^\freq$ is a union of several bipartite expanders with the same left-side and separate right-sides), by \Cref{deg}, the number of such children $N$ is at most
	$$q^{\ceil{\log_q f_{k-1}} - b_k - 1} \leq f_{k-1} / q^{b_k} \leq f_{k-1} / f_k$$
	which concludes the induction.
\end{proof}

\begin{corollary}
	During the algorithm, the values of the counters $\cnt(\clr)$ never exceed $2^{r+1}$ for any palette $\clr$.
\end{corollary}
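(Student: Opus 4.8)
The plan is to read this off directly from \Cref{palette-cnt}. Recall that for a leaf node $N$ of $\tree_\freq$ (equivalently, an input batch), the counter $\cnt(\clr^N)$ is defined as the number of \emph{previous} leaf nodes whose color package contains the palette $\clr^N$. So the first step is simply to observe that, for any fixed palette $\clr$ and at any moment during the execution, $\cnt(\clr)$ is bounded above by the total number of batches, over the whole stream, whose color package contains $\clr$.

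The second step is to bound that total. Here I would note that the color-allocation procedure hands disjoint, fresh color sets to distinct tree components of $\tree_\freq$, so a fixed palette $\clr$ can occur only among the leaves of the single tree component to which its colors belong; consequently the batches counted above all lie in one component, and \Cref{palette-cnt} already asserts that at most $2^{r+1}$ of them use $\clr$. Combining the two steps yields $\cnt(\clr) \le 2^{r+1}$ throughout the algorithm, as claimed.

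I do not anticipate any real obstacle here: this is the deterministic analogue of \Cref{cnt-bound} and, just like there, it is an immediate consequence of the preceding per-palette usage bound. The only point that merits a moment's care is that the counter must be controlled by the component-local usage rather than by some global quantity, which is exactly what the cross-component disjointness of colors ensures; everything else is just unwinding the definition of $\cnt(\cdot)$.
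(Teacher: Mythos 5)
Your argument is correct and matches the (implicit) reasoning of the paper: the corollary is stated without proof precisely because it follows immediately from \Cref{palette-cnt} by unwinding the definition of $\cnt(\cdot)$. Your extra observation about cross-component disjointness of color sets is a genuinely useful clarification, since the induction in the proof of \Cref{palette-cnt} is explicitly carried out within a single tree component of $\tree_\freq$, while $\cnt(\clr)$ is defined over all previous batches globally; noting that distinct tree components receive disjoint, fresh color packages is exactly what closes that gap.
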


As before, we can analyze the total space of the marked sets under the condition of \Cref{det-inv}(2). Since the proof would be the same, we omit it here.

\begin{lemma}
	If \Cref{det-inv} is satisfied, then the total size $\sum_{v\in R}|M_{v, \freq}|$ is bounded by $O(n/\epsilon)$ for any frequency vector $\freq$.
\end{lemma}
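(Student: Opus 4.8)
This estimate is the deterministic twin of the bound $\sum_{v\in R}|M_{v,\freq}|=O(2^h n)$ proved in the randomized analysis, and the plan is to replay that double-counting argument almost verbatim. The key observation I would make first is that the count relies only on \Cref{det-inv}(1) and \Cref{det-inv}(2), which are word-for-word copies of \Cref{inv}(1) and \Cref{inv}(2); the color tuples $\overrightarrow{c_N}$ stored at marked nodes and the matching conditions \Cref{det-inv}(3)--(4) do not enter at all here, since we are only counting marked \emph{nodes}, not the data attached to them. I would also recall that $\tree_\freq$ is literally the forest built in \Cref{Btree}, so its combinatorics are unchanged: a level-$k$ node spans exactly $2^{r+1}\prod_{i=1}^{k-1}f_i$ batches, hence $2^{r+1}\prod_{i=1}^{k-1}f_i\cdot n$ edges.

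Then I would fix $\freq=(f_1,\dots,f_h)$, count the pairs $(v,N)$ with $v\in R$ and $N\in M_{v,\freq}$, and group by the level $k$ of $N$. For each level I would combine two bounds: by \Cref{det-inv}(1) every marked node is a child of a node on the current root-to-leaf path, so all level-$k$ marked nodes lie among the children of the unique level-$(k+1)$ node of that path, giving at most $f_k$ candidate nodes at level $k$; and by \Cref{det-inv}(2), if a level-$k$ node $N$ is marked by $v$ then $v$ has at least $2^{r-k}\prod_{i=1}^k f_i$ edges inside the union $U$ of the batches under $N$, so since $\sum_{v\in R}\deg_U(v)=|U|=2^{r+1}\prod_{i=1}^{k-1}f_i\cdot n$, any fixed $N$ at level $k$ is marked by at most $2^{k+1}n/f_k$ vertices. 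Multiplying and summing over the $h+1=O(1/\epsilon)$ levels gives $\sum_{v\in R}|M_{v,\freq}|\le\sum_{k=0}^h f_k\cdot(2^{k+1}n/f_k)=O(2^h n)$; since $h\le 1/\epsilon$ (forced by $f_i\ge\Delta^\epsilon$ together with $2^{r+1}\prod_{i=1}^{h-1}f_i\le m/n\le\Delta$), this is $O(n)$ for every fixed $\epsilon$, which is the claimed per-$\freq$ bound.

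I do not expect this lemma to be the real obstacle: it is a short mechanical reuse of a prior computation, and the only thing worth verifying inside it is that the new deterministic color allocation does not damage any per-node degree bound (it does not, since those bounds come purely from the shape of $\tree_\freq$) and that the level sum stays geometric. The genuine difficulty is relocated to the separate task --- the deterministic analog of \Cref{verify-inv} --- of showing that \Cref{det-inv} is actually maintained by \Cref{det-alg-low}; there the delicate new ingredient is property (4), which requires the color coordinates $c_k^{N}$ of sibling marked nodes to coincide with the matching produced by the greedy online matcher on the expander $H_{k-1}^{\freq}$ (via \Cref{perfect} and \Cref{verify-perfect}), and requires the elevation rule of $\textsc{DetUpdateMarkSet}$ to be compatible both with that matching structure and with the nesting of color packages.
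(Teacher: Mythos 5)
Your proposal is correct and takes exactly the paper's approach: the paper itself states that the proof ``would be the same'' as the randomized version and omits it, and you have correctly identified which pieces of \Cref{det-inv} (only properties (1) and (2), which are verbatim copies of \Cref{inv}(1)--(2)) and which combinatorial facts about $\tree_\freq$ (unchanged from \Cref{Btree}) are actually used. One small remark worth flagging: the double-counting literally gives $\sum_{v\in R}|M_{v,\freq}|=O(2^h n)$, matching the randomized lemma, whereas the deterministic lemma is stated as $O(n/\epsilon)$; since $h\le 1/\epsilon$, $2^h$ can exceed $1/\epsilon$, so the two bounds do not coincide as written, but both are $O(n)$ for fixed $\epsilon$ and both are absorbed by the $(\log\Delta)^{O(1/\epsilon)}$ factor in the final space bound, so nothing downstream is affected.
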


Now, let us verify that our algorithm preserves \Cref{det-inv}.
\begin{lemma}
	\Cref{det-inv} is preserved by the algorithm throughout its execution.
\end{lemma}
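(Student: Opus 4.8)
The plan is to prove \Cref{det-inv} by induction on the number of processed input batches, following the structure of the proof of \Cref{verify-inv} and additionally tracking the auxiliary tuples $\overrightarrow{c_N}$. Properties~(1) and~(2) go through exactly as in \Cref{verify-inv}: (1) is immediate from the bookkeeping in $\textsc{DetUpdateMarkSet}$, which raises every surviving marked node to an ancestor that is a child of the current root-to-leaf path, together with the post-processing step that inserts the leaf $F$ itself; (2) is unaffected by the new tuples, since elevations only shrink the family of nodes constrained by~(2), and after inserting $F$ one re-verifies the inequality along the portion of the path from (and including) the highest node whose subtree is still free of marked nodes down to $F$, using the minimality of each coordinate produced by $\textsc{DetFreqVec}$ --- verbatim the earlier argument. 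The first clause of property~(3) is likewise inherited: whenever $v$ uses a color of $\clr^{F'}$, the batch $F'$ is marked in post-processing, and the elevation rule keeps $F'$ inside the subtree of some marked node as long as $F$ stays in the same tree component of $\tree_\freq$.

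All the real work is in the suffix clause of property~(3) and in property~(4). The first point to check is that $\overrightarrow{c_N}$ is well defined whenever $\textsc{DetUpdateMarkSet}$ collapses the marked nodes inside a child-subtree $\tree_{\freq}(N)$ into $N$, where $N$ lies on level $k-1$ and its parent $W'$ on level $k$: by the inductive form of property~(4), any two marked nodes inside $\tree_\freq(N)$ have a lowest common ancestor on level at most $k-1$, hence agree on the last $h-k+1$ coordinates $c_k,\dots,c_h$, so $\overrightarrow{c_N}=(*,\dots,*,c^U_k,\dots,c^U_h)$ does not depend on the representative $U$. For property~(4) itself I would argue two things. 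First, an elevation copies a contiguous suffix of an existing tuple unchanged and merely blanks out the earlier coordinates, so the ``marked nodes under a common ancestor share a suffix'' sub-claim is preserved automatically. Second, when the palette-selection routine fixes $\clr_v=(c_1,\dots,c_h)$ for the current leaf $F$, the coordinates $c_{k+1},\dots,c_h$ are copied from an existing marked child of the split node $W$ (hence equal the common suffix forced by~(4)); the coordinate $c_k$ is obtained by \emph{extending}, via \Cref{perfect} and \Cref{verify-perfect}, the greedy online matching on $H_{k-1}^{\freq_v}$ that by induction already records the $(\mathrm{index},c_k)$-pairs of $W$'s current marked children; and the coordinates $c_1,\dots,c_{k-1}$ are produced by \emph{initializing} fresh greedy online matchings on $H_0^{\freq_v},\dots,H_{k-2}^{\freq_v}$. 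The crucial point is that these fresh initializations are legitimate: because $W$ is chosen as the \emph{lowest} ancestor of $F$ whose subtree currently contains a marked node, every strictly lower ancestor of $F$ has an empty marked subtree at that instant, so each $H_i^{\freq_v}$ with $i<k-1$ is untouched and the level-$i$ ancestor of $F$ is legitimately fed as its first online vertex. The suffix clause of property~(3) then follows: every color $\kappa_i=(t_i,\cnt(\clr_v)+r_i,j)$ assigned around $v$ in $F'$ lies in the palette indexed by $(c_1,\dots,c_h)=\overrightarrow{c_{F'}}$, and since elevations never alter a retained coordinate, once $F'$ sits inside a level-$i$ marked node $N$ the tuple $\overrightarrow{c_N}$ still carries the original $c_{i+1},\dots,c_h$, so each such color shares the suffix $(*,\dots,*,c^N_{i+1},\dots,c^N_h)$.

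I expect the main obstacle to be precisely the bookkeeping for property~(4): formulating, and then maintaining under both elevations and palette selections, the statement that the $c_k$-coordinates of the marked children of any node --- listed in the order in which those children were inserted into $M_{v,\freq_v}$ --- form exactly the greedy online-matching output of \Cref{perfect} when fed the corresponding child-indices in that order. One has to show that a child inserted by an elevation carries a coordinate that was itself produced, at that child's earlier ``birth'' batch and one level deeper, by the same online-matching procedure, and that the insertion order at a node is compatible with the feeding order there; this requires a careful case analysis of when elevations occur and where the split node of each batch lies. Once \Cref{det-inv} is established, the remaining bounds for \Cref{det-low} --- the number of colors, the $O(n/\epsilon)$ space per frequency vector, and the fact that \emph{all} edges of $F_{l,r}$ get colored (the expanders supply genuine perfect matchings in place of the high-probability matchings of the randomized algorithm) --- follow as in the proof of \Cref{low}.
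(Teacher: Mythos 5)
Your argument tracks the paper's proof closely: properties (1), (2), and the marking clause of (3) go through as in \Cref{verify-inv}, and your well-definedness observation about the elevation representative $U$ --- that any two marked nodes inside $\tree_\freq(N)$ have a common ancestor at level at most $k-1$ and hence agree on $c_k,\dots,c_h$ by the inductive form of (4) --- is correct and is in fact sharper than what the paper states explicitly. The concern you flag about property (4) is not a gap relative to the reference proof: the paper handles it at the same level of informality, simply asserting that the first recorded pair at a node was produced by initializing the online matching and each later pair by continuing it. To close it, the missing observation is that the $c_k$-coordinate of a child $N$ of a level-$k$ node $W$ is fixed the first time any batch inside $\tree_\freq(N)$ is added to $M_{v,\freq}$, and at that instant the split node of that batch sits at level $\ge k$ (since $\tree_\freq(N)$ was previously free of marked nodes); therefore $c_k$ is always generated at level $k$ by the online matching on $H_{k-1}^\freq$, either as a fresh initialization when $\tree_\freq(W)$ was still unmarked or as a continuation of the matching state recorded in $W$'s then-current marked children, and the subsequent elevations merely carry this fixed value upward to $N$ without modifying it. One small correction to your sketch: you describe the coordinate as being produced ``one level deeper'' at the child's birth batch, but the relevant expander is $H_{k-1}^{\freq}$, associated with the parent's level $k$; the elevation chain transports an already-determined coordinate upward rather than generating it at a lower level.
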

\begin{proof}
	\Cref{det-inv}(1)(2) are the same as \Cref{inv}(1)(2), and we can verify them following the same argument presented in \Cref{verify-inv}, so let us focus on \Cref{det-inv}(3)(4) for the rest.
	
	For \Cref{det-inv}(3), according to the pre-processing step upon the arrival of each input batch, a new marked node on level-$i$ would inherit at least the last $h-i$ coordinates from its marked descendants. So this property should hold.
	
	Let us now verify \Cref{det-inv}(4). Consider any pair of marked nodes $N_1, N_2$ and let $W$ be their lowest ancestor which is on level-$k$. Assume $N_2$ is chronologically later than $N_1$. If $N_2$ is a leaf node, then according to the algorithm, $\overrightarrow{c_{N_2}}$ would share the last $h-k$ coordinates with $W$. Otherwise, for the same technical reason with \Cref{det-inv}, according to the pre-processing step of each iteration, $\overrightarrow{c_{N_2}}$ would inherit at least the last $h-k$ coordinates from its marked descendants, so this property still holds.
	
	For the second half of the statement of \Cref{det-inv}(4), assume $N_1, N_2, \ldots, N_i$ are the marked children of node $W$ on level-$k$, and suppose $N_j$ is the $k_j$-th child of $W$ with $k_1< k_2<\cdots < k_i$. According to the algorithm, it selected the pair $\brac{k_1, c_k^{N_1}}$ by initializing an online perfect matching algorithm from \Cref{perfect} in bipartite graph $H_{k-1}^\freq$, and view $N_1$ as the first online vertex matched using edge $\brac{k_1, c_k^{N_1}}$. For $1<j\leq i$, the online matching algorithm continues to find an edge $\brac{k_j, c_k^{N_j}}\in E(H_{k-1}^\freq) = I_{k-1}^\freq$ which is different from previous matching edges. By \Cref{verify-perfect}, the online matching algorithm could always find a matching edge $\brac{k_j, c_k^{N_j}}$ whenever $j\leq i\leq f_{k-1}$.
\end{proof}

Finally, let us verify the validity of the output.
\begin{lemma}
	In the output stream, the algorithm never prints the same color for two adjacent edges.
\end{lemma}
\begin{proof}
	First, let us rule out color conflicts for edges around the same vertex $u\in L$. On the one hand, according to the algorithm description, within each batch, we only assign tentative colors with distinct color indices around each vertex $u\in L$. On the other hand, for two different batches $F, F'$, if we happen to use the same color palette $\clr$ in $F_{l, r}, F_{l, r}'$ around the same vertex $u$. Consider any edge $(u, v_1)\in F_{l, r}$ and edge $(u, v_2)\in F_{l, r}'$, and assume $(u, v_1)$ receives the $\kappa$-th color and $(u, v_1)$ receives the $\kappa'$-th color in palette $\clr$, $\kappa$ and $\kappa'$ taking the following form:
	$$\kappa = \brac{t, \cnt_1 + r, j}$$
	$$\kappa' = \brac{t', \cnt_2 + r', j'}$$
	To explain the parameters, according the algorithm, $\cnt_1$ and $\cnt_2$ denote the values of $\cnt(\clr_v)$ at the arrivals of batch $F$ and $F'$, respectively; $(u, r), (u, r')\in I^\freq = E(H^\freq)$ are the matching edges in $H^\freq$ we find using \Cref{multi}, and they are the $t$-th and $t'$-th edge around $u$ in $H^\freq$; finally, $(u, v_1)$ and $(u, v_2)$ are the $j$-th and the $j'$-the edge around $u$ in batch $F_{l, r}, F'_{l, r}$. To argue that $\kappa\neq\kappa'$, if $t = t'$, then we must have $r = r'$ by its definition. Since in batches $F, F'$ we use the same palette $\clr_v$ and by \Cref{palette-cnt}, we must have $\cnt_1\neq \cnt_2 \mod q^{b_0+2}$, so $\kappa\neq \kappa'$.
	
	Next, let us rule out color conflicts for edges around the same vertex $v\in R$. For any color palette provided by $\tree_{\freq_v}$ we have used before in some previous batch $F'$, according to \Cref{det-inv}(3), there must be a marked node $N$ whose subtree contains $F'$. According to our coloring procedure, $\clr_v$ is nonempty only when $\clr^{N}$ is disjoint from all the color packages of its marked siblings, for any node $N$ on the root-to-leaf path to $F$. Therefore, $v$ could not reuse any previously assigned colors.
\end{proof}

\subsection{Proof of \Cref{det-high}}
\subsubsection{Basic Notations}
Define a constant parameter $\delta = \epsilon^2/10$. As before, we can assume $\Delta$ is a power of $2$ without loss of generality. If $\Delta < \log^{10/\epsilon\delta}n$, then we will store the entire graph and output a $\Delta$-edge coloring. For the rest, let us assume $\Delta > \log^{10/\epsilon\delta}n$.

Throughout the algorithm, each vertex $u\in L$ maintains a value $\cnt(u)$ that counts the number of input batches $F$ where $\deg_F(u)\in [2^l, 2^{l+1})$; similarly, each vertex $v\in R$ maintains a value $\cnt(v)$ that counts the number of input batches $F$ where $\deg_F(v)\in [2^r, 2^{r+1})$. 

Take a prime number $q\in [\Delta^\delta, 2\Delta^\delta)$. Set parameters:
$$\lambda = \ceil{\log^{2+3/\delta}_2 n\cdot (2 + 3/\delta)^{2+3/\delta}}$$
$$a_1 = \ceil{\log_q (n\Delta / 2^l)}+2, b_1 = \ceil{\log_q (\lambda\Delta / 2^l)}$$
$$a_2 = \ceil{\log_q (n\Delta / 2^r)}+2, b_2 = \ceil{\log_q (\lambda\Delta / 2^r)}$$
Apply \Cref{multi} and construct two bipartite expanders based on multiplicity codes $H_1 = (A_1\cup B_1, I_1), H_2 = (A_2\cup B_2, I_2)$, where $A_i = \field_q^{a_i}, B_i = \field_q^{b_i+2}$. 

\begin{lemma}
    For $i\in \{1,2\}$, we have $b_i< 2/\delta$, and $\lambda \geq a_i^{b_i+2}\cdot (b_i+2)^{b_i+2}$.
\end{lemma}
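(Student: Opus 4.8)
Both claims are routine parameter estimates, and the plan is to bound $a_i$ and $b_i$ crudely and then chain monotone inequalities; the one quantitative input is the standing hypothesis $\Delta > \log^{10/\epsilon\delta}n$. First I would extract its consequences: it gives $\log_2 n < \Delta^{\epsilon\delta/10}$, hence $q \ge \Delta^{\delta} > \log_2^{10/\epsilon} n$ and $\log_2 q > (10/\epsilon)\log_2\log_2 n$; and, exactly as in the proof of \Cref{det-low}, it gives $\lambda < \Delta^{\epsilon}$ once $n$ exceeds a constant depending on $\epsilon$ (starting from $\lambda \le 1 + \log_2^{2+3/\delta}n\cdot(2+3/\delta)^{2+3/\delta}$, the $\log$-power is only a tiny power of $\Delta$ and the leftover constant $(2+3/\delta)^{2+3/\delta}$ is absorbed by another tiny power of $\Delta$).

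For $b_i < 2/\delta$: since $2^l,2^r \ge 1$ we have $\lambda\Delta/2^{l\text{ or }r} \le \lambda\Delta < \Delta^{1+\epsilon}$, and as $q \ge \Delta^{\delta}$ (so $\log_q\Delta \le 1/\delta$) and $\delta \le 1-\epsilon$ this gives $\log_q(\lambda\Delta/2^{l\text{ or }r}) < (1+\epsilon)/\delta \le 2/\delta - 1$, so $b_i = \lceil\log_q(\lambda\Delta/2^{l\text{ or }r})\rceil < 2/\delta$. For the second inequality I would bound the three factors separately. Using $\Delta \le n$ we get $n\Delta/2^{l\text{ or }r} \le n^2$, so $a_i = \lceil\log_q(n\Delta/2^{l\text{ or }r})\rceil + 2 \le 2\log_q n + 3 \le 3\log_q n$ for $n$ large; from the previous bound $b_i+2 < 2/\delta+2 \le 2+3/\delta$, so monotonicity of $x\mapsto x^x$ on $[1,\infty)$ gives $(b_i+2)^{b_i+2} \le (2+3/\delta)^{2+3/\delta}$, and
$$a_i^{b_i+2} \le 3^{b_i+2}(\log_q n)^{b_i+2} = \left(\tfrac{3}{\log_2 q}\right)^{b_i+2}(\log_2 n)^{b_i+2} \le (\log_2 n)^{b_i+2} \le (\log_2 n)^{2+3/\delta},$$
where I used $\log_2 q \ge 3$ (as $q > \log_2^{10/\epsilon}n \ge 8$ for $n$ large) and $\log_2 n \ge 1$. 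Multiplying the two estimates and using $x \le \lceil x\rceil$ yields $a_i^{b_i+2}(b_i+2)^{b_i+2} \le \log_2^{2+3/\delta}n\cdot(2+3/\delta)^{2+3/\delta} \le \lambda$.

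There is no genuine obstacle here; the only step needing a sliver of care is the cancellation $3^{b_i+2}(\log_q n)^{b_i+2}\le(\log_2 n)^{b_i+2}$, where the constant factor coming from $a_i \le 3\log_q n$ (and, under a looser treatment of $a_i$, the $(1/\delta)^{O(1/\delta)}$ clutter) must be swallowed by the slack $\log_2 q \gtrsim (1/\epsilon)\log_2\log_2 n$ between $\log_q(\cdot)$ and $\log_2(\cdot)$. That slack is precisely what the hypothesis $\Delta > \log^{10/\epsilon\delta}n$ supplies, and everything else reduces to monotonicity of logarithms, monotonicity of $x\mapsto x^x$ on $[1,\infty)$, and $\lceil x\rceil \le x+1$.
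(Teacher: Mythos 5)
Your proof is correct and follows essentially the same approach as the paper's: bound $\lambda$ by a small power of $\Delta$ via the standing hypothesis, deduce $b_i < 2/\delta$ from $q \ge \Delta^\delta$, then bound $a_i \lesssim \log_q n$ and chain the estimates. The only cosmetic difference is how the spare constant factor in $a_i^{b_i+2}$ is absorbed: you use $\log_2 q \ge 3$ to kill the $3^{b_i+2}$ directly, whereas the paper routes it through the inequality $(a_i/2)^{2+3/\delta} > a_i^{2+2/\delta}$ for large $n$; both are valid.
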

\begin{proof}
    By the assumption that $\Delta > \log^{10/\epsilon\delta}n$ and $\delta = \epsilon^2/10$, the following holds when $n$ is super-constant:
    $$\lambda = \ceil{\log_2^{2+3/\delta}n\cdot (2 + 3/\delta)^{2+3/\delta}} <\log^{\frac{10}{3\delta}}n< \Delta^{\epsilon/3}$$
    Hence, $b_i \leq \ceil{\log_q \Delta^{1+\epsilon/3}} < 2/\delta$ as $q > \Delta^\delta$. Furthermore, as $a_i \leq \log_q (n\Delta) + 3 < 2\log_q n$, we have:
    $$\begin{aligned}
        \lambda &\geq \log_2^{2+3/\delta}n\cdot (2+3/\delta)^{2+3/\delta}\\
        &> (a_i/2)^{2+3/\delta}\cdot (b_i+2)^{b_i+2}\\
        &> a_i^{2+2/\delta}\cdot (b_i+2)^{b_i+2} \geq a_i^{b_i+2}\cdot (b_i+2)^{b_i+2}
    \end{aligned}$$
    The second last inequality holds as $(a_i/2)^{2+3/\delta} > a_i^{2+2/\delta}$ for super-constant choices of $n$.
\end{proof}

\begin{lemma}
If we have $O\brac{\Delta^{1+\epsilon} + \Delta^{2+\epsilon} / 2^{l+r}}$ colors, then we can encode each color as a tuple from $[q^{b_1+2}]\times [q^{b_2+2}]\times [\ceil{2^{l+r+2} / \Delta}]$.
\end{lemma}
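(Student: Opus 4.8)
The plan is to reduce the lemma to a single cardinality estimate. I would simply take the color set to be the product set $[q^{b_1+2}]\times[q^{b_2+2}]\times[\ceil{2^{l+r+2}/\Delta}]$ itself, so that the asserted ``encoding'' is just the identity map; then everything reduces to showing that this product set has size $O\brac{\Delta^{1+\epsilon}+\Delta^{2+\epsilon}/2^{l+r}}$, which is exactly the color budget we are allowed.

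First I would bound the two polynomial-code coordinates. From the definition $b_1=\ceil{\log_q(\lambda\Delta/2^l)}$ we get $q^{b_1}<q\cdot\lambda\Delta/2^l$, and hence $q^{b_1+2}<q^3\lambda\Delta/2^l$; symmetrically $q^{b_2+2}<q^3\lambda\Delta/2^r$. The last coordinate obeys $\ceil{2^{l+r+2}/\Delta}\le 2^{l+r+2}/\Delta+1$. Multiplying the three bounds, the size of the product set is at most
$$q^6\lambda^2\cdot\frac{\Delta^2}{2^{l+r}}\cdot\brac{\frac{2^{l+r+2}}{\Delta}+1}=q^6\lambda^2\brac{4\Delta+\frac{\Delta^2}{2^{l+r}}}.$$

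It then remains to absorb the prefactor $q^6\lambda^2$ into $\Delta^\epsilon$. Using $q<2\Delta^\delta$, together with the bound $\lambda<\Delta^{\epsilon/3}$ already established for \Cref{det-high} (valid in the regime $\Delta>\log^{10/\epsilon\delta}n$ we are in), and $\delta=\epsilon^2/10$, one gets $q^6\lambda^2<64\,\Delta^{6\delta+2\epsilon/3}=O(\Delta^\epsilon)$, where the last step uses $6\delta+2\epsilon/3\le\epsilon$ for $\epsilon$ a sufficiently small constant (which is WLOG, since one may always scale $\epsilon$ down, as done elsewhere in the paper). Substituting this back gives the desired bound $O\brac{\Delta^{1+\epsilon}+\Delta^{2+\epsilon}/2^{l+r}}$, completing the argument.

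I do not expect any genuine obstacle here; the only thing to be careful about is faithfully tracking the several ceilings in the definitions of $b_1,b_2$ and the $q,\lambda$ overhead, and confirming that this overhead is truly subsumed by the slack $\Delta^\epsilon$ that the statement permits.
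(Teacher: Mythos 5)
Your proof is correct and follows essentially the same route as the paper: bound each coordinate of the product set via the definitions of $b_1,b_2$, multiply, and absorb the $q^{O(1)}\lambda^2$ prefactor into $\Delta^\epsilon$ using $q<2\Delta^\delta$, $\lambda<\Delta^{\epsilon/3}$, and $\delta=\epsilon^2/10$. In fact you are slightly more careful than the paper's displayed inequality, which drops a factor of $q^2$ coming from the ceilings in $b_1,b_2$ (it writes $q^{b_1+b_2+4}\le q^4\lambda^2\Delta^2/2^{l+r}$ where $q^6$ is what the ceilings actually give); this is harmless since the extra $q^2<4\Delta^{2\delta}$ is still subsumed by the $\Delta^\epsilon$ slack, exactly as you observe.
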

\begin{proof}
    As before, we can show $\lambda < \Delta^{\epsilon/3}$. By the setup of the parameters, the total number of such tuples is bounded by:
    $$\begin{aligned}
        q^{b_1+b_2+4}\cdot \brac{1 + \frac{2^{l+r+2}}{\Delta}}&\leq q^4\cdot \lambda^2\cdot \frac{\Delta^2}{2^{l+r}}\cdot \brac{1 + \frac{2^{l+r+2}}{\Delta}}\\
        &\leq \frac{\lambda^2\cdot\Delta^{2+4\delta}}{2^{l+r}} + 4\Delta^{1+4\delta}\cdot \lambda^2\\
        &\leq \frac{\Delta^{2 + \frac{2\epsilon}{3} + \frac{2\epsilon^2}{5}}}{2^{l+r}} + 4\Delta^{1 +  \frac{2\epsilon}{3} + \frac{2\epsilon^2}{5}}\\
        &= O(\Delta^{1+\epsilon} + \Delta^{2+\epsilon} / 2^{l+r})
    \end{aligned}$$
    which concludes the calculation.
\end{proof}

At any moment in time during the algorithm, for each $u\in L$, since $0\leq \cnt(u) < \Delta / 2^l$ and $|L| < n$, we can interpret $\cnt(u)$ as a polynomial $g_u(X)\in \field_q^{<\ceil{\log_q \Delta / 2^l}}[X]$, and vertex $u$ as a polynomial $f_u(X)\in \field_q^{<\ceil{\log_q n}}[X]$. Define polynomial $h_u(X) = g_u(X) + X^{\ceil{\log_q \Delta / 2^l}}\cdot f_u(X)$.

Symmetrically, for each $v\in R$, we can interpret $\cnt(v)$ as a polynomial $g_v(X)\in \field_q^{\ceil{\log_q \Delta / 2^r}}[X]$, and vertex $v$ as a polynomial $f_v(X)\in \field_q^{<\ceil{\log_q n}}[X]$. Define polynomial $h_v(X) = g_v(X) + X^{\ceil{\log_q \Delta / 2^r}}\cdot f_v(X)$. By definition, $f_v$ does not change over time, and $h_v, g_v$ are dependent on how many batches we have read so far.

\subsubsection{Algorithm Description}
Upon the arrival of an input batch $F$, let us discuss how to assign colors to edges in $F_{l, r}$. We are going to associate a pair $(s_e, t_e)\in [q^{b_1+2}]\times [q^{b_2+2}]$ for every edge $e\in F_{l, r}$ in the following way. Go over every vertex $u\in L$ and list all of its edges $(u, v_1), (u, v_2), \ldots, (u, v_k)$ in $F_{l, r}$. Since $G$ is a simple graph, all the neighbors $v_1, v_2, \ldots, v_k$ are different. 

Divide $\{v_1, v_2, \ldots, v_k\}$ into $\ceil{2^{l+r+2} / \Delta}$ groups, each of size at most $\Delta / 2^r$. For each of these group, say that it contains vertices $z_1, z_2, \ldots, z_K, K\leq \Delta / 2^r$. Interpret each polynomial $h_{z_i}$ as a vertex in $A_2$. According to \Cref{multi}, $H_2$ is a $(D, K)$-expander where:
$$\begin{aligned}
	D = q - \frac{a_2(b_2+2)}{2}\cdot (qK)^{1 / (b_2+2)}
	\geq q - 0.5\cdot \lambda^{1 / (b_2+2)}\cdot q\cdot \lambda^{1 / (b_2+2)} = q/2 >1
\end{aligned}$$
Therefore, there exists a bipartite matching $\{(h_{z_i}, t_i)\mid 1\leq i\leq K\}$ of $H_2$. Since $t_i\in \field_q^{b_2+2}$, we can interpret $t_i$ as an integer $t_{(u, z_i)}\in [q^{b_2+2}]$.

Symmetrically, for each edge $e = (u, v)$ we can also define the value $s_e$ for every edge $e\in F_{l, r}$. Next, for any integer pair $(s, t)\in [q^{b_1+2}]\times [q^{b_2+2}]$, define $F_{l, r}^{(s, t)} = \{e\in F_{l, r}\mid (s_e, t_e) = (s, t)\}$.
\begin{lemma}
	For any $(s,  t)$, the maximum degree of edge set $F_{l, r}^{(s, t)}$ is at most $\ceil{2^{l+r+2} / \Delta}$.
\end{lemma}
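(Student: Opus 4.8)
The plan is to bound the degree of each vertex in $F_{l,r}^{(s,t)}$ by controlling how many edges incident on a fixed vertex can share the same pair $(s,t)$. I will argue the bound separately for a vertex $u \in L$ and for a vertex $v \in R$, since the construction of $s_e$ and $t_e$ is symmetric between the two sides.

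First consider a vertex $u \in L$ and the set of its edges $(u,v_1),\ldots,(u,v_k)$ in $F_{l,r}$. By construction, $u$ partitions its neighbor set into $\ceil{2^{l+r+2}/\Delta}$ groups, and within each group the values $t_{(u,z_i)}$ are the right endpoints of a \emph{matching} in $H_2$; in particular they are pairwise distinct. Hence within a single group at most one edge receives any fixed value $t$, so across all groups at most $\ceil{2^{l+r+2}/\Delta}$ edges incident on $u$ can share the same value $t_e = t$. Restricting further to those that also share $s_e = s$ only decreases this count, so $\deg_{F_{l,r}^{(s,t)}}(u) \le \ceil{2^{l+r+2}/\Delta}$.

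Next consider a vertex $v \in R$. By the symmetric half of the construction (the one defining $s_e$ via the bipartite expander $H_1$), vertex $v$ partitions its neighbors in $F_{l,r}$ into $\ceil{2^{l+r+2}/\Delta}$ groups, and within each group the values $s_{(u_i,v)}$ form the right endpoints of a matching in $H_1$, hence are pairwise distinct. The same counting as above shows that at most $\ceil{2^{l+r+2}/\Delta}$ edges incident on $v$ can share a fixed value $s_e = s$, and therefore $\deg_{F_{l,r}^{(s,t)}}(v) \le \ceil{2^{l+r+2}/\Delta}$ as well. Combining the two cases, every vertex has degree at most $\ceil{2^{l+r+2}/\Delta}$ in $F_{l,r}^{(s,t)}$, which is exactly the claimed bound.

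The only subtle point — and the one I would state carefully — is the symmetry claim: the excerpt spells out the definition of $t_e$ from $u$'s side via $H_2$ and says "symmetrically, for each edge $e=(u,v)$ we can also define the value $s_e$", so I would first make explicit that this symmetric definition means $s_e$ is obtained by having each $v \in R$ group its neighbors into $\ceil{2^{l+r+2}/\Delta}$ classes of size at most $\Delta/2^l$ and solving a bipartite matching in $H_1$ (whose expansion parameters $a_1,b_1$ were chosen exactly so that $H_1$ is a $(D,K)$-expander with $D \ge q/2 > 1$ for $K \le \Delta/2^l$). Once that is pinned down, the degree bound on each side is just the "distinct endpoints of a matching, times number of groups" argument above, and there is no further obstacle.
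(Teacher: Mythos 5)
Your proof is correct and follows essentially the same argument as the paper: within each of the $\ceil{2^{l+r+2}/\Delta}$ groups around a fixed vertex, the $t$-values (respectively $s$-values) are the right endpoints of a bipartite matching in $H_2$ (respectively $H_1$) and hence pairwise distinct, so at most one edge per group can have a fixed value $t$ (respectively $s$). Your added remark that further restricting to $s_e = s$ only decreases the count, and your explicit unpacking of the ``symmetric'' side for $v\in R$, are minor elaborations of what the paper leaves implicit.
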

\begin{proof}
	Consider any vertex $u\in L$. According to the definition of $t_e$'s, in each group of vertices $\{z_1, z_2, \ldots, z_K\}$, there is at most one $z_i$ such that $t_{(u, z_i)} = t$. Since all neighbors of $u$ are grouped into at most $\ceil{2^{l+r+2} / \Delta}$ groups, the degree of $u$ in $F_{l, r}^{(s, t)}$ is bounded by $\ceil{2^{l+r+2} / \Delta}$. Symmetrically, we can show the same upper bound on degrees on $R$.
\end{proof}

By the above statement, we can use the set of colors $\{(s, t, c)\mid c\in [\ceil{2^{l+r+2} / \Delta}]\}$ to color all edges in $F_{l, r}^{(s, t)}$. After the coloring is completed for all pairs $(s, t)$, we print all the colors of $F_{l, r}$ in the output stream. The whole algorithm is summarized in \Cref{det-alg-high}.

\begin{algorithm}
    \caption{$\textsc{DetColorHighDeg}(F)$}\label{det-alg-high}
    compute polynomials $h_u(X), \forall u\in V$ according to $\cnt(u)$\;
    define $E_{x, y} = \{(u, v)\in F_{l, r}\mid (x_u, y_u) = (x, y)\}, \forall (x, y)$\;
    \For{$u\in L$ such that $\deg_F(u)\in [2^l, 2^{l+1})$}{
        let $v_1, v_2, \ldots, v_k$ be all of $u$'s neighbor in $F_{l, r}$\;\label{distinct-neighbors}
        compute integers $t_{(u, v_i)}, 1\leq i\leq k$ using bipartite matchings\;\label{perfect-matching-high}
    }
    compute integers $s_{e}, \forall e\in F_{l, r}$ by a symmetric manner\;
    compute $F_{l, r}^{(s, t)} = \{e\in F_{l, r}\mid (s_e, t_e) = (s, t)\}$ for all $(s, t)\in [q^{b_1+2}]\times [q^{b_2+2}]$ and color each $F_{l, r}^{(s, t)}$ using $\ceil{2^{l+r+2}/\Delta}$ colors\;
    increment counters $\cnt(*)$\;
\end{algorithm}

\subsubsection{Proof of Correctness}
It suffices to show that the algorithm always produces a valid edge coloring of $G$.
\begin{lemma}
	In the output stream, the algorithm never prints the same color for two adjacent edges.
\end{lemma}
\begin{proof}
	It is clear that we never assign the same color twice around the same vertex in a single batch $F_{l, r}$. So it is enough to verify that there are no color conflicts across different input batches.
	
	Consider any $u\in L$ and two different input batches $F, F'$ such that $\deg_F(u), \deg_{F'}\in [2^l, 2^{l+1})$, and any edge $(u, v)\in F_{l, r}$ and edge $(u, v')\in F_{l, r}'$. Assume $(u, v), (u, v')$ have colors $(s, t, k), (s', t', k')\in [q^{b_1}]\times [q^{b_2}]\times [\ceil{2^{l+r+2} / \Delta}]$ in the output stream, respectively.
	
	Let $h(X), h'(X)$ refer to the polynomial of $h_u(X)$ at the arrival of $F$ and $F'$, respectively. Similarly, let us define $g(X), g'(X)$ be the polynomial of $g_u(X)$ at the arrival of $F, F'$. Therefore, $h(X) = g(X) + X^{\ceil{\log_q\Delta / 2^l}}\cdot f_u(X)$, and $h'(X) = g'(X) + X^{\ceil{\log_q\Delta / 2^l}}\cdot f_u(X)$.
	
	Assume $s = (x, x_0, x_1, \ldots, x_{b_1})$, and $s' = (x', x'_0, x'_1, \ldots, x'_{b_1})$. By the choice of the value $s$, we know that $\Gamma(h, x) = s, \Gamma(h', x') = s'$. If $s = s'$, then $x = x'$, and by linearity of derivatives, we have $g^{(i)}(x) = g'^{(i)}(x), \forall 0\leq i\leq b_1$. Since both $g, g'$ have degree less than $\ceil{\log_q \Delta/2^l} \leq b_1$, it must be the case that $g \equiv g'$. However, this is impossible because the values of the counter $\cnt(u)$ must be different for batch $F, F'$.
	
	Symmetrically, we can show that there are no color conflicts around vertices in $R$. This concludes the proof.
\end{proof}

\subsubsection{Extension to Multi-graphs}
So far we have only considered simple graphs. Let us briefly discuss how to extend our algorithm to multi-graphs. Basically, in multi-graphs it could be the case that $\Delta \gg n$, and so storing the forest structures and bipartite expanders would be too costly. Technically speaking, There are mainly three places where we needed the input graph to be simple.
\begin{enumerate}[(1)]
    \item The size of the forest structure is at least $\Delta$; 
    \item Finding perfect matchings on \Cref{perfect-matching-low} in \Cref{det-alg-low} and on \Cref{perfect-matching-high} in \Cref{det-alg-high} requires explicitly storing some bipartite expanders of size larger than $\Delta$; 
    \item On \Cref{distinct-neighbors} of \Cref{det-alg-high} we need all neighbors of $u$ to be distinct. 
\end{enumerate}
Let us discuss how to bypass these three issues.

For issue (1), the main observation is that we can store the entire forest structures implicitly, since the color package allocations are defined by multiplicity codes in a closed-form; note that we could do this with the randomized algorithm, since we had to remember all the randomness in order to store the color package allocations.

For issue (2), instead of performing a standard perfect matching algorithm on the bipartite expander, we could apply the greedy matching algorithm on the bipartite expander from \Cref{perfect}, which only takes space proportional to the matching size.

For issue (3), note that we only require that the subgraph $(V, F)$ is simple, not necessarily the whole input graph $G$. Therefore, it suffices to use a reduction from a general input stream to input streams of batches which are simple subgraphs, which is presented below.

\paragraph*{Multi- to Simple Reduction.} For the above technical reason, we need the assumption that every input batch is a simple subgraph. We argue that this extra condition does not make the problem simpler.
\begin{lemma}
	Given an algorithm $\mathcal{A}$ for coloring graphs using $f(\Delta)$ colors and $g(n, \Delta)$ space under the condition that every input batch is a simple subgraph of $G$, there is an algorithm $\mathcal{B}$ using $O(f(\Delta)\log\Delta)$ colors and $O(g(n, \Delta)\log\Delta)$ space for coloring any graph streams. Furthermore, this reduction is deterministic.
\end{lemma}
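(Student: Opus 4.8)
The plan is to run $T:=\lceil\log_2\Delta\rceil+1$ independent copies $\mathcal{A}_0,\dots,\mathcal{A}_{T-1}$ of the assumed algorithm $\mathcal{A}$, each with its own disjoint palette of $f(\Delta)$ colors and its own $g(n,\Delta)$-bit working memory; this already accounts for the claimed $O(f(\Delta)\log\Delta)$ colors and $O(g(n,\Delta)\log\Delta)$ space, provided $\mathcal{B}$'s own buffering is only $O(n)$ per instance. Edges are still read in batches of $n$ edges. Inside a batch $F$ we maintain, for each vertex pair that has appeared so far in $F$, a counter of how many copies of it have been seen (this costs only $O(n)$ space since a batch spans at most $n$ pairs), and we route the $k$-th copy of a pair to instance $\mathcal{A}_{\floor{\log_2 k}}$. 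Thus the slice $F^{(j)}$ of $F$ sent to level $j$ contains, for each pair, only copies whose in-batch rank lies in $[2^j,2^{j+1})$, so $F^{(j)}$ is a sub-multigraph of multiplicity at most $2^j\le\Delta$ and of maximum degree at most $\Delta$; in particular every instance's overall input graph has maximum degree at most $\Delta$. The routing uses no randomness, so $\mathcal{B}$ is deterministic whenever $\mathcal{A}$ is.

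The second ingredient is to turn, for each level $j$, the arriving slices $F^{(j)}$ into a legitimate input stream for $\mathcal{A}_j$: a sequence of \emph{simple} subgraphs each with at most $n$ edges. Each $F^{(j)}$ is split into its multiplicity layers (the $t$-th layer taking one copy of every pair that has at least $t$ copies in $F^{(j)}$), each of which is simple, and these layers, across all batches, are packed greedily into size-$n$ buffers that are flushed to $\mathcal{A}_j$ whenever a buffer fills up or would otherwise receive a second copy of some pair. Correctness of the resulting coloring is then immediate: every batch ever handed to an instance is simple and has at most $n$ edges by construction; two adjacent edges routed to different levels receive colors from disjoint palettes; and two adjacent edges routed to the same level $j$ both lie in the multigraph processed by $\mathcal{A}_j$, which $\mathcal{A}_j$ colors properly by hypothesis. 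The output-stream bookkeeping is trivial, since the final color of an edge is simply whatever its level's instance assigns.

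The step I expect to be the main obstacle is controlling the number of batches handed to each $\mathcal{A}_j$. The black-box guarantees of $\mathcal{A}$ (for instance the color count of the forest-based algorithm of \Cref{det-low}, which grows with the number of batches) deteriorate if an instance sees substantially more than $O(m/n)$ batches, whereas a naive layer-by-layer feed could emit up to $2^j$ tiny batches per original batch. The resolution should come from the structural fact that high-multiplicity pairs are rare: a pair with at least $2^j$ copies in a batch consumes at least $2^j$ units of an endpoint's total degree budget of at most $\Delta$, so each pair can be ``active at level $j$'' in only $O(\Delta/2^j)$ of the input batches, and this bound has to be fed back through the buffer-packing step to limit the number of premature flushes at each level. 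Making this amortization close, and verifying that layers drawn from different original batches can share buffers without violating simplicity, is the delicate part; everything else (palette disjointness, the simple-batch invariant, and the space accounting) is routine bookkeeping.
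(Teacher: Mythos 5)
Your construction takes a genuinely different route from the paper's, and the gap you flag at the end is real --- in fact it is more severe than you suggest, and I do not see how to close it along the lines you propose.

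The trouble sits inside the slice $F^{(j)}$. When you split $F^{(j)}$ into its multiplicity layers (layer $t$ being ``one copy of every pair with at least $t$ copies''), those layers are nested as sets of vertex pairs: layer $t+1$'s pairs are a subset of layer $t$'s pairs. Hence any two consecutive layers from the same batch collide on every pair of the later one, and the moment the first layer of $F^{(j)}$ enters a buffer, every subsequent layer of that same batch forces a flush. You can therefore flush up to (max multiplicity of $F^{(j)}$) times for a single incoming batch $F$ at a single level $j$, and each flushed buffer can be tiny. Two problems follow. First, the stream handed to $\mathcal{A}_j$ is then not a sequence of size-$n$ blocks each of which is simple, which is the interface the hypothesis actually grants: if $\mathcal{A}_j$ re-segments its input into $n$-edge blocks, two of your adjacent small buffers that share a pair land in the same block and break simplicity. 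Second, even under a more permissive interface, a single pair of multiplicity $\Delta$ already forces $\Theta(2^j)$ singleton flushes at level $j$, for $\Theta(m)$ batches in total --- far beyond the $O(\Delta)$ batches that the concrete algorithms this lemma feeds are built around (the forest structure has one leaf per batch). Your amortization via ``active at level $j$'' bounds the number of \emph{edges} at level $j$, not the number of flushes, so it does not repair this; keeping a separate buffer per layer would cost $\Omega(n\cdot 2^j)$ memory.

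The paper avoids the issue entirely by recursing on $\Delta$ rather than running parallel instances at the same $\Delta$. It keeps a single in-memory simple subgraph $F$. When an edge $e$ arrives, if $F\cup\{e\}$ is simple, add $e$ to $F$, and once $|F|=n$, hand $F$ to $\mathcal{A}$ and start a fresh $F$. If instead $e$ duplicates a pair already present in $F$, remove that in-memory twin $e'$ from $F$ and forward only $e$ to a recursive instance $\mathcal{B}'$ built for maximum degree $\lfloor\Delta/2\rfloor$; when $\mathcal{B}'$ later emits color $k$ for $e$, print $2k-1$ for $e$ and $2k$ for $e'$ (they are parallel, so this is safe). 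Every edge forwarded to $\mathcal{B}'$ has a dedicated removed partner sharing both its endpoints, so the degree of the recursive stream is at most $\lfloor\Delta/2\rfloor$, giving recursion depth $O(\log\Delta)$, while every batch handed to $\mathcal{A}$ at every level has size exactly $n$ and is simple by construction. The recurrences $c(\Delta)=f(\Delta)+2\,c(\lfloor\Delta/2\rfloor)$ and $s(\Delta)=g(n,\Delta)+s(\lfloor\Delta/2\rfloor)$ then give the claimed $O(f(\Delta)\log\Delta)$ colors (using $f(\Delta)=\Omega(\Delta)$) and $O(g(n,\Delta)\log\Delta)$ space. The ingredient your plan is missing is precisely this pairing of a duplicate with its in-memory twin, which lets one of the two drop to a cheaper degree regime while keeping every batch full-size and simple.
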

\begin{proof}
	We are proving the statement by an induction on $\Delta$. The basis is trivial for constant $\Delta$. Assume this holds for smaller values of $\Delta$. For the reduction, the idea is to extract input batches of simple subgraphs and feed them to $\mathcal{A}$, and take the rest of the input stream to an algorithm $\mathcal{B}'$ that handles graphs with maximum degree $\floor{\Delta/2}$.
	
	More specifically, in each round we prepare an edge set $F$ of a simple subgraph of $G$, and when $|F|$ reaches $n$, we take it to algorithm $\mathcal{A}$, and start the next round. To grow the set $F$, we read the next edge $e$ from the input stream. If $F\cup \{e\}$ is a simple subgraph, then add $e$ to $F$. Otherwise, take an edge $e'\in F$ which is parallel to $e$, and remove $e'$ from $F$. At the same time, send the edge $e$ to algorithm $\mathcal{B}'$ for coloring any graph stream of degree $\floor{\Delta/2}$. When $\mathcal{B}'$ is about to print a color $k$ for edge $e$, we assign color $2k-1$ to edge $e$ and color $2k$ to edge $e'$ in the real output stream.
	
	In total, we are using $f(\Delta) + 2\cdot O(f(\floor{\Delta/2}))$ colors and $g(n, \Delta) + O(g(n, \floor{\Delta/2}))$ space in total. Solving the recurrence concludes the proof.
\end{proof}

\section*{Acknowledgment}
Shiri Chehcik is funded by the European Research Council (ERC) under the European Union’s Horizon 2020 research and innovation programme (grant agreement No 803118 UncertainENV). Tianyi Zhang is financially supported by the starting grant ``A New Paradigm for Flow and Cut Algorithms'' (no. TMSGI2\_218022) of the Swiss National Science Foundation.

\vspace{5mm}
\bibliographystyle{alpha}
\bibliography{references}


\end{document}